\numberwithin{equation}{section}
\DeclareMathOperator{\tr}{tr}
\newcommand{\id}{\mathord{{\mathrm 1}\kern-0.27em{\mathrm I}}\kern0.35em}
\DeclareMathOperator{\diag}{diag}
\DeclareMathOperator{\dsl}{\not{\!\partial}\!}
\newcommand{\del}[1]{\partial_{#1}}
\newcommand{\uset}[2]{\underset{#1}{#2}{}}
\newcommand{\AND}{{\quad\text{and}\quad}}
\newcommand{\Half}{\ensuremath{\textstyle\frac{1}{2}}}
\newcommand{\norm}[1]{\|#1\|}
\newcommand{\ipe}[2]{ ( #1 | #2 )}
\newcommand{\ip}[2]{ \langle #1 | #2 \rangle}
\newcommand{\Ac}{\mathcal{A}{}}
\newcommand{\Bc}{\mathcal{B}{}}
\newcommand{\Cc}{\mathcal{C}{}}
\newcommand{\Ec}{\mathcal{E}{}}
\newcommand{\Fc}{\mathcal{F}{}}
\newcommand{\Hc}{\mathcal{H}{}}
\newcommand{\Lc}{\mathcal{L}{}}
\newcommand{\Nc}{\mathcal{N}{}}
\newcommand{\Qc}{\mathcal{Q}{}}
\newcommand{\Rc}{\mathcal{R}{}}
\newcommand{\Tc}{\mathcal{T}{}}
\newcommand{\Uc}{\mathcal{U}{}}
\newcommand{\Vc}{\mathcal{V}{}}
\newcommand{\Wc}{\mathcal{W}{}}
\newcommand{\Xc}{\mathcal{X}{}}
\newcommand{\gt}{\tilde{g}{}}
\newcommand{\Gt}{\tilde{G}{}}
\newcommand{\Jt}{\tilde{J}{}}
\newcommand{\Mt}{\tilde{M}{}}
\newcommand{\Tt}{\tilde{T}{}}
\newcommand{\ut}{\tilde{u}{}}
\newcommand{\Ut}{\tilde{U}{}}
\newcommand{\Yt}{\tilde{Y}{}}
\newcommand{\sigmat}{\tilde{\sigma}{}}
\newcommand{\phit}{\tilde{\phi}{}}
\newcommand{\gammat}{\tilde{\gamma}{}}
\newcommand{\Gammat}{\tilde{\Gamma}{}}
\newcommand{\zetat}{\tilde{\zeta}{}}
\newcommand{\gb}{\bar{g}{}}
\newcommand{\Gb}{\bar{G}{}}
\newcommand{\nb}{\bar{n}{}}
\newcommand{\Rb}{\bar{R}{}}
\newcommand{\Tb}{\bar{T}{}}
\newcommand{\Yb}{\bar{Y}{}}
\newcommand{\Gammab}{\bar{\Gamma}{}}
\newcommand{\nablab}{\bar{\nabla}{}}
\newcommand{\zetab}{\bar{\zeta}{}}
\newcommand{\phih}{\hat{\phi}{}}
\newcommand{\lambdah}{\hat{\lambda}{}}
\newcommand{\Ebb}{\mathbb{E}{}}
\newcommand{\Mbb}[1]{{\mathbb{M}_{#1\times #1}}}
\newcommand{\Nbb}{\mathbb{N}{}}
\newcommand{\Rbb}{\mathbb{R}{}}
\newcommand{\Sbb}[1]{{\mathbb{S}_{#1}}}
\newcommand{\Tbb}{\mathbb{T}{}}
\newcommand{\Zbb}{\mathbb{Z}{}}
\newcommand{\Jch}{\check{J}{}}
\newcommand{\xv}{\mathbf{x}{}}
\newcommand{\Xv}{\mathbf{X}{}}
\newcommand{\phiv}{\boldsymbol{\phi}{}}
\newcommand{\phivt}{\tilde{\boldsymbol{\phi}}{}}
\newcommand{\ep}{\epsilon}
\newcommand{\Jtch}{\check{\tilde{J}}}
\newcommand{\Uct}{\widetilde{\mathcal{U}}{}}
\newcommand{\Vct}{\widetilde{\mathcal{V}}{}}
\newcommand{\Wct}{\widetilde{\mathcal{W}}{}}
\begin{document}

\title[Dynamical compact elastic bodies in general relativity]{Dynamical compact elastic bodies in general relativity}

\author[L. Andersson]{Lars Andersson}
\address{Albert Einstein Institute, Am M\"uhlenberg 1, D-14476 Potsdam,
  Germany}
\email{lars.andersson@aei.mpg.de}

\author[T.A. Oliynyk]{Todd A. Oliynyk}
\address{School of Mathematical Sciences\\
Monash University, VIC 3800\\
Australia}
\email{todd.oliynyk@sci.monash.edu.au}

\author[B. Schmidt]{Bernd G. Schmidt}
\address{Albert Einstein Institute, Am M\"uhlenberg 1, D-14476 Potsdam,
  Germany}
\email{bernd@aei.mpg.de}


\begin{abstract}
We prove the local existence of solutions to the Einstein-Elastic equations that represent
self-gravitating, relativistic elastic bodies with compact support.
\end{abstract}

\maketitle

\sect{intro}{introduction}

It is now almost 100 years since Einstein formulated the field equations of his theory of gravity, General Relativity. A little more than a month after the publication in November 1915, Schwarzschild found the spherically symmetric vacuum solution. It would however take more than half a century before its role as a black hole was recognized.

Since this first non-trivial existence result, much effort has been dedicated to establishing the existence of solutions to the Einstein field equations. As in any theory, the purpose of existence theorems are to demonstrate that the equations of the theory admit solutions capable of describing the physical situations which the theory is supposed to model. The purpose of this paper is to show that General Relativity has solutions describing dynamical, compact elastic bodies.

In 1952, Yvonne Choquet-Bruhat proved the first existence theorem for the vacuum Einstein field equations \cite{ChoquetBruhat:1952}, and since then, much  insight has been gained into the properties of the vacuum field equations, and also the field equations
coupled to field sources such as Maxwell or Yang-Mills. However, if we consider dynamical, isolated material bodies, the situation is quite different from the vacuum case. Almost nothing is known in General Relativity about general solutions describing isolated  material bodies, with the exception of some special cases involving either spherical symmetry, very specific equations of state, or the assumption of stationarity   \cite{Andersson_et_al:2008,Andersson_et_al:2010,ChoquetFriedrich:2006,KindEhlers:1993,Rendall:1992}.

This is in stark contrast to the experimental setting where gravitational interactions have been long studied
through observing the motion of isolated bodies, i.e. planets, moons, asteroids, comets and stars. Indeed, Kepler deduced the elliptical motion of the planets from Tycho Brahe's  observations, and Newton explained the elliptic orbits by his law of gravity. All of this was accomplished without having a theory of dynamical bodies. An important role here is played by the remarkable
property of Newton's theory of gravity, that for two-body systems the motion of the center of mass of the bodies is independent of the internal structure of the body.

Although it is possible in Newtonian gravity to calculate the gravitational field generated
by dynamical, isolated material bodies without any detailed information about the material bodies themselves, the
evolution of the material bodies must still be determined in order to have a complete
understanding of such systems. From a theoretical point of view, this requires having suitable local existence theorems for the systems of equations
that govern gravitating elastic bodies. Apart from the few special results \cite{Andersson_et_al:2008,Andersson_et_al:2010,ChoquetFriedrich:2006,KindEhlers:1993,Rendall:1992}
in General Relativity, the only local existence results that are available are in Newtonian gravity, and even in this simplified setting,
there remains much work to be done in order to cover all physically realistic situations.
Some key results in Newtonian gravity are as follows. In the approximation of a compact (non-fluid) elastic body moving in an external gravitational field, where the
gravitational self-interaction and interaction with the object
generating the external field are ignored,
local existence and uniqueness has been established
in \cite{ShibataNakamura:1989}.  Local existence and uniqueness results for the general case, which includes
gravitational self and mutual interactions between adjacent (non-fluid) elastic bodies, are given in \cite{Andersson_et_al:2011}.
Related results covering self-gravitating, isolated incompressible fluid bodies are given in \cite{LindbladNordgren:2009}. There
are also more local existence results available for isolated fluid bodies when gravitational
interactions are ignored, for example, see \cite{CoutandShkoller:2007,CoutandShkoller:2012,Coutand_et_al:2013,Lindblad:2005a,Lindblad:2005b}.

The difficulty, in both Newtonian gravity and General Relativity, in establishing local existence and uniqueness results
for elastic bodies can be attributed to
two sources,
the free boundary arising from the evolving matter-vacuum interface, and the irregularity in the stress-energy
tensor across the matter-vacuum interface. There are essentially two distinct types of irregularities.
The first type corresponds to gaseous fluid
bodies where the proper energy density monotonically decreases in a neighborhood of the vacuum boundary and vanishes identically
there. In this situation, the fluid evolution equations become degenerate and are no longer hyperbolic at the boundary leading to
severe analytic difficulties. In general, the energy density of the fluid is continuous but not differentiable at the boundary of the body in this situation. The second type of irregularity is where the proper energy density has
a positive limit at the vacuum boundary. Examples of this type are liquid fluids and solid elastic bodies. In this case, the energy density of the matter has a
jump discontinuity across the vacuum boundary. Such jump discontinuities
appear in the gravitational field equations in both Newtonian gravity and General Relativity. Dealing with
this discontinuity leads to a number of technical difficulties, which are particularly
acute in the General Relativistic setting due to the non-linearity of the Einstein field equations.

In this article, we restrict ourselves to considering (solid) elastic bodies in General Relativity having
a jump discontinuity in the matter across the vacuum boundary. In this situation, the boundary is characterized by the vanishing of the normal stress. Since the matter density is positive at the boundary, further conditions
on the gravitational field,
which we refer to as \textit{compatibility conditions}, must be imposed beyond the vanishing
of the normal stress in order to establish the existence of solutions\footnote{The elastic field must also satisfy compatibility conditions, but these are
of the well understood type that occur in boundary value problems and are often referred to as ``corner conditions''.} . These compatibility conditions, which have, as far as we know, never been observed up to now\footnote{See however  \cite{vanelst_et_al:2000PhRvD..62j4023V} for a  discussion of the general geometric conditions which the curvature tensor must satisfy for metrics where the second and third derivatives are discontinuous across a timelike surface.}, arise from matching conditions across the matter-vacuum boundary for the higher time derivatives of the gravitational field. The time derivatives $\partial^\ell_t g_{ij}$, or $\partial^\ell_t U$ in the Newtonian case, are determined by the Cauchy data at both sides of the boundary. Only if these time derivatives satisfy appropriate conditions at the boundary up to some order will that gravitational field be sufficiently differentiable inside and outside. In the physical spacetime dimension, the continuity of the time derivatives $\partial^\ell_t g_{ij}$ across the boundary is a consequence of the compatibility conditions.

The compatibility conditions impose restrictions on the Cauchy data. For static solutions they are trivially satisfied. Given
such data, our main aim is to
extend the local existence results of \cite{Andersson_et_al:2011} to the relativistic domain.
Under certain technical assumptions on the elasticity tensor, see Section \ref{matpic},
we establish the local existence of solutions to the Einstein-Elastic equations
that represent self-gravitating, relativistic elastic bodies with compact support. A precise formulation of this result is
given in Theorem \ref{fulllocthm}, which contains the main result of this article. As far as we are aware,
Theorem \ref{fulllocthm} represents the first local existence result that produces solutions without
any special assumptions, such as spherical symmetry,
that correspond to relativistic, gravitating compact matter sources having a jump discontinuity at the matter-vacuum interface.
Related local existence results for non-gravitating, relativistic elastic matter have been established in
\cite{BeigWernig:2007}; see also \cite{Oliynyk:CQG_2012,Trakhinin:2009} for the case of relativistic fluids.


\sect{EinElas}{Einstein-Elastic equations}

A single, compact, n-dimensional\footnote{The important case is $n=3$, which is the physical dimension. However,
since it is of no greater difficulty to handle the general case, we consider also all of higher dimensions
$n\geq 4$.}
$(n\geq 3)$,  relativistic elastic body is\footnote{The extension to non-colliding multiple interacting bodies
is straightforward.}, locally in time, characterized by a map
\eqn{fmap}{
f \: : \: W_T \longrightarrow \Omega
}
from a space-time cylinder $W_T \cong [0,T)\times \Omega$ to an open, bounded
subset $\Omega$ of $\Rbb^n$ with smooth boundary, known as the \emph{material space}, see \cite{BeigSchmidt:2003} for details. The body world tube
$W_T$ is taken  to be contained in an ambient Lorentzian spacetime $(M_T,g)$, where $M_T \cong [0,T)\times \Sigma$ for
some $n$-manifold $\Sigma$. For simplicity, we assume that $\Omega$ and $M_T$ are each covered
by a single coordinate chart given by $(X^I)$, $I=1,2,\ldots,n,$ and $(x^\lambda)$, $\lambda=0,1,\ldots,n$, respectively\footnote{We consistently
use lower case Greek letter, e.g. $\mu,\nu,\gamma$, to label spacetime coordinate
indices that run from $0$ to $n$ while capital Roman indices, e.g. $I,J,K$, which run
from $1$ to $n$, will be used to label the spatial material coordinates.}, and in the following, we use
\leqn{pardef}{
\del{I}=\frac{\del{}\;}{\del{}X^I} \AND \del{\lambda}=\frac{\del{}\;}{\del{}x^\lambda}
}
to denote partial differentiation with respect to these coordinates.

In these local coordinates, $f$ and $g$ can be decomposed as
\eqn{fg}{
X^I=f^I(x^\lambda) \AND g=g_{\mu\nu}(x^\lambda)dx^\mu dx^\nu,
}
respectively, and the field equations satisfied by $\{f^I,g_{\mu\nu}\}$, which we will refer to as
the \emph{Einstein-Elastic equations}, are given
by
\lalin{EinElasA}{
G^{\mu\nu} &= 2\kappa T^{\mu\nu} \text{\hspace{0.3cm} in $M_T$,} \label{EinElasA.1}  \\
\nabla_{\mu} T^{\mu\nu} &= 0 \text{\hspace{1.1cm} in $W_T$,} \label{EinElasA.2}
}
where $G^{\mu\nu}$ is the Einstein tensor of the metric $g_{\mu\nu}$, and
\eqn{Tdef}{
T_{\mu\nu} = 2\frac{\del{}\rho}{\del{} g^{\mu\nu}} - \rho g_{\mu\nu}
}
is the stress-energy of the elastic body with
\eqn{rhodef}{
\rho = \rho(f,H) \qquad (H^{IJ} := g^{\mu\nu}\del{\mu}f^I\del{\nu}f^J )
}
defining the proper energy density of the elastic body. By definition, $\rho$ is non-zero inside $W_T$ and vanishes outside. Letting $\Gamma_T$ denote the space-like boundary of $W_T$, the elastic field must also satisfy the boundary conditions
\leqn{EinElasB}{
n^\mu T_{\mu\nu} = 0 \text{\hspace{0.3cm} in $\Gamma_T$,}
}
where $n^\mu$ denotes the outward pointing unit normal to $\Gamma_T$. Initial conditions for \eqref{EinElasA.1}-\eqref{EinElasA.2} are given
by
\lalin{EinElasC}{
(g_{\mu\nu},\Lc_t g_{\mu\nu}) &= (g^0_{\mu\nu},g^1_{\mu\nu}) \text{\hspace{0.3cm} in $\Sigma$,} \label{EinElasC.1}  \\
(f^I,\Lc_t f^I) &= (f_0^I,f_1^I) \text{\hspace{0.6cm} in $\Sigma\cap W_T$,} \label{EinElasC.2}
}
where $\Sigma$ forms the ``bottom'' of the spacetime slab $M_T\cong [0,T)\times \Sigma$,
$\Sigma\cap W_T$ forms the bottom of the body world tube $W_T\cong [0,T)\times \Omega$, $t=t^\mu\del{\mu}$ is a future pointing
time-like vector field tangent to $\Gamma_T$ and normal to $\Sigma$, and the initial data satisfies the \emph{constraint equations}
\leqn{EinElasD}{
t_{\mu} G^{\mu\nu} = 2\kappa t_{\mu} T^{\mu\nu} \text{\hspace{0.3cm} in $\Sigma$.}
}

Together, the field equations
\eqref{EinElasA.1}-\eqref{EinElasA.2}, the boundary conditions \eqref{EinElasB}, and
the constraints \eqref{EinElasD} for the initial data
\eqref{EinElasC.1}-\eqref{EinElasC.2} make up the fundamental initial boundary value problem (IBVP) for
self-gravitating, relativistic elastic bodies.

\subsect{redEinElas}{The reduced Einstein-Elastic equations}
The method we use to solve the fundamental IBVP given by \eqref{EinElasA.1}, \eqref{EinElasA.2}, \eqref{EinElasB},
\eqref{EinElasC.1} and \eqref{EinElasC.2} is based on the well known technique of employing harmonic coordinates due originally to Y. Choquet-Bruhat, see \cite[Ch. VI, \S 7]{ChoquetBruhat:2009} for a general discussion.
This technique allows us to
replace the full Einstein equations \eqref{EinElasA.1} with the \emph{reduced} equations
given by
\eqn{EinElasE}{
R_{\mu\nu} - \nabla_{(\mu}\zeta_{\nu)} = 2\kappa\bigl( T_{\mu\nu}-\Half T g_{\mu\nu}\bigr),
}
where
\eqn{zetadef}{
\zeta^\gamma = g^{\mu\nu}\Gamma_{\mu\nu}^\gamma.
}
For this method to work, we must choose the initial data so that the additional constraint
\leqn{xidata}{
\zeta^\mu = 0 \quad \text{in $\Sigma$}
}
is satisfied.

The system of primary interest thus becomes the
\emph{reduced Einstein-Elastic
equations} given by
\lalin{redEinElasA}{
R_{\mu\nu} - \nabla_{(\mu}\zeta_{\nu)} &= 2\kappa\bigl( T_{\mu\nu}-\Half T g_{\mu\nu}\bigr)
\text{\hspace{0.3cm} in $M_T$,} \label{redEinElasA.1}  \\
\nabla_{\mu} T^{\mu\nu} &= 0 \text{\hspace{2.8cm} in $W_T$,} \label{redEinElasA.2}
}
where the inital data \eqref{EinElasC.1}-\eqref{EinElasC.2} satisfy the constraints
\eqref{EinElasD} and \eqref{xidata}. Our first step is then to establish the local existence
of solutions to the reduced equations \eqref{redEinElasA.1}-\eqref{redEinElasA.2}. Once this is accomplished, the final
step is verify that these solutions also satisfy
the full equations \eqref{EinElasA.1}-\eqref{EinElasA.2}. This strategy of first solving the reduced equations in order to obtain solutions to
the full equations is well known.
For details, at
least for sufficiently regular spacetimes, see  \cite[Ch. VI, \S 8]{ChoquetBruhat:2009}.
However, due to jump discontinuity in the stress-energy tensor at the matter-vacuum interface, we
cannot apply the standard results directly in our setting.
Instead, we find a suitable
weak formulation and use this to establish the equivalence between
the reduced and full systems of equations.  A complete proof of this equivalence
is given below in Section \ref{rlemat:unique}.

\subsect{matpic}{Material representation}
The boundary $\Gamma_T$, which defines the matter-vacuum interface, is a free boundary.
To fix this free boundary, we follow the standard method in elasticity and
employ the \emph{material representation} defined by
the map\footnote{We always use lower case Roman indices, e.g. $i,j,k$, to label the spatial
coordinate indices that run from $1$ to $n$.}
\eqn{phiA}{
x^i = \phi^i(X^0,X^I), \qquad i=1,2,\ldots,n,
}
that is uniquely determined by the requirement
\eqn{phiB}{
f^I(X^0,\phi(X^0,X^I)) = X^I, \qquad \forall \; (X^0,X^I)\in [0,T) \times \Omega.
}
For an extended discussion of the material representation in relativistic elasticity,
see \cite[\S 3.2]{Wernig:2006}.

Our regularity assumptions on the map $\phi^i$ will be that
\leqn{phiregA}{
\phi^i \in Y^{s+1}_T(\Omega),  \qquad s\in \Zbb_{>n/2+1},
}
where the spaces $Y_T(\Omega)$ are defined below in Section \ref{funct}.
Letting
\eqn{phiD}{
x^0 = \phi^0(X^0,X^I) := X^0,
}
we define the extended map\footnote{Upper case Greek, e.g. $\Lambda,\Gamma,\Sigma$,
indices will run from $0$ to $n$ and are used to label the material spacetime coordinates.}
\leqn{phiC}{
\phi : [0,T) \times \Omega \longrightarrow W_T \: :\: \Xv=(X^\Lambda) \longmapsto (\phi^\mu(X))=(X^0,\phi^i(X)).
}
A basic requirement for the material representation is that this map defines a diffeomorphism between the \textit{material cylinder} $[0,T) \times \Omega$ and the world tube $W_T$. Assuming this,
it follows that there must exist a positive constant $\gamma_0>0$ such that
\leqn{phiregB}{
\det\bigl(\partial_I \phi^j(\Xv)\bigr) > \gamma_0 > 0,
\qquad \forall \; \Xv\in [0,T)\times \Omega.
}

In the material representation, the components $\phi^i$ completely determine the elastic field, and
they are defined on the time-independent material cylinder. To describe the gravitational field
in the material representation, we need to extend the domain of the material map \eqref{phiC} to a neighborhood
of the material cylinder. We accomplish this by first fixing a (non-unique) total extension operator
\eqn{Edefa}{
\Ebb_\Omega \: : \: H^k(\Omega)\longrightarrow H^k(\Rbb^{n}), \qquad k \in \Zbb_{\geq 0},
}
which satisfies
\leqn{EdefBa}{
\Ebb_\Omega(u)|_{\Omega} = u, \AND
\norm{\Ebb_\Omega(u)}_{H^{k}(\Rbb^n)} \leq K\norm{u}_{H^k(\Omega)}
}
for some constant $K=K(k,n)>0$ independent of $u$. The existence of such an operator is established in
\cite{AdamsFournier:2003}; see Theorems 5.21 and 5.22, and Remarks 5.23 for details.
We use the extension operator to extend
$\phi$ via the prescription
\leqn{phitdefA}{
\phit^0(X^0,X^I) = X^0 \AND
\phit^i(X^0,X^I) = \Ebb_\Omega(\phi^i|_{\{X^0\}\times \Omega})(X^I)
}
for all $(X^0,X^I)\in [0,T)\times \Rbb^n$.
\begin{lem} \label{difflem}
There exists
an open neighborhood $\Nc \subset \Rbb^n$ of $\Omega$ for which
\leqn{phitdefB}{
\phit|_{[0,T)\times \Nc} \: : \: [0,T)\times \Nc \longrightarrow  \Mt_T:=\phit([0,T)\times \Nc)\subset M_T
}
is a diffeomorphism.
\end{lem}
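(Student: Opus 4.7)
The plan is to reduce injectivity of $\phit$ to slice-wise injectivity, apply the inverse function theorem on a tubular neighborhood of $\overline\Omega$, and then upgrade this to a global diffeomorphism by a uniform quantitative argument. The key observation is that $\phit^0(X^0,X^I)=X^0$ makes the Jacobian of $\phit$ block-triangular with $\det D\phit = \det(\partial_I \phit^j)$, so $\phit$ is injective on a set $[0,T)\times \Nc$ if and only if each slice map $\phit(X^0,\cdot)|_{\Nc}$ is injective.

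The first step is to establish uniform regularity of $\phit$. Since $s>n/2+1$, the Sobolev embedding applied to the hypothesis $\phi^i\in Y^{s+1}_T(\Omega)$, together with the operator bound \eqref{EdefBa}, yields a constant $M$, independent of $X^0\in[0,T)$, such that
\[
\sup_{X^0\in[0,T)}\|\phit^i(X^0,\cdot)\|_{C^2(\Rbb^n)}\leq M,
\]
so that $\partial_I \phit^j$ admits a modulus of continuity in $X^I$ that is uniform in $X^0$. Because $\phit^i=\phi^i$ on $\Omega$, the lower bound \eqref{phiregB} extends by continuity to $\overline\Omega$; combining this with the uniform modulus of continuity produces some $\epsilon_1>0$ for which $\det(\partial_I \phit^j)\geq \gamma_0/2$ holds on $[0,T)\times \Nc_1$, where $\Nc_1$ denotes the open $\epsilon_1$-neighborhood of $\overline\Omega$ in $\Rbb^n$. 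The inverse function theorem then renders each slice map a local diffeomorphism on $\Nc_1$.

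The main obstacle is to promote this local injectivity to global slice-wise injectivity on a uniform tubular neighborhood, independent of $X^0$. The uniform $C^2$ bound together with the determinant lower bound $\gamma_0$ forces a uniform lower bound $c>0$ on the smallest singular value of $\partial_I\phi^j(X^0,\cdot)$ on $\overline\Omega$, so that every slice $\phi(X^0,\cdot)|_{\overline\Omega}$ is bi-Lipschitz with constants independent of $X^0$. I would argue by contradiction: if no $\Nc\subset \Nc_1$ worked, one could extract sequences $X_k,Y_k$ accumulating on $\overline\Omega$ and $X^0_k\in[0,T)$ with $\phit(X^0_k,X_k)=\phit(X^0_k,Y_k)$; precompactness in $C^1_{\mathrm{loc}}$ provided by the uniform $C^2$ bound, combined with the local diffeomorphism property on $\Nc_1$ and the uniform bi-Lipschitz estimate on $\overline\Omega$, then yields a contradiction. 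Finally, since $\phit([0,T)\times\Omega)=W_T$ is open in $M_T$, shrinking $\Nc$ if necessary ensures $\phit([0,T)\times\Nc)\subset M_T$; the uniform Jacobian lower bound on $\Nc$ then guarantees that $\Mt_T$ is open and that $\phit|_{[0,T)\times \Nc}$ has a $C^1$ inverse, giving the desired diffeomorphism.
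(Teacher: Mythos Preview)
Your proposal is correct and follows the same approach as the paper: use Sobolev embedding to extend the Jacobian determinant lower bound \eqref{phiregB} to a neighborhood of $\Omega$, invoke the inverse (implicit) function theorem for local invertibility, and then shrink $\Nc$ to obtain global injectivity. Your contradiction argument for the last step is considerably more detailed than the paper's, which simply asserts the conclusion from the boundedness of $\Omega$ and the injectivity of $\phit$ on $[0,T)\times\Omega$; the only caution is that your uniform bi-Lipschitz claim on $\overline\Omega$ rests on the assumed slice-wise injectivity of $\phi$ (together with the derivative bounds), not on the singular-value lower bound alone, since $\Omega$ need not be convex.
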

\begin{proof}
Since $s>n/2+1$, it is clear from \eqref{phiregA}, \eqref{phiregB}, \eqref{EdefBa}, and Sobolev's inequality, see Theorem \ref{Sobolev},
that there exists an open neighborhood $\Nc$ of $\Omega$ such that
\eqn{difflem1}{
\det\bigl(\partial_I \phit^j(\Xv)\bigr) > \frac{\gamma_0}{2} > 0,
\qquad \forall \; \Xv\in [0,T)\times \Nc.
}
By the Implicit Function Theorem, we conclude that the map \eqref{phitdefB} is
a local diffeomorphism. Since $\Omega$ is bounded and $\phit$ is one-to-one on $[0,T)\times \Omega$, it is not difficult
to see that we
can, by shrinking $\Nc$ if necessary, arrange that $\phit$ remains one-to-one on $[0,T)\times \Nc$.
\end{proof}
\begin{rem} \label{nbhdrem}
Away from the boundary $\Gamma_T$ in the vacuum region $M_T\setminus W_T$, the local existence and uniqueness of solutions
to the vacuum Einstein equations is well known; for example, see \cite[Ch. VI, \S 8]{ChoquetBruhat:2009}.
Because of this, we can appeal to the finite propagation speed property satisfied by the vacuum Einstein equation,
and conclude that, as far as questions of local existence and uniqueness are concerned, we
lose no generality
 in restricting our attention to the open neighborhood $\Nc$ of the material manifold. By suitably choosing our extension operator $\Ebb_\Omega$,
we can, in fact, take $\Nc$ to be the open box, and by subsequently identifying
opposite sides, we may, without loss of generality, assume that
\eqn{nbhdrem1}{
\Nc \cong \Sigma \cong \Tbb^3
\AND
 \Mt_T = M_T.
}
The extension operator then maps
\eqn{Edef}{
\Ebb_\Omega \: : \: H^k(\Omega)\longrightarrow H^k(\Tbb^{n}), \qquad k \in \Zbb_{\geq 0},
}
and satisfies
\leqn{EdefB}{
\Ebb_\Omega(u)|_{\Omega} = u, \AND
\norm{\Ebb_\Omega(u)}_{H^{k}(\Tbb^n)} \leq K\norm{u}_{H^k(\Omega)}
}
for some constant $K=K(k,n)>0$ independent of $u$.
\end{rem}

We use the map \eqref{phitdefB} to pull back the components of the metric $g_{\mu\nu}$ as scalars
to get the \emph{material representation for the spacetime metric}
denoted by
\eqn{gammadef}{
\gt_{\mu\nu}(\Xv) := g_{\mu\nu}\circ \phit(\Xv).
}
For use below, we let
\eqn{Jdef}{
J = (J^\mu_\Lambda) := (\del{\Lambda}\phi^\mu) \AND  \Jt = (\Jt^\mu_\Lambda) := (\del{\Lambda}\phit^\mu)
}
denote the Jacobian matrices of $\phi$ and its extension $\phit$, respectively, and use
\eqn{Jchdef}{
\Jch = \bigl(\Jch^\Lambda_\mu\bigr) := \bigl((J^{-1})^\Lambda_\mu\bigr) \AND
\Jtch = \bigl(\Jtch^\Lambda_\mu\bigr) := \bigl((\Jt^{-1})^\Lambda_\mu\bigr)
}
to denote the inverses.  By definition of the extension, these Jacobian matrices satisfy
\eqn{Jrest}{
\Jt|_{[0,T) \times \Omega} = J
\AND  \Jtch|_{[0,T) \times \Omega} = \Jch.
}

As shown in Section 3.2, 4.1.2, and  4.6.2 \cite{Wernig:2006}, the
elastic field equations \eqref{EinElasA.2} and boundary conditions \eqref{EinElasB}, when expressed in the material representation,
are given by
\lalin{matElasA}{
\del{\Lambda}(L^\Lambda_i(\Xv,\gt,\phiv,\del{}\phiv)) &= w_i(\Xv,\gt,\phiv,\del{}\phiv) \text{\hspace{0.5cm} in $[0,T) \times \Omega$,} \label{matElasA.1}
\\
\nu_{\Lambda}(L^\Lambda_i(\Xv,\gt,\phiv,\del{}\phiv)) &= 0 \text{\hspace{2.625cm} in $[0,T)\times \del{}\Omega$,}
\label{matElasA.2}
}
where  $\nu_\Lambda = \delta_\Lambda^I \nu_I$ with $\nu_I$ the unit\footnote{This is with respect to the Euclidean inner-product.} conormal
to $\del{}\Omega$. Here, we assume that
\begin{enumerate}
\item[(a)]
$L^\Lambda(\Xv,\gt,\del{}\phiv)$ and
$ w_i(\Xv,\gt,\phiv,\del{}\phiv)$ are smooth for\footnote{As discussed
in \cite{Wernig:2006}, the maps $L^\Lambda_i(\Xv,\gt,\phiv,\del{}\phiv)$ and $w_i(\Xv,\gt,\phiv,\del{}\phiv)$
originate from a Lagrangian
$L(\Xv,\gt,\phi,\del{}\phiv)$ according to $L^\Lambda_i = \frac{\del{}L\;}{\del{}(\del{\Lambda}\phi^i)\;}$
and $w_i = \frac{\del{}L\;}{\del{}\phi^i\;}$,
which make it clear that the field equations \eqref{matElasA.1} are nothing more that the Euler-Lagrange equations
for $L$.}
\eqn{Lsmooth}{
(\Xv,\gt,\del{}\phiv)=(X^\Lambda,\gt_{\mu\nu},\phi^i,\del{\Lambda}\phi^i) \in ([0,T) \times \Omega)\times
\Uct\times \Vct \times \Wct
}
with\footnote{Here, we are using $\Sbb{n+1}$ to denote the sets of symmetric, $(n+1)\times(n+1)$-matrices.}
\eqn{Uctdef}{
\Uct = \{\, \gt \in \Sbb{n+1}\, |\, \det(\gt) < 0\, \},
}
$\Vct$ an open set in $\Rbb^{n}$,
and
\eqn{Wctdef}{
\Wct\subset \bigl\{\, (J_\Lambda^j)\in \Rbb^{(n+1)\times n}\, \bigl| \, \det\bigl(J_I^j\bigr)
 > 0 \, \bigr\}
}
an open set.
\end{enumerate}
From \cite{Wernig:2006}, we know that the elasticity tensor
\eqn{Ltnedef}{
L^{\Lambda\Gamma}_{ij}(\Xv,\gt,\del{}\phiv) := \frac{\del{}L^\Gamma_j((\Xv,\gt,\del{}\phiv))}{\del{}(\del{\Lambda}\phi^i)\;}
}
satisfies the symmetry condition\footnote{This symmetry follows automatically from the fact
the the field equations are derived from a Lagragian; since $L^\Lambda_i = \frac{L\;}{\del{}(\del{\Lambda}\phi^i)\;}$
implies that
$L^{\Lambda\Gamma}_{ij} = \frac{\del{}^2 L\;}{\del{}(\del{\Lambda}\phi^i)\del{}(\del{\Gamma}\phi^j)\; }$,
the symmetry condition \eqref{Lsym} follows from the commutativity of mixed partial derivatives.}
\leqn{Lsym}{
L^{\Lambda\Gamma}_{ij} = L^{\Gamma\Lambda}_{ji}.
}

For our existence result, we need restrict ourselves to elastic materials whose
elasticity tensors satisfy the following two conditions:
\begin{enumerate}
\item[(b)] there exists open sets
\eqn{UctWct}{
\Uc \subset \overline{\Uc} \subset \Uct,\quad \Vc \subset \overline{\Vc} \subset \Vct \AND
\Wc \subset \overline{\Wc} \subset \Wct
}
and a positive constant $\kappa_0>0$ such that
\eqn{L00}{
\xi^i L^{00}_{ij}(\Xv,\gt,\phiv,\del{}\phiv) \xi^j \leq  -\kappa_0 |\xi|^2
}
for all $(\Xv,\gt,\phiv,\del{}\phiv,\xi)$
$\in$ $[0,T)$ $\times$ $\Omega$ $\times$ $\Uc$ $\times$ $\Vc$$\times$ $\Wc$ $\times$ $\Rbb^n$, and
\item[(c)] there exist constants $\kappa_1>0$ and $\gamma_1 \in \Rbb$ for which the spatial components $L^{IJ} = (L^{IJ}_{ij})$
of the elasticity tensor satisfy
the coercitivity condition:
\eqn{coercA}{
\ip{\del{I}\zeta}{L^{IJ}((X^0,\cdot),\gt(X^0),\phiv(X^0),\del{}\phiv(X^0))\del{J} \zeta}_{L^2(\Omega)}
\geq \kappa_1 \norm{\zeta}^2_{H^1(\Omega)} - \gamma_1 \norm{\zeta}^2_{L^2(\Omega)}
}
for each $X^0 \in [0,T)$, $\zeta\in C^1(\overline{\Omega},\Rbb^n)$, and
\eqn{coercB}{
(\gt,\phiv) \in C^0([0,T]\times \overline{\Omega},\Sbb{n+1})\times  \in C^1([0,T]\times \overline{\Omega},\Rbb^n)
}
satisfying
\eqn{coercC}{
\bigl(\gt(\Xv),\phi(\Xv),\del{}\phi(\Xv)\bigr)\in \Uc\times\Vc\times \Wc \quad \forall\; \Xv \in [0,T)\times \Omega.
}
\end{enumerate}

With the material representation for the elastic field equations complete, we now
turn to expressing the reduced Einstein equations \eqref{EinElasA.1} in the material representation. We begin
this process by recalling the well known expansion for the
left hand side of \eqref{EinElasA.1} given by
\leqn{redRicci}{
-2R_{\mu\nu}+2\nabla_{(\mu}\zeta_{\nu)} = \frac{1}{\sqrt{|g|}}\del{\alpha}
\Bigl[\bigl(\sqrt{|g|}g^{\alpha\beta}
\del{\beta}g_{\mu\nu}\bigr) - Q_{\mu\nu}(g,\del{}g) \Bigr],
}
where $Q_{\mu\nu}(g,\del{}g)$ is analytic for $(g,\del{}g)\in \Uct\times \Sbb{n+1}$
and quadratic in $\del{}g$. From the change of variables formula from
multivariable calculus, we observe that
\leqn{change}{
(\del{\alpha}g_{\mu\nu})\circ \phit = \Jtch^\Lambda_\alpha \del{\Lambda}\gt_{\mu\nu}.
}
Next, we recall the transformation law for vector fields $Y=Y^\alpha\del{\alpha}$ given by:
\eqn{vecchange}{
\frac{1}{\sqrt{|\gb|}}\del{\Lambda}\bigl(\sqrt{|\gb|}\Yb^\Lambda\bigr) =
\left(\frac{1}{\sqrt{|g|}}\del{\alpha}\bigl(\sqrt{|g|}Y^\alpha\bigr) \right)\circ \phit
}
where $\Yb^\Lambda = (\phit^*Y)^\Lambda = \Jtch^\Lambda_\alpha Y^\alpha \circ \phit$,
and $|\gb| = |\phit^* g| = \det(\Jt)^2 |\gt|$ with $|\gt|=-\det(\gt_{\alpha\beta})$.
Setting $Y^\alpha = g^{\alpha\beta}\del{\beta}g_{\mu\nu}$ in this formula, we see,
with the help of \eqref{change}, that
\leqn{divdelg}{
\biggl(\frac{1}{\sqrt{|g|}}\del{\alpha}
\bigl(\sqrt{|g|}g^{\alpha\beta}
\del{\beta}g_{\mu\nu}\bigr)\biggr)\circ \phit = \frac{1}{\det(\Jt)\sqrt{|\gt|}}\del{\Lambda}
\bigl(A^{\Lambda\Gamma}\del{\Gamma} \gt_{\mu\nu}\bigr),
}
where
\eqn{Adef}{
A^{\Lambda\Gamma} = A^{\Lambda\Gamma}(\Jt,\gt) := \det(\Jt)\Jtch^\Lambda_\alpha\sqrt{|\gt|}\gt^{\alpha\beta}\Jtch^\Gamma_\beta
\AND
\gt^{\alpha\beta} = \gt^{\alpha\beta}(\gt) :=(\gt_{\alpha\beta})^{-1}.
}
We assume that
\begin{enumerate}
\item[(d)] there exists a constant $\kappa_2>0$ such
\eqn{lambda00}{
\gt^{00}(\gt) \geq \kappa_2, \qquad \forall\, \gt \in \Uc,
}
\end{enumerate}


Taken together, \eqref{redRicci}, \eqref{change} and \eqref{divdelg} show that
\leqn{matredEinA}{
\del{\Lambda}\bigl(A^{\Lambda\Gamma}(\Jt,\gt)\del{\Gamma} \gt_{\mu\nu}\bigr) =
\det(\Jt)Q_{\mu\nu}(\gt,\Jtch\del{} \gt) + \chi_\Omega\Tc_{\mu\nu},
}
where
\eqn{TcdefA}{
\Tc_{\mu\nu} = - 4\kappa  \det(J)\sqrt{|\gt|}\bigl(T_{\mu\nu}\circ \phi -\Half \gt^{\alpha\beta}
T_{\alpha\beta}\circ \phi \gt_{\mu\nu} \bigr),
}
is the material representation of the reduced Einstein equations \eqref{EinElasA.1}. The characteristic
function $\chi_\Omega$ of the set $\Omega$ has been included as a prefactor to the stress
energy contributions on the right hand side of \eqref{matredEinA}  to
highlight the jump discontinuity
across the matter-vacuum interface, defined by $[0,T) \times \del{} \Omega$, and to make the
vanishing of stress energy tensor vanish in the vacuum region $[0,T)\times (\Nc\setminus \!\Omega)$
crystal clear. By assumption (a) above,
\eqn{TcdefB}{
\Tc_{\mu\nu} = \Tc_{\mu\nu}(\Xv,\gt,\phiv,\del{}\phiv)
}
is smooth for $(\Xv,\gt,\phiv,\del{}\phiv))$ $\in$ $[0,T)$ $\times$ $\Omega$ $\times$
$\Uct$ $\times$ $\Vct$ $\times$ $\Wct$.

Summarizing the above results, the complete IBVP for the reduced Einstein-Elastic problem in the material
representation is given by
\lalin{matEinElasA}{
\del{\Lambda}\bigl(A^{\Lambda\Gamma}(\Jt,\gt)\del{\Gamma} \gt_{\mu\nu}\bigr) &=
\det(\Jt)Q_{\mu\nu}(\gt,\Jtch\del{} \gt) + \chi_\Omega\Tc_{\mu\nu}(\Xv,\gt,\phiv,\del{}\phiv)
 \text{\hspace{0.3cm} in $[0,T)\times \Tbb^n$,} \label{matEinElasA.1} \\
 (\gt_{\mu\nu},\del{0}\gt_{\mu\nu}) &= (\lambdah^0_{\mu\nu},\lambdah^1_{\mu\nu})
 \text{\hspace{5.05cm} in $\{0\}\times \Tbb^n$,} \label{matEinElasA.2}\\
\del{\Lambda}(L^\Lambda_i(\Xv,\gt,\phiv,\del{}\phiv)) &= w_i(\Xv,\gt,\phiv,\del{}\phiv) \text{\hspace{4.3cm} in $[0,T) \times \Omega$,}
\label{matEinElasA.3}\\
\nu_{\Lambda}L^\Lambda_i(\Xv,\gt,\phiv \del{}\phiv) &= 0 \text{\hspace{6.425cm} in $[0,T) \times \del{}\Omega$,}
\label{matEinElasA.4}\\
(\phi^i,\del{0}\phi^i) &= (\phih^i_0,\phih^i_1)   \text{\hspace{5.45cm} in $\{0\}\times\Omega$.}
\label{matEinElasA.5}
}
We further assume that the initial data
\begin{enumerate}
\item[(e)]
lies in the spaces\footnote{All function spaces are described in Section \ref{funct}.}
\eqn{matEinElasidata}{
(\lambdah^0_{\mu\nu},\lambdah^1_{\mu\nu},\phih^i_0,\phih^i_1) \in \Hc^{2,s+1}(\Tbb^n)\times \Hc^{2,s}(\Tbb^n)
\times H^{s+1}(\Omega)\times H^{s}(\Omega)
}
for $s\in \Zbb_{>n/2+1}$,
\item[(f)] is chosen so that the map
\eqn{phihmapa}{
\Tbb^n \ni X \longmapsto (\Ebb_\Omega(\phih^i_0)(X))\in \Tbb^n
}
is a $H^{s+1}(\Tbb^n)$-diffeomorphism, and there exists a constant $\gamma_0 >0$ such that
\eqn{phihmapb}{
\det\bigl(\partial_J \phih^i_0(X)\bigr) > \gamma_0 > 0,
\qquad \forall \; X\in \Tbb^n,
}
\item[(g)] satisfies the \emph{compatibility conditions} given by
\lgath{EEcompat}{
\del{0}^\ell \! \gt_{\mu\nu} \bigl|_{X^0=0} \in \Hc^{m_{s+1-\ell},s+1-\ell}(\Tbb^n)
\quad \ell=2,\ldots,s+1, \label{EEcompat.1} \\
\del{0}^\ell \!\bigl(\nu_\Lambda L^\Lambda_i(\Xv,\gt,\del{}\phiv)\bigr)\Bigr|_{X^0=0} \in H^{s-\ell}(\Omega)\cap H^1_0(\Omega),
 \quad \ell=0,1,\ldots,s-1,  \label{EEcompat.2}}
where
\leqn{mEEdef}{
m_j = \begin{cases} 2 & \text{if $j\geq 2$} \\
j & \text{otherwise} \end{cases},
}
and
\item[(h)] satisfies, after transforming to the Eulerian representation,
the constraint equations \eqref{EinElasD} and \eqref{xidata}.
\end{enumerate}

\begin{rem} \label{idatarem}
\begin{enumerate}
$\;$

\item[(i)] We do not actually need that $\Sigma \cong \Tbb^n$ and that the initial data satisfies the constraint equations
on all of $\Tbb^n$. We make this assumption in order to focuss on the essential elements of the proof.
It is not difficult, using domain of dependence arguments, to
see that it is enough to assume the $\Sigma$ is open in $\Tbb^n$, and that the constraint equations
are satisfied only on $\Sigma$. With this change, all of the arguments used below would go through
with $[0,T)\times \Tbb^n$ replaced by an appropriate lens shaped domain having $\Sigma$ as its base.
\item[(ii)] The problem of classifying initial data that satisfies assumptions (e)-(h) for general
initial hypersurfaces $\Sigma$ is a very difficult problem. Even in the simpler setting of vacuum spacetimes, our
understanding of the solution space for the constraint equations is far from complete. However, in
special situations, we do know that the set of initial data satisfying (e)-(h) is non-empty. For example,
it is clear that stationary solutions of the Einstein-Elastic system, e.g. \cite{Andersson_et_al:2008,Andersson_et_al:2010}, automatically satisfy (e)-(h). Presumably, it would be possible
to use an Implict Function Theorem argument to construct an open neighborhood of the stationary initial data
satisfying (e)-(h). We also remark the initial data constructed in \cite{Andersson_et_al:2011}, which satisfies the Newtonian analogue of assumptions (e)-(h),
can be perturbed to fully relativistic initial data satisfying (e)-(h) using the method of
\cite{Lottermoser:1992}, see also \cite{Oliynyk:CMP_2007}. Details of this construction and
related investigations will be presented elsewhere.
\end{enumerate}
\end{rem}

\sect{locexist}{Local existence theorems}

We are now ready to state our main results, which are contained in the two following theorems.
The first theorem establishes local existence and uniqueness of
solutions to the reduced Einstein-Elastic system, while the second shows that these solutions also satisfy the full Einstein-Elastic system provided that
the initial data satisfies the constraints \eqref{EinElasD} and \eqref{xidata}. Proofs of these theorems are given in Sections \ref{rlemat} and \ref{flemat}.
\begin{thm} \label{redlocthm}
Suppose the assumptions (a) to (g) from Section \ref{matpic} hold. Then there exists a $T>0$ and a unique solution $(\gt_{\mu\nu},\phi^i)\in X^{s+1}_T(\Tbb^n)\times Y^{s+1}_T(\Omega)$
on $[0,T)\times\Omega$
of the IBVP \eqref{matEinElasA.1}-\eqref{matEinElasA.5}.
\end{thm}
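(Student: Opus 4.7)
The plan is to construct the solution via a Picard-type iteration that alternates between a linear wave equation for the metric (on all of $\Tbb^n$) and a linear hyperbolic boundary-value problem for the material map (on $\Omega$ with the natural Neumann boundary condition $\nu_\Lambda L^\Lambda_i = 0$). Before starting the iteration, I would use \eqref{matEinElasA.1} and \eqref{matEinElasA.3} to recursively define the formal time derivatives $\del{0}^\ell \gt_{\mu\nu}|_{X^0=0}$ and $\del{0}^\ell \phi^i|_{X^0=0}$ from the Cauchy data $(\lambdah^0,\lambdah^1,\phih_0,\phih_1)$; assumption (g) guarantees that these formal derivatives live in the right spaces $\Hc^{m_{s+1-\ell},s+1-\ell}$ and meet the traces in $H^1_0(\Omega)$ required to match with the Neumann boundary.

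The core of the argument is the linear theory for each frozen-coefficient problem. For the metric equation, hypothesis (d) together with continuity of $\gt_{\mu\nu}$ ensures that $\del{\Lambda}(A^{\Lambda\Gamma}\del{\Gamma}\cdot)$ is a genuine wave operator, so standard tangential energy estimates give control in $H^{s+1}$ away from the interface, while the jump in the source caused by the prefactor $\chi_\Omega$ forces the normal $X^0$-derivatives into the lower-regularity $\Hc^{2,s+1}$ class via the equation itself (this is precisely why the function $m_j$ caps at $2$ for $j \ge 2$). For the elastic equation, coercivity (c) and the sign condition (b) on $L^{00}$ give, by the standard symmetric hyperbolic theory with a boundary, energy estimates for $\del{0}^\ell\phi$ in $H^{s+1-\ell}(\Omega)$; the Neumann condition is the natural boundary condition for the Euler--Lagrange operator $\del{\Lambda} L^\Lambda_i$, and the compatibility conditions \eqref{EEcompat.2} are precisely what is needed to run higher-order energy estimates up to the boundary.

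With these two linear pieces, I would iterate: given $(\gt^{(k)},\phi^{(k)})\in X^{s+1}_T\times Y^{s+1}_T$, solve the linear wave equation for $\gt^{(k+1)}$ with coefficients $A^{\Lambda\Gamma}(\Jt^{(k)},\gt^{(k)})$ and source formed from $(\gt^{(k)},\phi^{(k)})$, then solve the linear elastic Neumann IBVP for $\phi^{(k+1)}$ with coefficients $L^{\Lambda\Gamma}_{ij}((\Xv,\gt^{(k+1)},\phi^{(k)},\del{}\phi^{(k)}))$. Uniform bounds in $X^{s+1}_T\times Y^{s+1}_T$ on a sufficiently small time interval follow from the linear estimates, Moser-type tame product and composition estimates (valid since $s > n/2+1$), and the smoothness of $A^{\Lambda\Gamma}$, $Q_{\mu\nu}$, $L^\Lambda_i$, $w_i$, $\Tc_{\mu\nu}$ on the open sets $\Uc,\Vc,\Wc$. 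Contraction, and hence convergence, is then proved in a weaker norm (one derivative less) by taking differences of successive iterates and applying the same linear estimates; uniqueness comes from the same contraction argument applied to two hypothetical solutions.

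The main obstacle is reconciling the two function-space scales through the interface $[0,T)\times \del{}\Omega$. The metric $\gt_{\mu\nu}$ is only \emph{piecewise} $H^{s+1}$ across $\del{}\Omega$ because of the jump of $\chi_\Omega\Tc_{\mu\nu}$ in \eqref{matredEinA}, yet its trace on $[0,T)\times\overline{\Omega}$ must be sufficiently regular to feed into the principal coefficients $L^{\Lambda\Gamma}_{ij}(\Xv,\gt,\phiv,\del{}\phiv)$ of the elastic problem and satisfy the coercivity hypothesis (c); in addition, the iteration must preserve the delicate compatibility conditions (g), which couple the trace of $\gt$ and its time derivatives on $\del{}\Omega$ to the time derivatives of $\nu_\Lambda L^\Lambda_i$. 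The design of the anisotropic spaces $\Hc^{m,k}(\Tbb^n)$ and $X^{s+1}_T$ — with only $m_j\le 2$ regular time derivatives but $s+1-\ell$ tangential derivatives — is precisely what makes the trace and product estimates close on both sides of the interface, so a careful verification that the iteration maps these spaces into themselves (and preserves the compatibility conditions) will be the technical heart of the proof.
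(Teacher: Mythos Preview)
Your outline is in the right spirit but differs from the paper's actual argument in two notable ways.

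First, the paper does \emph{not} linearize the elastic problem at each step. Instead, the iteration is only over the metric variable: given $h=\gt^{m-1}\in\Bc_{R,1}$, one solves the \emph{fully nonlinear} elastic IBVP \eqref{matEinElasA.3}--\eqref{matEinElasA.5} (with $h$ in place of $\gt$) directly via Koch's nonlinear theorem (Theorem~\ref{kochthm}) to obtain $\phi_m$, and only then solves the \emph{linear} wave equation for $\gt^m$ with coefficients built from $(h,\phi_m)$ via Theorem~\ref{tlinthm}. This sidesteps the issue you flag as ``the technical heart'': since Koch's theorem is applied to the genuine elastic equation, the natural Neumann condition and the compatibility conditions \eqref{EEcompat.2} are handled once and for all by that black box, rather than being re-verified through a linearized iteration. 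Your scheme of linearizing both equations could work, but would require extra bookkeeping to ensure that the linearized elastic boundary condition $\nu_\Lambda(L^{\Lambda\Gamma}_{ij}\del{\Gamma}\phi^{(k+1)}+\text{lower order})=0$ is of the right form for higher-order estimates and that compatibility is preserved step by step.

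Second, the paper does not prove contraction. Convergence of the iterates is obtained by the softer route of uniform bounds in $X^{s+1}_T\times Y^{s+1}_T$, Banach--Alaoglu to extract weak limits, and the Rellich--Kondrachov theorem to upgrade to strong convergence in slightly weaker norms, which suffices to pass to the limit in the weak formulation. Uniqueness is then proved separately by energy estimates on the difference of two solutions (using a first-order reformulation of the metric equation together with Koch's linear estimates for the differentiated elastic system), rather than as a by-product of a contraction. Your contraction-in-a-weaker-norm approach is a legitimate alternative and would give uniqueness for free, but it demands Lipschitz estimates on the iteration map that the paper's compactness argument avoids.
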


\begin{rem} \label{redlocthmrem}
Although we do not need it here, it is worthwhile noting that arguments from \cite{AnderssonOliynyk:2014,Koch:1993}  can
be adapted to establish a continuation principle for the solutions from Theorem \ref{redlocthm}. Specifically, it can
be shown that if the solution $(\gt_{\mu\nu},\phi^i)$ from Theorem \ref{redlocthm} continues to lie in the interior of the regions where assumptions (a) to (d)
are satisfied for some $\kappa_0,\kappa_1,\kappa_2>0$ and $\gamma_1\in \Rbb$,
and $\norm{\gt}_{W^{1,\infty}([0,T)\times\Omega)}+\norm{\phiv}_{W^{2,\infty}([0,T)\times\Omega)}<\infty$, then there
exists a $T^*>T$ and a unique extension $(\gt^*_{\mu\nu},\phi^i_*) \in X^{s+1}_{T^*}(\Tbb^n)\times Y^{s+1}_{T^*}(\Omega)$
of $(\gt_{\mu\nu},\phi^i)$ that solves the IVBP \eqref{matEinElasA.1}-\eqref{matEinElasA.5} on $[0,T^*)\times \Omega$.
\end{rem}

\begin{thm} \label{fulllocthm}
Suppose that assumption (a) to (g) from Section \ref{matpic} hold, and let $(\gt_{\mu\nu},\phi^i)\in X^{s+1}_T(\Tbb^n)\times Y^{s+1}_T(\Omega)$ be the solutions
to the reduced Einstein-Elastic system from Theorem \ref{redlocthm}. If assumption (h) from Section \ref{matpic} also holds, then
$\zeta_\nu\circ \phit$ $=$ $0$ on $[0,T)$$\times$$\Omega$ and the pair $(\gt_{\mu\nu},\phi^i)$ determine a solution of the full Einstein-Elastic system
on $[0,T)$$\times$ $\Omega$ in the material representation.
\end{thm}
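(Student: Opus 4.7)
The plan is to follow the classical Choquet--Bruhat strategy, adapted to accommodate the jump discontinuity at the matter--vacuum interface that, as Section \ref{redEinElas} already notes, prevents a direct application of \cite[Ch.\ VI, \S 8]{ChoquetBruhat:2009}. With the reduced solution $(\gt_{\mu\nu},\phi^i)$ provided by Theorem \ref{redlocthm} in hand, the goal is to show $\zeta^\mu \equiv 0$ on $[0,T)\times\Tbb^n$, since then \eqref{redEinElasA.1} collapses to the full Einstein equation \eqref{EinElasA.1}. To bring familiar tensorial calculus to bear, I would work in the Eulerian representation via the diffeomorphism $\phit$ of Lemma \ref{difflem} and return to the material frame only at the end.

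The core of the argument is the derivation of a propagation equation for $\zeta_\nu$. Solving the trace of \eqref{redEinElasA.1} for $R$, substituting back to express $G_{\mu\nu}$, and applying the twice-contracted Bianchi identity $\nabla^\mu G_{\mu\nu}=0$ yields a linear second-order hyperbolic system of the schematic form
\[
\Box \zeta_\nu + R^\mu{}_\nu \zeta_\mu = -4\kappa\, \nabla^\mu T_{\mu\nu} + (\text{terms lower order in } \zeta),
\]
with coefficients built from $\gt$, where the derivatives of $T$ are understood distributionally. In the interior of $W_T$ the matter conservation law \eqref{EinElasA.2} annihilates the right-hand side, and in $M_T\setminus W_T$ the stress-energy vanishes identically. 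The only remaining contribution at the distributional level is the surface layer on $\Gamma_T$ with coefficient $n^\mu T_{\mu\nu}|_{\Gamma_T}$, and this is precisely what is killed by the boundary condition \eqref{EinElasB}. Casting the propagation equation in the weak form alluded to in Section \ref{redEinElas} and integrating by parts, the boundary terms cancel exactly, so that $\zeta_\nu$ satisfies a homogeneous linear hyperbolic equation globally on $[0,T)\times\Tbb^n$.

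It then suffices to check vanishing Cauchy data and invoke uniqueness. Hypothesis \eqref{xidata} in (h) gives $\zeta^\mu|_{X^0=0}=0$ directly, and contracting the constraint \eqref{EinElasD} with a future-pointing timelike vector, combined with the restriction of the reduced equations to the initial slice, yields $\del{0}\zeta^\mu|_{X^0=0}=0$ through a standard computation in which all tangential derivatives of $\zeta|_{X^0=0}$ already vanish. Given the regularity $(\gt_{\mu\nu},\phi^i)\in X^{s+1}_T(\Tbb^n)\times Y^{s+1}_T(\Omega)$ with $s>n/2+1$ and the one-sided smoothness of $\gt$ near $\Gamma_T$ afforded by the compatibility conditions (g), uniqueness for the linear Cauchy problem forces $\zeta^\mu\equiv 0$, and the reduced equations reduce to the full Einstein equations. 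Pulling back through $\phit$ yields the assertion.

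The principal obstacle is the rigorous execution of the second step. One must set up a weak formulation that is compatible with the $L^\infty$-type discontinuity of $T_{\mu\nu}$ across $\Gamma_T$, verify that the boundary contribution produced upon integration by parts reduces to a single integral of $T_{\mu\nu}n^\mu$ paired with smooth vector fields on $\Gamma_T$, and apply \eqref{EinElasB} to annihilate it at the distributional level. The compatibility conditions (g) are precisely what supply enough one-sided regularity on each face of $\Gamma_T$ for these integrations by parts to be legitimate; once this bookkeeping is in place the derivation of the wave equation and the propagation of zero initial data from the constraints follow familiar patterns.
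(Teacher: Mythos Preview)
Your strategy is correct and matches the paper's: derive a homogeneous linear wave equation for $\zeta_\nu$ (equivalently $\zetab_\Lambda$) by combining the contracted Bianchi identity, the Ricci commutator identity, the matter conservation law, and the boundary condition \eqref{EinElasB}; then conclude via vanishing Cauchy data and uniqueness of weak solutions. The paper carries this out in the material (pull-back) representation rather than the Eulerian one, but that is a cosmetic difference.

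Where your proposal has a genuine gap is in the identification of the principal obstacle. You locate the difficulty in the integration by parts for the stress-energy term across $\Gamma_T$, but that step is in fact routine once one has $\Tb_{\Lambda\Gamma}\in Y^s_T(\Omega)$ and the trace on $(0,T)\times\del{}\Omega$; the paper dispatches it in a few lines (see \eqref{divTb.1}). The real issue, which you assert without justification, is the validity of the twice-contracted Bianchi identity $\nabla^\mu G_{\mu\nu}=0$ and the commutator identity $\nabla^\mu\nabla_\nu\zeta_\mu-\nabla_\nu\nabla^\mu\zeta_\mu=R_\nu{}^\mu\zeta_\mu$ at the regularity available. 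The metric lies in $X^{s+1}_T(\Tbb^n)$, which across the matter--vacuum interface gives only $H^2$ control; the Einstein tensor is then merely $L^2$ globally, and taking a further divergence is not covered by the classical identities. The paper's resolution is a mollification argument: one smooths the metric and the diffeomorphism via $S_\lambda$, applies the classical identities to the smoothed quantities, and passes to the limit $\lambda\searrow 0$ using the approximation estimates \eqref{GbestA.1}--\eqref{DYbestA.2} to obtain the weak identities \eqref{BianchiF} and \eqref{commB}. Without this step (or an equivalent one), your derivation of the propagation equation is formal only. The compatibility conditions (g) do ensure the one-sided regularity needed to make the smoothing estimates go through, but they do not by themselves let you bypass the smoothing.
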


\begin{rem} \label{fulllocrem}
Using the techniques develop in this article, it is not difficult to adapt the arguments
from \cite[Ch. VI, \S 8.3]{ChoquetBruhat:2009} to show that the solution from Theorem \ref{fulllocthm}
are geometrically unique in the sense of Theorem 8.4 from  \cite[Ch. VI, \S 8.3]{ChoquetBruhat:2009}.
\end{rem}

\sect{wave}{Wave equations}

The proofs of Theorems \ref{redlocthmrem} and \ref{fulllocthm} rely on existence and uniqueness theorems for two classes of wave equations.
The first class concerns linear wave equations involving a jump discontinuity in the source
term, and the existence and uniqueness result presented in
Theorem \ref{tlinthm} below for these systems
is an adaptation of Theorem 3.4 from \cite{AnderssonOliynyk:2014}. Since the proof is
similar, we only record the
essential differences and leave remainder of the details to the interested reader.
The second class of wave equations
consist of
non-linear systems in divergence form on bounded domains with Neumann boundary conditions. The existence and uniqueness results for this class of equations, presented in Section \ref{divwave}, is a variant of Theorem 1.1  of \cite{Koch:1993}.


\subsect{prelim}{Preliminaries} Before proceeding with the proof of Theorem \ref{tlinthm},
we first fix our notation and conventions that will be used throughout this section.

In the following,
we use $(x^{\mu})_{\mu=0}^n$  to denote Cartesian coordinates
on $\Rbb^{n+1}$; we use  $x^0$ and $t$, interchangeability, to denote the time coordinate,
and $(x^i)_{i=1}^n$ to denote the spatial coordinates. We also use $x=(x^1,\ldots,x^n)$
and $\xv = (x^0,\ldots,x^n)$ to denote spatial and spacetime points, respectively.
Partial derivatives are denoted as above, see \eqref{pardef},
and
we use $Du(x) = (\del{1}u(x),\ldots,\del{n}u(x))$ and
$\del{}u(\xv) =  (\del{0}u(\xv),Du(\xv))$ to denote the spatial and spacetime gradients, respectively.

\subsubsect{sets}{Sets}
The following subset of $\Rbb^n$ will be of interest:
\alin{setsdef}{
Q^+_1 &= \{\: (x^1,\ldots,x^n) \: |\: -1 < x^1,x^2,\ldots,x^{n-1} < 1, \quad 0 < x^n < 1 \: \} \\
\intertext{and}
Q_\delta &= \{\: (x^1,\ldots,x^n) \: |\: -1 \leq  x^1,x^2,\ldots,x^{n} \leq 1  \:\}.
}
We will also need to identify the opposite sides of the n-box $Q_1$ so that\footnote{Here, $\sim$ denotes the equivalence relation on
$Q_1$ determined by the identification of the opposite sides of the boundary.}
\eqn{Tbbdefa}{
 Q_1/{\sim} {} \approx{} \Tbb^n .
}
We note that  under this identification, the Carestian coordinates $x=(x^i)$ on $\Rbb^n$  define periodic coordinates on
$\Tbb^n$.
The following open and connected subset of $\Tbb^n$ with smooth boundary will also be of interest
\eqn{Omegadef}{
\Omega_1  =  Q^+_{1}/{\sim}.
}

Given an open set $\Omega$ of $\Tbb^n$ with smooth boundary,
we let $\chi_\Omega$ denote the characteristic function, and we use $\Omega^c$ to
denote the interior of its complement, that is
\eqn{Omeagcdef}{
\Omega^c := \Tbb^n \setminus \overline{\Omega}.
}

\subsubsect{funct}{Function spaces} $\;$

\bigskip

\noindent\textit{Spatial function spaces}

\bigskip

Given an open set $\Omega \subset \Tbb^n$ with smooth boundary, we use the standard notation $H^s(\Omega)$, $s\in \Zbb_{\geq 0}$, to denote
the $L^2$ Sobolev spaces. We also define the intersection spaces
\eqn{HksdefT}{
\Hc^{k,s}(\Tbb^n) =  H^k(\Tbb^n)\cap  H^s(\Omega)\cap H^s(\Omega^c)  \quad (s\geq k; k,s\in\Zbb_{\geq 0}),
}
which we equip with the norm
\eqn{HksnormT}{
\norm{u}_{\Hc^{k,s}(\Tbb^n)}^2 = \norm{u}_{H^s(\Omega)}^2 + \norm{u}_{ H^k(\Tbb^n)}^2 + \norm{ u}^2_{H^s(\Omega^c)}.
}

\bigskip

\noindent\textit{Spacetime function spaces}

\bigskip

Given an open subset $\Omega \subset \Tbb^n$ with smooth boundary and a $T>0$, we define the spaces
\leqn{XTdef}{
X^s_T(\Tbb^n) = \bigcap_{\ell=0}^s W^{\ell,\infty}\bigl([0,T),\Hc^{m_{s-\ell},s-\ell}(\Tbb^n)\bigr),
}
where $m_\ell$ is a defined above by \eqref{mEEdef},
\leqn{XcTdef}{
\Xc^s_T(\Tbb^n) = \bigcap_{\ell=0}^s W^{\ell,\infty}\bigl([0,T),\Hc^{0,s-\ell}(\Tbb^n)\bigr)
}
and
\leqn{YTdef}{
Y^s_T(\Omega) =  \bigcap_{\ell=0}^s W^{\ell,\infty}\bigl([0,T),H^{s-\ell}(\Omega)\bigr).
}

We also define the  \emph{energy norms}:
\gath{XTnormA}{
\norm{u}_{E^s(\Tbb^n)}^2 = \sum_{\ell=0}^s \norm{\del{t}^\ell u}^2_{\Hc^{m_{s-\ell},s-\ell}(\Tbb^n)}, \qquad
\norm{u}_{\Ec^s(\Tbb^n)}^2 = \sum_{\ell=0}^s \norm{\del{t}^\ell u}^2_{\Hc^{0,s-\ell}(\Tbb^n)}, \\
\norm{u}^2_{E^s(\Omega)} =  \sum_{\ell=0}^s \norm{\del{t}^\ell u}^2_{H^{s-\ell}(\Omega)}, \qquad
\norm{u}_{E^{s,r}(\Tbb^n)}^2 = \sum_{\ell=0}^r \norm{\del{t}^\ell u}^2_{\Hc^{m_{s-\ell},s-\ell}(\Tbb^n)} \quad (s\geq r)
\intertext{and}
\norm{u}_{E(\Tbb^n)}^2 = \norm{u}_{E^1(\Tbb^n)}^2 = \norm{u}_{H^1(\Tbb^n)}^2 + \norm{\del{t}u}^2_{L^2(\Tbb^n)}.
}
In terms of these energy norms, we can write the norms of the spaces \eqref{XTdef}, \eqref{XcTdef} and \eqref{YTdef} as
\gath{XTnormB}{
\norm{u}_{X^s_T(\Tbb^n)} = \sup_{0\leq t < T} \norm{u(t)}_{E^s(\Tbb^n)}, \qquad \norm{u}_{\Xc^s_T(\Tbb^n)} = \sup_{0\leq t < T} \norm{u(t)}_{\Ec^s(\Tbb^n)}
\intertext{and}
\norm{u}_{Y^s_T(\Omega)} =  \sup_{0\leq t < T} \norm{u(t)}_{E^s(\Omega)},
}
respectively. Finally, we define the following subspaces of \eqref{XTdef} and \eqref{YTdef}: 
\alin{XVdef}{
CX^s_T(\Tbb^n) = \bigcap_{\ell=0}^s C^{\ell}\bigl([0,T),\Hc^{m_{s-\ell},s-\ell}(\Tbb^n)\bigr)
\AND
CY^s_T(\Omega) =  \bigcap_{\ell=0}^s C^{\ell}\bigl([0,T),H^{s-\ell}(\Omega)\bigr).
}

\subsubsect{cost}{Estimates and constants}

We employ that standard notation
\eqn{lesssimA}{
a \lesssim b
}
for inequalities of the form
\eqn{lesssimB}{
a \leq C b
}
in situations where the precise value or dependence on
other quantities of the constant $C$ is not required.  On the other hand,  when the dependence of the constant
on other inequalities needs to be specified, for example if the constant depends on the norms $\norm{u}_{L^\infty(\Tbb^n)}$ and $\norm{v}_{L^\infty(\Omega)}$, we use the notation
\eqn{lesssimC}{
C = C(\norm{u}_{L^\infty(\Tbb^n)},\norm{v}_{L^\infty(\Omega)}).
}
Constants of this type will always be non-negative, non-decreasing, continuous functions of their arguments.

\subsubsect{domain}{A simple extension operator}

Given an open set $\Omega$ in $\Tbb^n$, we define the trivial extension operator by
\eqn{chiext}{
\chi_\Omega u(x) = \begin{cases} u(x) & \text{if $x\in \Omega$} \\ 0 & \text{otherwise} \end{cases}.
}
Clearly, this defines a bounded linear operator from $L^p(\Omega)$ to $L^p(\Tbb^n)$.


\subsubsect{mollifier}{Smoothing operator}
We use  $S_\lambda$ ($\lambda >0$) to denote the smoothing operator on $\Tbb^n$ from
Section 2.6 of \cite{AnderssonOliynyk:2014}, and recall that $S_\lambda$
satisfies the following properties:

\begin{prop} \label{mollprop}
Suppose $\Omega$ is an open set in  $\Tbb^n$ with a
smooth boundary,  $1\leq p < \infty$ and $s \in \Zbb_{\geq 0}$. Then
there exists a family of continuous linear maps
\eqn{mollprop1}{
S_\lambda \: :\: W^{s,p}(\Tbb^n) \longrightarrow W^{s,p}(\Tbb^n) \quad \lambda > 0
}
satisfying
\gath{mollprop2}{
\norm{S_\lambda \chi_\Omega u}_{W^{k,p}(\Tbb^n)} < \infty, \quad k\geq s,\qquad
 \norm{S_\lambda \chi_\Omega u}_{W^{s,p}(\Omega)} \lesssim \norm{u}_{W^{s,p}(\Omega)}, \quad 0<\lambda \leq 1,
\intertext{and}
\lim_{\lambda\searrow 0} \norm{S_\lambda \chi_\Omega u - u}_{W^{s,p}(\Omega)}  = 0
}
for all $u\in W^{s,p}(\Omega)$. Moreover, the $S_\lambda$ are well-defined, continuous linear operators on $\Hc^{m,s}(\Tbb^n)$ and satisfy
\gath{mollcor1}{
\norm{S_\lambda  u}_{\Hc^{\ell,k}(\Tbb^n)} < \infty, \quad k\geq s,\; \ell \geq m, \; k\geq \ell, \qquad
 \norm{S_\lambda  u}_{\Hc^{m,s}(\Tbb^n)} \lesssim \norm{u}_{\Hc^{m,s}(\Omega)}, \quad 0<\lambda \leq 1,
\intertext{and}
\lim_{\lambda\searrow 0} \norm{S_\lambda  u - u}_{\Hc^{m,s}(\Tbb^n)}  = 0
}
for all  $u \in \Hc^{m,s}(\Tbb^n)$.
\end{prop}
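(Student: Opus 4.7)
The plan is to derive the stated properties from the operator $S_\lambda$ constructed in \cite{AnderssonOliynyk:2014}, Section 2.6. The first block of properties—finiteness of $\|S_\lambda \chi_\Omega u\|_{W^{k,p}(\Tbb^n)}$ for $k\ge s$, the uniform bound $\|S_\lambda\chi_\Omega u\|_{W^{s,p}(\Omega)}\lesssim \|u\|_{W^{s,p}(\Omega)}$ for $0<\lambda\le 1$, and $W^{s,p}(\Omega)$-convergence—is precisely the content of that reference, so I would simply invoke it. The only genuinely new content is the lifting of these properties to the combined spaces $\Hc^{m,s}(\Tbb^n)$.

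For the $\Hc^{m,s}(\Tbb^n)$ statements, the key observation is the almost-everywhere decomposition $u = \chi_\Omega u + \chi_{\Omega^c} u$, valid since $\partial\Omega$ has Lebesgue measure zero. By linearity of $S_\lambda$,
\[
S_\lambda u \;=\; S_\lambda(\chi_\Omega u) + S_\lambda(\chi_{\Omega^c} u),
\]
so each summand is governed by the first block of properties, the second via the role-swap $\Omega\leftrightarrow \Omega^c$ (note $\Omega^c$ is itself an open set in $\Tbb^n$ with smooth boundary, and the construction of $S_\lambda$ in \cite{AnderssonOliynyk:2014} is symmetric in this exchange). This yields $\|S_\lambda(\chi_\Omega u)\|_{H^s(\Omega)}\lesssim \|u\|_{H^s(\Omega)}$ and the analogous bound on the $\Omega^c$-side. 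The middle contribution $\|S_\lambda u\|_{H^\ell(\Tbb^n)}$ for $\ell\ge m$ is handled by the classical mollifier estimate applied directly to $u\in H^m(\Tbb^n)$. Finiteness of $\|S_\lambda u\|_{\Hc^{\ell,k}(\Tbb^n)}$ for all admissible $k,\ell$ is automatic from the fact that $S_\lambda$ produces $C^\infty$ output at each fixed $\lambda>0$.

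Convergence $S_\lambda u\to u$ in $\Hc^{m,s}(\Tbb^n)$ reduces, after the same decomposition, to: (i) classical mollifier convergence for the $H^m(\Tbb^n)$-component; (ii) the $W^{s,p}(\Omega)$-convergence from the first block applied to $S_\lambda(\chi_\Omega u)\to u$ on $\Omega$; (iii) the symmetric statement on $\Omega^c$; plus the cross terms $\|S_\lambda(\chi_{\Omega^c} u)\|_{H^s(\Omega)}$ and $\|S_\lambda(\chi_\Omega u)\|_{H^s(\Omega^c)}$, which must go to zero as $\lambda\searrow 0$.

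The main obstacle is precisely control of these cross terms, both for the uniform bound and for the convergence; they are nonzero in general because a naive convolution mollifier would mix values across $\partial\Omega$. This is where the specific structure of $S_\lambda$ from \cite{AnderssonOliynyk:2014} enters: $S_\lambda$ is built as a convolution against a smooth kernel composed with a boundary-avoiding shift (realized via a partition of unity near $\partial\Omega$ combined with translations in the inward and outward normal directions). This shift is designed so that, for $0<\lambda\le 1$, the support of the effective kernel applied at a point $x$ lies on the same side of $\partial\Omega$ as $x$, which forces the cross terms to vanish identically on the interior regions and to be controlled uniformly in a boundary collar by the standard mollifier estimates. Once this structure is quoted from \cite{AnderssonOliynyk:2014}, all the claims of the proposition follow by combining the three ingredients above with the triangle inequality.
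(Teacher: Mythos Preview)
Your proposal is correct and in fact provides substantially more detail than the paper does: the paper gives no proof at all, simply introducing $S_\lambda$ as ``the smoothing operator on $\Tbb^n$ from Section 2.6 of \cite{AnderssonOliynyk:2014}'' and stating the proposition as a recollection of properties established there. Your decomposition $u=\chi_\Omega u+\chi_{\Omega^c}u$, the appeal to the $\Omega\leftrightarrow\Omega^c$ symmetry, and the identification of the cross terms as the only nontrivial point (resolved by the boundary-respecting construction of $S_\lambda$) constitute exactly the derivation one would expect to find in the cited reference for the $\Hc^{m,s}$ statements; this is the natural route and there is no gap.
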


\subsect{tlin}{Linear wave equations with a jump discontinuity}

The class of linear wave equations with a jump discontinuity that will be of interest to us are
\lalin{tlinIVP}{
\del{\mu}(A^{\mu\nu}\del{\nu} U) &=  F  +  \chi_\Omega H \quad \text{in $[0,T)\times \Tbb^n$,} \label{tlinIVP.1} \\
(U,\del{t} U)|_{t=0} &= (\Ut_0,\Ut_1)  \quad \text{ in $\{0\}\times\Tbb^n$,}
\label{tlinIVP.2}
}
where
\begin{enumerate}[(i)]
\item $U(t,x)$, $F(t,x)$ and $H(t,x)$ are vector valued in $\Rbb^N$ for some $N\in \Nbb$,
\item $\Omega$ is an open set in $\Tbb^n$ with smooth boundary,
\item $A^{\mu\nu}$ is of the form
\eqn{tlinIVPa}{
A^{\mu\nu} = \det(J)\Jch^\mu_\alpha a^{\alpha\beta} \Jch^{\nu}_\beta \qquad (a^{\alpha\beta}=a^{\beta\alpha}),
}
where
\eqn{Jdef}{
J = (J^\mu_\nu) := (\del{\nu}\Psi^\nu)
}
is the Jacobian matrix of a diffeomorphism
\eqn{Psidef}{
\Psi \: : \: [0,T)\times \Tbb^n \longrightarrow [0,T)\times \Tbb^n \: :\:
\xv \longrightarrow \Psi(\xv) = (\Psi^\nu(\xv))
}
and
\eqn{Jchdef}{
\Jch = J^{-1},
}
\item the initial data
\leqn{tlincompata}{
(\Ut_0,\Ut_1) \in \Hc^{2,s+1}(\Tbb^n)\times \Hc^{2,s}(\Tbb^n) \quad s\in \Zbb_{>n/2+1}
}
satisfies the \emph{compatibility conditions}
\leqn{tlincompat}{
\Ut_\ell := \del{t}^\ell U |_{t=0} \in \Hc^{m_{s+1-\ell},s+1-\ell}(\Tbb^n) \quad \ell=2,\ldots,s,
}
and,
\item there exist constants $\gamma,\kappa >0$  for which the matrix $a^{\mu\nu}$ and the Jacobian matrix $J^\mu_\nu$ satisfy
\lgath{tlinIVPb}{
\frac{1}{\gamma} |\xi|^2 \leq  a^{ij}(\xv)\xi_i \xi_j \leq \gamma |\xi|^2, \quad  a^{00}(\xv) \leq -\kappa, \label{tlinIVPb.1}\\
\frac{1}{\gamma} \leq \det{J(\xv)} \leq \gamma \AND  |J^\mu_\nu(\xv)| \leq \gamma \label{tlinIVPb.2}
}
for all $\xv \in [0,T)\times \Tbb^n$ and $\xi=(\xi_i)\in \Rbb^n$.
\end{enumerate}

\begin{thm} \label{tlinthm}
Suppose $n\geq 3$, $s\in \Zbb_{>n/2+1}$, $T>0$, $a^{\mu\nu}=a^{\nu\mu}\in \Xc_T^{s}(\Tbb^n)$, $\del{}a^{\mu\nu} \in \Xc_T^{s}(\Tbb^n)$,
 $F \in \Xc_T^{s}(\Tbb^n)$, $H \in Y^{s}_T(\Omega)$,
$(\Ut_0,\Ut_1) \in \Hc^{2,s+1}(\Tbb^n)\times \Hc^{2,s}(\Tbb^n)$
satisfy the compatibility conditions \eqref{tlincompat}, $\Psi^{\mu} \in \Xc_T^{s+1}(\Tbb^n)$,
$J^\mu_\nu=\del{\nu}\Psi^\mu \in \Xc_T^{s}(\Tbb^n)$, $DJ^\mu_\nu \in \Xc_T^{s-1}(\Tbb^n)$,
$a^{\mu\nu}$ and $J^\mu_\nu$ satisfy \eqref{tlinIVPb.1}-\eqref{tlinIVPb.2} for some $\kappa,\gamma > 0$,
and let
\gath{linGthm1}{
\alpha = \norm{J}_{\Xc_T^{s}(\Tbb^n)}+ \norm{DJ}_{\Xc_T^{s-1}(\Tbb^n)} +
\norm{a}_{\Xc_T^{s}(\Rbb)}+ \norm{Da}_{\Xc_T^{s-1}(\Tbb^n)}, \\
\sigma(t) = \bigl(1+\norm{\del{}a(t)}_{\Ec^s(\Tbb^n)}\bigr) \AND
\mu = \int_0^T \sigma(\tau)\, d\tau.
}
Then the IVP \eqref{tlinIVP.1}-\eqref{tlinIVP.2} has a unique solution
$U \in CX_T^{s+1}(\Tbb^n)$ that satisfies the energy estimate
\alin{linGthm2}{
\norm{U(t)}_{E^{s+1}(\Tbb^n)}  \leq C(\alpha,\mu,\gamma,\kappa)&
\biggl(\norm{U^\lambda(0)}_{E^{s+1}(\Tbb^n)}
+ \norm{F(0)}_{\Ec^{s-1}(\Tbb^n)}
   \\
+ \norm{H(0)}_{E^{s-1}(\Omega)}&
 +
\int_0^t \sigma(\tau)\norm{U^\lambda(\tau)}_{E^{s+1}(\Tbb^n)} + \norm{F(\tau)}_{\Ec^{s}(\Tbb^n)} +
\norm{H(\tau)}_{E^s(\Omega)} \, d\tau \biggr)
}
for $0\leq t < T$.
\end{thm}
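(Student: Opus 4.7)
The plan is to follow the three-step scheme of Theorem 3.4 in \cite{AnderssonOliynyk:2014}, making only the adjustments forced by the two structural differences in our setting: the coefficient matrix $A^{\mu\nu} = \det(J)\Jch^\mu_\alpha a^{\alpha\beta}\Jch^\nu_\beta$ arises by Jacobian conjugation from $a^{\mu\nu}$, and the singular source has the special form $\chi_\Omega H$ with $H$ smooth only on the matter side. The first step is a regularization: replace the source by $F + S_\lambda(\chi_\Omega H)$ and the initial data by mollifications $(\Ut_0^\lambda,\Ut_1^\lambda)$ produced by Proposition \ref{mollprop}. The identity $A^{\mu\nu}\eta_\mu\eta_\nu = \det(J)\, a^{\alpha\beta}(\Jch^\mu_\alpha\eta_\mu)(\Jch^\nu_\beta\eta_\nu)$, together with \eqref{tlinIVPb.1}--\eqref{tlinIVPb.2}, shows that $A^{\mu\nu}$ remains hyperbolic with constants depending only on $\gamma$ and $\kappa$, so the regularized IVP on $\Tbb^n$ (no physical boundary) admits a unique smooth solution $U^\lambda$ by classical hyperbolic theory.

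The second step, and the heart of the argument, is a $\lambda$-uniform energy estimate in the intersection norm $E^{s+1}(\Tbb^n)$. Since $\Omega$ is time-independent, $\del{t}$ commutes with $\chi_\Omega$, so differentiating the equation $\ell$ times in $t$ preserves the support structure of the source. I would multiply the resulting equation for $\del{t}^\ell U^\lambda$ by $\del{t}^{\ell+1} U^\lambda$ and integrate by parts \emph{separately} over $\Omega$ and $\Omega^c$; the interface contributions on $\del{}\Omega$ appear with opposite orientations from the two sides and cancel upon summation, producing an $H^{s+1-\ell}$-type control on each side. The global $H^{m_{s+1-\ell}}(\Tbb^n)$ control needed to fill out the norm of $\Hc^{m_{s+1-\ell},s+1-\ell}(\Tbb^n)$ is then recovered algebraically by solving the equation for the top-order time derivative, using the uniform coercivity $a^{00}\leq-\kappa$ to invert the principal coefficient; this is precisely what the definition of $m_j$ in \eqref{mEEdef} is designed to accommodate. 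The commutator and product estimates controlling the lower-order terms and the dependence on $J,\Jch,a^{\mu\nu}$ are the standard Moser/Gagliardo-Nirenberg bounds from \cite{AnderssonOliynyk:2014}, with the $\Psi$-dependent factors absorbed into $\alpha$ and $\sigma(t)$. The compatibility conditions \eqref{tlincompat} are used to bound $\|U^\lambda(0)\|_{E^{s+1}(\Tbb^n)}$ uniformly in $\lambda$ via the second part of Proposition \ref{mollprop}, since the $\ell$-th time derivative at $t=0$ is computed from the equation and must lie in $\Hc^{m_{s+1-\ell},s+1-\ell}(\Tbb^n)$ with bounds inherited by $S_\lambda$. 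A Gronwall argument closes the estimate.

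The third step is standard: extract a weak-$*$ limit $U$ of $\{U^\lambda\}$ in $X^{s+1}_T(\Tbb^n)$, upgrade to strong convergence at lower regularity via Aubin-Lions to verify that $U \in CX^{s+1}_T(\Tbb^n)$ and solves the original IVP, and inherit the energy estimate by lower semicontinuity. Uniqueness follows from the same estimate applied to the difference of two solutions, which satisfies a linear homogeneous equation with zero initial data. The \textbf{main obstacle} is the energy estimate of the second step: although \eqref{tlinIVP.1} lives globally on $\Tbb^n$, the solution is only $H^{s+1}$ on each of $\Omega$ and $\Omega^c$ separately and at most $H^2$ across $\del{}\Omega$, so one must split every integration by parts across the interface, reassemble the one-sided pieces, and use the equation itself to bridge back to a global norm. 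This split is exactly why the unusual index $m_j$ and the matching compatibility conditions \eqref{tlincompat} appear in the statement.
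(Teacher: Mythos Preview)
Your three-step outline (regularize, estimate uniformly, pass to the limit) matches the paper's, but the middle step---the uniform energy estimate---departs from the paper's argument in two substantive ways, one of which is a genuine gap.

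First, the paper does \emph{not} obtain the one-sided $H^{s+1-\ell}$ control by the second-order energy method you describe. Multiplying the $\ell$-times time-differentiated equation by $\del{t}^{\ell+1}U^\lambda$ yields only an $H^1$-level identity; it does not by itself produce $H^{s+1-\ell}$ on each side, and since the smoothed solution $U^\lambda$ is globally regular, splitting the integration by parts over $\Omega$ and $\Omega^c$ and reassembling is the same as integrating over $\Tbb^n$---the interface terms cancel trivially and buy nothing. What the paper actually does for $0\le \ell\le s-1$ is lift the \emph{elliptic} estimate (equation (3.46)) from the proof of Theorem 3.2 in \cite{AnderssonOliynyk:2014}: one views the wave equation at fixed time as an elliptic equation in space, with the time derivatives treated as data; the jump $\chi_\Omega H$ on the right-hand side then forces the one-sided regularity that builds the $\Hc^{m_{s+1-\ell},s+1-\ell}$ norm. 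Your proposal is silent on this mechanism.

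Second, and more seriously, at the top order $\ell=s$ the second-order wave-type energy estimate you propose fails by derivative counting. The commutator $[\del{t}^s,\del{\mu}(A^{\mu\nu}\del{\nu}\,\cdot\,)]$ contains the term $\del{\mu}(\del{t}^s A^{\mu\nu}\,\del{\nu} U^\lambda)$, and estimating it in $L^2$ requires $\del{t}^{s+1}A^{\mu\nu}$ and hence $\del{t}^{s+1}J$, which is one more time derivative of $J$ than the hypothesis $J\in\Xc^s_T(\Tbb^n)$ provides. The paper flags this explicitly and circumvents it by introducing the new variable $u_\beta=\Jch^\alpha_\beta\del{\alpha} U^\lambda$ and using the Piola identity $\del{\mu}(\det(J)\Jch^\mu_\nu)=0$ to recast the equation as a \emph{first-order} symmetric hyperbolic system whose coefficients $A^{\alpha\beta\mu}$ are algebraic in $J$ and $a$. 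The $L^2$ energy estimate on the $s$-times time-differentiated first-order system then closes with only $s$ time derivatives of $J$. One afterwards recovers $\del{t}^{s+1}U^\lambda$ and $D\del{t}^s U^\lambda$ from $u^s_\beta$ via an integrating-factor ODE argument, and combines with the elliptic estimate to obtain the full $E^{s+1}$ bound. This first-order reformulation is precisely the technical point that distinguishes the present proof from \cite{AnderssonOliynyk:2014}, and your proposal does not address it.
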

\begin{proof}
The proof of this theorem follows from a modification of the proof of Theorem 3.4
from \cite{AnderssonOliynyk:2014}. We will only highlight the essential changes.
We begin by noting that it is clear from the proofs of Theorems 3.2 and 3.4 from \cite{AnderssonOliynyk:2014} that, for the
purpose of establishing existence and uniqueness of solutions to the
IVP  \eqref{tlinIVP.1}-\eqref{tlinIVP.2},
it is sufficient to consider a 1-parameter family of IVPs on the n-torus $\Tbb^n \cong Q_1/\!\!\sim$ of the form:
\lalin{tplinIVP}{
\del{\mu}(A^{\mu\nu}\del{\nu} U^\lambda) -\phi U^\lambda &=  S_\lambda(F)  +  S_\lambda(\chi_{\Omega_1}H) \text{
\hspace{0.2cm} in $[0,T)\times \Tbb^n$,} \label{tplinIVP.1} \\
(U^\lambda,\del{t} U^\lambda)|_{t=0} &= (S_\lambda\Ut_0,S_\lambda\Ut_1)   \text{\hspace{1.175cm} in $\Tbb^n$,}
\label{tplinIVP.2}
}
where $\lambda \in (0,1]$,
\begin{enumerate}[(a)]
\item $S_\lambda$ is the smoothing operator from Proposition \ref{mollprop},
\item the coefficients $A^{\mu\nu}$ are given by
\eqn{tplinIVPa}{
A^{\mu\nu} = \det(J)\Jch^\mu_\alpha a^{\alpha\beta} \Jch^\nu_\beta,
}
where
\gath{tplinIVPb}{
J^\mu_\nu = \del{\nu}\Psi^\mu_\lambda, \qquad
\Jch = J^{-1}, \qquad
\Psi^\mu(\xv) = x^\mu + \ep S_{\lambda}(\psi^\mu)(\xv), \\
a^{\alpha\beta} = m^{\alpha\beta} + \ep S_\lambda(b^{\mu\nu})
\AND
(m^{\mu\nu}) = \diag(-1,1,1,1),
}
\item the coefficients $\psi^{\mu}$, $b^{\mu\nu}$ and the source terms $F$, $H$  lie
in the spaces:
\gath{tplinIVPc}{
\psi^{\mu},\, b^{\mu\nu}  \in \Xc_T^{s+1}(\Tbb^n),  \quad D\psi^{\mu},\, D b^{\mu\nu} \in \Xc_T^{s}(\Tbb^n),
\quad D^2 \psi^{\mu} \in \Xc_T^{s-1}(\Tbb^n)
\intertext{and}
F\in \Xc_T^{s}(\Tbb^n), \quad H\in X_T^s(\Omega_1),
}
respectively,
\item $\phi$ is a smooth non-negative function on $\Tbb^n$ satisfying $\phi|_{B_{\rho}(x_{\pm})} =1$ and $\phi_{\Tbb^n\setminus(B_{2\rho}(x_+)\cup B_{2\rho}(x_-))}
=0$, where $x_+\in \Omega_1$ and $x_-\in \Omega_1^c$ are fixed points and $\rho>0$ is
any number satisfying $B_{3\rho}(x_+)\subset \Omega_1$ and $B_{3\rho}(x_{-})\subset \Omega_1^c$,
\item and the initial data
\eqn{tpidata}{
(\Ut_0,\Ut_1)\in \Hc^{2,s+1}(\Tbb^n)\times \Hc^{2,s}(\Tbb^n)
}
satisfies the compatibility conditions
\eqn{tplincompat}{
\Ut_\ell := \del{t}^\ell U |_{t=0} \in \Hc^{m_{s+1-\ell},s+1-\ell}(\Tbb^n), \qquad \ell=2,\ldots,s,
}
where the formal time derivatives are generated by the evolution equation that
results from removing the smoothing operators $S_\lambda$ from \eqref{tplinIVP.1}.
\end{enumerate}

As is clear from the proof of Theorem 3.2 of \cite{AnderssonOliynyk:2014}, the key step is to derive
$\lambda$-independent energy estimates for a one-parameter family of solutions $U^\lambda$.
From this one-parameter family, a solution to the IVP \eqref{tplinIVP.1}-\eqref{tplinIVP.2}
is then obtained by sending $\lambda \searrow 0$
and extracting a weakly convergence subsequence that converges to a solution. Once this
step is completed, the rest of the proof follows from the same arguments used to
prove Theorems 3.2 and 3.4 of \cite{AnderssonOliynyk:2014}. We omit these details.

To proceed, we set
\eqn{tlinIVP5}{
R= \norm{\del{}\psi}_{\Xc_T^{s}(\Tbb)}+ \norm{D\del{}\psi}_{\Xc_T^{s-1}(\Tbb)} + \norm{b}_{\Xc_T^{s}(\Tbb)}+ \norm{Db}_{\Xc_T^{s-1}(\Tbb)}.
}
Choosing $\ep >0$ sufficiently small\footnote{Small enough so that $J$ and $a^{\mu\nu}_\lambda$ satisfy \eqref{tlinIVPb.1}-\eqref{tlinIVPb.2}
 uniformly for
$\lambda \in (0,1]$.} and applying a standard existence theorem for linear hyperbolic equations,
for example see see \cite[Ch. 16, Proposition 1.7]{TaylorIII:1996}, we obtain a 1-parameter family of (unique) solutions
\eqn{tlinIVP6}{
U^\lambda \in \bigcap_{\ell=0}^{s+100} C^\ell([0,T),H^{s+100-\ell}(\Tbb^n)),\quad 0 < \lambda \leq 1,
}
to \eqref{tplinIVP.1}-\eqref{tplinIVP.2}.
Following \cite{AnderssonOliynyk:2014}, we derive bounds on $U^\lambda$ by first using elliptic estimates to estimate the
first $s-1$ time derivatives of $U^\lambda$, and then estimating the remaining $s$ and $s+1$ time derivatives using a hyperbolic
estimate.

\bigskip

\noindent \emph{Elliptic estimates:} We begin the derivation of the elliptic estimates by defining
\eqn{tlinIVP7}{
B^{\mu\nu} = \frac{A^{\mu\nu}- m^{\mu\nu}}{\ep}.
}
From the analyticity of the matrix inversion map $\text{Inv}(M) = M^{-1}$, it is not difficult to verify that the map
\eqn{tlinIVP8}{
(S_\lambda(b^{\mu\nu}),\del{\nu} S_\lambda(\psi^\mu),\ep) \longmapsto \frac{A^{\mu\nu}- m^{\mu\nu}}{\ep}
}
is analytic in a neighborhood of $(0,0,0)$. From this, the assumption $s>n/2+1$ and Proposition \ref{fpropB}, we then conclude that
\leqn{tlinIVP9}{
\norm{B}_{\Xc_T^s} \leq C(R)
}
for $\ep > 0$ small enough.
Differentiating $B^{\mu\nu}$, we also see that
\leqn{tlinIVP10}{
DB^{\mu\nu} = L^{\mu\nu}\cdot (S_\lambda(Db^{\mu\nu}),\del{\nu} S_\lambda(D\psi^\mu)),
}
where $L^{mu\nu}$ is a linear map that depends analytically on $(S_\lambda(b^{\mu\nu}),\del{\nu} S_\lambda(\psi^\mu),\ep)$.
Since $s-1>n/2$, we conclude directly from Proposition \ref{fpropB} that\footnote{It is worth noting that a more careful estimate,
obtained by differentiating \eqref{tlinIVP10} repeatedly in time followed by applying Theorem \ref{calcpropB}.(ii)
to the product terms and Proposition \ref{fpropB} to the terms $\del{t}^r L^{\mu\nu}$, shows that estimate \eqref{tlinIVP10} continues
to hold for $s>n/2$.}
\leqn{tlinIVP11}{
\norm{DB}_{\Xc_T^{s-1}} \leq C(R).
}

Due to the bounds \eqref{tlinIVP10} and \eqref{tlinIVP11} and the form of the evolution equations \eqref{tplinIVP.1}-\eqref{tplinIVP.2},
we can apply directly the elliptic estimates derived in the proof of Theorem 3.2 from \cite{AnderssonOliynyk:2014} (see in particular, equation (3.46)
from that article)
to conclude that $U^\lambda$ satisfies an estimate of the form
\alin{tlinIVP12}{
\norm{U^\lambda(t)}_{E^{s+1,s-1}(\Tbb^n)}& \leq \frac{c_L}{1-\ep c_L C(R)}\Bigl(\ep C(R)\bigl(\norm{U^\lambda(t)}_{E^{s+1,s-1}(\Tbb^n)} \\
&\hspace{1.cm} + \norm{U^\lambda_s(t)}_{E(\Tbb^n)} \bigr)    + \norm{F(t)}_{\Ec^{s-1}(\Tbb^n)}+ \norm{H(t)}_{E^{s-1}(\Omega_1)}
\Bigr),\quad 0\leq t < T,
}
for some constant $c_L$ independent of $\ep$ and $\lambda$, where here and below, we employ the notation
\eqn{tlinIVP13}{
U_\ell^\lambda := \del{t}^\ell U, \quad \ell \in \Zbb_{\geq 0}.
}
Choosing $\ep>0$ small enough, the above estimate yields
\leqn{tlinIVP14}{
\norm{U^\lambda(t)}_{E^{s+1,s-1}(\Tbb^n)} \leq 2 c_L\bigl(\norm{U^\lambda_s(t)}_{E(\Tbb^n)} + \norm{F(t)}_{\Ec^{s-1}(\Tbb^n)}+ \norm{H(t)}_{E^{s-1}(\Omega_1)}
\bigr),\quad 0\leq t < T.
}

\bigskip

\noindent \emph{Hyperbolic estimates:} Up to this point, we have taken over, essentially unchanged, our arguments
from \cite{AnderssonOliynyk:2014}. As in \cite{AnderssonOliynyk:2014}, we again use a hyperbolic
estimate to estimate the top time derivative. However, unlike \cite{AnderssonOliynyk:2014}, we cannot
use wave type estimate for $\del{t}^s U^\lambda$ as this would require estimating $s+1$ time
derivatives of $J$, which is one too many our purposes. Instead, we proceed by introducing a new
variable
\leqn{tlinIVP15}{
u_\beta = \Jch^\alpha_\beta \del{\alpha} U^\lambda
}
in order to recast the wave equation \eqref{tplinIVP.1} in first order form. Using the well known identity
\eqn{tlinIVP16}{
\del{\mu}(\det(J) \Jch^\mu_\nu) = 0
}
satisfied by Jacobian matrices, a short calculation shows that $u_\beta$ satisfies the symmetric
hyperbolic equation
\leqn{tlinIVP17}{
A^{\alpha\beta\mu}\del{\mu} u_\beta = \delta^\alpha_0\bigl[\det(J)\Jch^\mu_\gamma\del{\mu}
a^{\beta\gamma} u_\beta -\phi U^\lambda-S_\lambda(F)-S_\lambda(\chi_{\Omega_1}H)\bigr],
}
where
\eqn{tlinIVP18}{
A^{\alpha\beta\mu} = \det(J)\bigl(-\delta^\alpha_0 a^{\beta\gamma} -
\delta^\beta_0 a^{\alpha\gamma} + \delta^\gamma_0 a^{\alpha\beta}\bigr)\Jch^\mu_\gamma.
}

Differentiating \eqref{tlinIVP17} $s$-times with respect to $t$, we see that
\eqn{tlinIVP19}{
u^s_\beta = \del{t}^s u_\beta
}
satisfies the equation
\leqn{tlinIVP20}{
A^{\alpha\beta\mu}\del{\mu} u_\beta^s = \delta^\alpha_0 Y^\beta_0 u_\beta^s
+\sum_{\ell=0}^{s-1} \binom{s}{\ell}\bigl[-A^{\alpha\beta\mu}_{s-\ell} \del{\mu} u_\beta^\ell
+ \delta^\alpha_0 Y^\beta_{s-\ell}u^\ell_\beta\bigr]-\delta^\alpha_0\bigl[\del{t}^s(\phi U^\lambda)+S_\lambda(\del{t}^s F)+
S_\lambda(\chi_{\Omega_1} \del{t}^s H)\bigr],
}
where
\gath{tlinIVP21}{
u_\beta^\ell = \del{t}^\ell u_\beta, \qquad A^{\alpha\beta\gamma}_\ell = \del{t}^\ell A^{\alpha\beta\gamma},
\intertext{and}
Y^\beta_\ell = \del{t}^\ell\bigl(\det(J) \Jch^\mu_\gamma \del{\mu} a^{\beta\gamma}\bigr)
=\sum_{r=0}^\ell \binom{\ell}{r} \del{t}^{\ell-r}\bigl(\det(J) \Jch^\mu_\gamma)\del{t}^r \del{\mu}a^{\beta\gamma}.
}
Since \eqref{tlinIVP20} is symmetric hyperbolic, energy estimates, see
\cite[Ch. 16, \S 1]{TaylorIII:1996}, imply
that
\leqn{tlinIVP22}{
\del{t} \norm{u^s(t)}_{L^2(\Tbb^n)} \lesssim \bigl(1 + \norm{\del{\gamma}A^{\alpha\beta\gamma}(t)}_{L^\infty(\Tbb^n)}
+\norm{A^{\alpha\beta i}(t)}_{L^\infty(\Tbb^n)}\bigr)\norm{u^s(t)}_{L^2(\Tbb^n)} + \norm{\Fc(t)}_{L^2(\Tbb^n)},
}
where
\eqn{tlinIVP23}{
\Fc^\alpha =  \delta^\alpha_0 Y^\beta_0 u_\beta^s
+\sum_{\ell=0}^{s-1} \binom{s}{\ell}\bigl[-A^{\alpha\beta\mu}_{s-\ell} \del{\mu} u_\beta^\ell
+ \delta^\alpha_0 Y^\beta_{s-\ell}u^\ell_\beta\bigr]-\delta^\alpha_0\bigl[\del{t}^s(\phi U^\lambda) + S_\lambda(\del{t}^s F)+
S_\lambda(\chi_{\Omega_1} \del{t}^s H)\bigr].
}

In order to proceed from the energy estimate \eqref{tlinIVP22} to an effective bound on $\del{t}^s U^\lambda$, we
must first estimate the terms on the right hand side of \eqref{tlinIVP22}. We begin this
process by noting
that the estimate
\leqn{tlinIVP24}{
\norm{J}_{L^\infty(\Tbb^n)} + \norm{\Jch}_{L^\infty(\Tbb^n)} + \norm{\del{t}^\ell \Jch}_{\Hc^{0,s-\ell}(\Tbb^n)} \leq C(R), \quad 0\leq \ell \leq s,
}
follows directly from the analyticity of $\text{Inv}$ in
the neighborhood of the identity, Sobolev's inequality\footnote{$\norm{u}_{L^\infty(\Tbb^n)}\leq \max\{\norm{u}_{L^\infty(\Omega_1)},\norm{u}_{L^\infty(\Omega_1^c)}\}
\lesssim \max\{\norm{u}_{H^k(\Omega_1)},\norm{u}_{H^k(\Omega_1^c)}\} \lesssim \norm{u}_{\Hc^{0,k}}$ for $k>n/2$.},
and Propositions \ref{mollprop} and \ref{fpropB}.
Next, we differentiate $\Jch$ to get
\eqn{tlinIVP8}{
\del{\mu}\Jch = D\text{Inv}(J)\cdot \del{\mu}J.
}
Observing that $\del{\mu}J^\alpha_\beta = \ep\del{\mu}\del{\alpha} J_\lambda \psi^\beta$,
we obtain, using similar reasoning, the related estimates
\leqn{tlinIVP25}{
\norm{\del{t}\Jch}_{L^\infty(\Tbb^n)} + \norm{\del{t}^\ell \del{t}\Jch}_{\Hc^{0,s-1-\ell}(\Tbb^n)} +
\norm{D\Jch}_{L^\infty(\Tbb^n)}+\norm{\del{t}^\ell D\Jch}_{\Hc^{0,s-1-\ell}(\Tbb^n)} \leq C(R)\ep, \quad 0\leq \ell \leq s-1.
}
From \eqref{tlinIVP24} and \eqref{tlinIVP25} and Sobolev's inequality, we find that the estimate
\leqn{tlinIVP26}{
\bigl(1 + \norm{\del{\gamma}A^{\alpha\beta\gamma}(t)}_{L^\infty(\Tbb^n)}
+\norm{A^{\alpha\beta i}(t)}_{L^\infty(\Tbb^n)}\bigr) \leq C(R), \quad 0\leq t < T,
}
holds.

Continuing on, it is clear that the estimate
\eqn{tlinIVP27}{
\norm{Y_{s-\ell}(t)}_{\Hc^{0,\ell}(\Tbb^n)} \lesssim \norm{\det(J(t))\Jch(t)}_{\Ec^{s}(\Tbb^n)}
\norm{\del{}a(t)}_{\Ec^{s}(\Tbb^n)} \leq \ep C(R)\norm{\del{}b(t)}_{\Ec^s(\Tbb^n)}, \quad 0\leq \ell \leq s, \; 0\leq t < T,
}
follows directly from \eqref{tlinIVP25} and Theorem \ref{calcpropB}.(ii). This estimate followed
by another application of
Theorem \ref{calcpropB}.(ii) yields
\leqn{tlinIVP28}{
\norm{Y^\beta_{s-\ell}(t)u^\ell_\beta(t)}_{L^2(\Tbb^n)}
\lesssim \norm{Y_{s-\ell}(t)}_{\Hc^{0,\ell}(\Tbb^n)} \norm{u^\ell(t)}_{\Hc^{0,s-\ell}(\Tbb^n)} \leq \ep C(R)
\norm{\del{}b(t)}_{\Ec^s(\Tbb^n)}
\norm{u(t)}_{\Ec^s(\Tbb^n)}
}
for $0\leq \ell \leq s$ and $0\leq t < T$, while similar arguments show that
\leqn{tlinIVP29}{
\norm{A^{\alpha\beta\gamma}_{s-\ell}(t)\del{\gamma}u^\ell_\beta(t)}_{L^2(\Tbb^n)}
\leq C(R)\bigl(\norm{u(t)}_{\Ec^s(\Tbb^n)}+\norm{Du(t)}_{\Ec^{s-1}(\Tbb^n)}\bigr), \quad 0\leq \ell \leq s-1, \;\; 0\leq t < T.
}

Taken together, the estimates \eqref{tlinIVP22}, \eqref{tlinIVP26}, \eqref{tlinIVP28} and \eqref{tlinIVP29}
show, with the help of Proposition \ref{mollprop}, that $u^s_\beta$ satisfies the estimate
\eqn{tlinIVP30}{
\del{t} \norm{u^s(t)}_{L^2(\Tbb^n)} \lesssim C(R)\bigl(\beta(t)\norm{u(t)}_{\Ec^s(\Tbb^n)}+\norm{Du(t)}_{\Ec^{s-1}(\Tbb^n)}
+ \norm{F(t)}_{\Ec^s(\Tbb^n)} + \norm{H(t)}_{E^s(\Omega_1)} \bigr), \quad 0\leq t < T,
}
where
\eqn{betadef}{
\beta(t) = \bigl(1+\norm{\del{}b(t)}_{\Ec^s(\Tbb^n)}\bigr).
}
Integrating the above inequality in time, we find that
\lalin{tlinIVP31a}{
\norm{u^s(t)}_{L^2(\Tbb^n)}  \leq &\norm{u^s(0)}_{L^2(\Tbb^n)} + C(R)\int_0^t \beta(\tau) \norm{u(\tau)}_{\Ec^s(\Tbb^n)}
\notag \\
&\qquad + \norm{Du(\tau)}_{\Ec^{s-1}(\Tbb^n)}
+ \norm{F(\tau)}_{\Ec^s(\Tbb^n)} + \norm{H(\tau)}_{E^s(\Omega_1)}\, d\tau,\quad 0\leq t < T. \label{tlinIVP31}
}

Next, differentiating \eqref{tlinIVP15} gives
\leqn{tlinIVP32}{
u^\ell_\beta = \sum_{r=0}^\ell \binom{\ell}{r}\bigl[\del{t}^{\ell-r}\Jch^0_\beta
U^\lambda_{r+1}+ \del{t}^{\ell-r} \Jch^i_\beta \del{i} U^\lambda_r\bigr].
}
Using \eqref{tlinIVP24} and Theorem \ref{calcpropB}.(ii), we estimate \eqref{tlinIVP32} by
\eqn{tlinIVP33}{
\norm{u^\ell(t)}_{\Hc^{0,s-\ell}(\Tbb^n)} \leq C(R)\sum_{a=0}^1\norm{D^a U^\lambda(t)}_{\Ec^{0,s+1-a}(\Tbb^n)},
\quad 0\leq t < T, \quad  0\leq \ell \leq s.
}
Differentiating \eqref{tlinIVP32} and arguing in a similar fashion, we also get
\eqn{tlinIVP34}{
\norm{Du^\ell(t)}_{\Hc^{0,s-1-\ell}(\Tbb^n)} \leq  C(R)\sum_{a=0}^2\norm{D^a U^\lambda(t)}_{\Ec^{0,s+1-a}(\Tbb^n)},
\quad 0\leq t < T, \quad 0 \leq \ell \leq s-1.
}
Combining these two estimates yields
\leqn{tlinIVP35}{
\norm{u(t)}_{\Ec^s(\Tbb^n)} + \norm{Du(t)}_{\Ec^{s-1}(\Tbb^n)} \leq C(R)
\norm{U^\lambda(t)}_{E^{s+1}(\Tbb^n)}, \quad 0\leq t <T.
}

Setting $\ell=s$ in \eqref{tlinIVP32}, we find, after solving for $U^\lambda_{s+1}$ and
$\del{i} U^\lambda_s$, that
\lalin{tlinIVP36}{
\del{t}U^\lambda_s &= -s J^\beta_0 \del{t}\Jch^0_\beta U^\lambda_s +
J^\beta_0 u^s_\beta + \Rc_0 \label{tlinIVP36.1}
\intertext{and}
\del{i} U^\lambda_s &= -s J^\beta_i \del{t} \Jch^0_\beta U^\lambda_s
+ J^\beta_i u^s_\beta + \Rc_i, \label{tlinIVP36.2}
}
where
\eqn{tlinIVP37}{
\Rc_\omega = -s J^\beta_\omega \del{t}\Jch^i_\beta \del{i} U^\lambda_{s-1}
+ \sum_{r=0}^{s-2} \binom{s}{r}\bigl[J^\beta_\omega \del{t}^{s-r}\Jch^0_\beta U^\lambda_{r+1}
+ J_\omega^\beta \del{t}^{s-r} \Jch^{i}_\beta \del{i} U^\lambda_r \bigr].
}

Continuing on, we estimate $\Rc_\omega$ by
\leqn{tlinIVP38}{
\norm{\Rc(t)}_{L^2(\Tbb^n)} \leq \ep C(R)\bigl(\norm{U^\lambda(t)}_{\Ec^{s+1,s-2}(\Tbb^n)} + \norm{D U^\lambda(t)}_{\Ec^{s,s-1}(\Tbb^n)}\bigr) \leq \ep C(R)
\norm{U^\lambda(t)}_{E^{s+1,s-1}(\Tbb^n)}
}
for $0\leq t < T$
using \eqref{tlinIVP24},\eqref{tlinIVP26} and
Theorem \ref{calcpropB}.(ii). Writing \eqref{tlinIVP36.1} as
\eqn{tlinIVP39}{
\del{t}\bigl(e^\omega U^\lambda_s\bigr) = e^{\omega}\bigl[J^\beta_0 u^s_\beta + \Rc_0\bigr]
\quad \text{with} \quad
\omega(t) = s\int_0^s J^\beta_0(\tau)\del{t} \Jch^0_\beta(\tau)\, d\tau,
}
allows us, after integrating in time, to express $U^\lambda_s$ as
\eqn{tlinIVP40}{
U^\lambda_s(t) = U^\lambda_s(0) + e^{-\omega(t)}\int_0^t
e^{\omega(\tau)}\bigl[J^\beta_0(\tau) u^s_\beta(\tau)+ \Rc_0(\tau)\bigr] \, d\tau,
}
which we can, with the help of \eqref{tlinIVP24}, \eqref{tlinIVP35} and \eqref{tlinIVP38}, estimate by
\eqn{tlinIVP41}{
\norm{U^\lambda_s(t)}_{L^2(\Tbb^n)} \leq \norm{U^\lambda_s(0)}_{L^2(\Tbb^n)}
+ C(R)\int_0^t \norm{U^\lambda(\tau)}_{E^{s+1}(\Tbb^n)}\, d\tau, \quad 0\leq t < T.
}
Using this estimate together with \eqref{tlinIVP24}, \eqref{tlinIVP25} and
\eqref{tlinIVP38}, we see from \eqref{tlinIVP36.1} and \eqref{tlinIVP36.2} that
\lalin{tlinIVP42a}{
\norm{U^\lambda_s(t)}_{E(\Tbb^n)} \leq C(R)\biggl(&\norm{U^\lambda_s(0)}_{L^2(\Tbb^n)}
+ \ep \norm{U^\lambda(t)}_{E^{s+1,s-1}(\Tbb^n)}\notag \\
&\hspace{2.0cm} + \norm{u^s(t)}_{L^2(\Tbb^n)} +
\int_0^t \norm{U^\lambda(\tau)}_{E^{s+1}(\Tbb^n)}\, d\tau \biggr), \quad 0\leq t < T. 
\notag
}
From this estimate and \eqref{tlinIVP14}, it follows, after choosing $\ep>0$ small enough, that
\alin{tlinIVP42}{
\norm{U^\lambda_s(t)}_{E(\Tbb^n)} \leq C(R)&\biggl(\norm{U^\lambda_s(0)}_{L^2(\Tbb^n)}
+ \norm{F(t)}_{\Ec^{s-1}(\Tbb^n)} \\ &+ \norm{H(t)}_{E^{s-1}(\Omega_1)}
 + \norm{u^s(t)}_{L^2(\Tbb^n)} +
\int_0^t \norm{U^\lambda(\tau)}_{E^{s+1}(\Tbb^n)}\, d\tau \biggr),
}
which in turn, implies, again with the help of \eqref{tlinIVP14}, that
\lalin{tlinIVP43}{
\norm{U^\lambda(t)}_{E^{s+1}(\Tbb^n)} \leq C(R)&\biggl(\norm{U^\lambda_s(0)}_{L^2(\Tbb^n)}
+ \norm{F(t)}_{\Ec^{s-1}(\Tbb^n)} \notag \\
&+ \norm{H(t)}_{E^{s-1}(\Omega_1)} + \norm{u^s(t)}_{L^2(\Tbb^n)} +
\int_0^t \norm{U^\lambda(\tau)}_{E^{s+1}(\Tbb^n)}\, d\tau \biggr) \label{tlinIVP43.1}
}
for $0\leq t < T$. Writing
\eqn{tlinIVP44}{
F(t) = F(0) + \int_0^t \del{t}F(\tau)d\tau \AND H(t) = H(0) + \int_0^t \del{t}H(\tau)d\tau,
}
it then follows directly from \eqref{tlinIVP43.1} that
\alin{tlinIVP43}{
\norm{U^\lambda(t)}_{E^{s+1}(\Tbb^n)} \leq C(R)&\biggl(\norm{U^\lambda_s(0)}_{L^2(\Tbb^n)}
+ \norm{F(0)}_{\Ec^{s-1}(\Tbb^n)}
+ \norm{H(0)}_{E^{s-1}(\Omega_1)} \\
&+ \norm{u^s(t)}_{L^2(\Tbb^n)} +
\int_0^t \norm{U^\lambda(\tau)}_{E^{s+1}(\Tbb^n)} + \norm{F(\tau)}_{\Ec^{s}(\Tbb^n)} +
\norm{H(\tau)}_{E^s(\Omega_1)} \, d\tau \biggr)
}
for $0\leq t < T$. The above estimate in conjunction with \eqref{tlinIVP31} and \eqref{tlinIVP35} then
yields
\alin{tlinIVP43}{
\norm{U^\lambda(t)}_{E^{s+1}(\Tbb^n)} \leq C(R)&\biggl(\norm{U^\lambda(0)}_{E^{s+1}(\Tbb^n)}
+ \norm{F(0)}_{\Ec^{s-1}(\Tbb^n)}
+ \norm{H(0)}_{E^{s-1}(\Omega_1)} \\
& +
\int_0^t \beta(\tau)\norm{U^\lambda(\tau)}_{E^{s+1}(\Tbb^n)} + \norm{F(\tau)}_{\Ec^{s}(\Tbb^n)} +
\norm{H(\tau)}_{E^s(\Omega_1)} \, d\tau \biggr)
}
for $0\leq t < T$, independent of $\lambda \in (0,1]$. Applying Gronwall's inequality, we obtain
the key estimate

\lalin{tlinIVP44}{
\norm{U^\lambda(t)}_{E^{s+1}(\Tbb^n)}  \leq C(R,c_L)&e^{C(R,c_L)\int_0^t\beta(\tau)\,d\tau}
\biggl(\norm{U^\lambda(0)}_{E^{s+1}(\Tbb^n)}
+ \norm{F(0)}_{\Ec^{s-1}(\Tbb^n)} \notag \\
+ \norm{H(0)}_{E^{s-1}(\Omega_1)}&
 +
\int_0^t \beta(\tau)\norm{U^\lambda(\tau)}_{E^{s+1}(\Tbb^n)} + \norm{F(\tau)}_{\Ec^{s}(\Tbb^n)} +
\norm{H(\tau)}_{E^s(\Omega_1)} \, d\tau \biggr) . \label{tlinIVP44.1}
}

From this point, we can proceed exactly as in the proof of \cite[Theorem 3.2]{AnderssonOliynyk:2014}
and send $\lambda \searrow 0$, and obtain from the estimate \eqref{tlinIVP44.1} a unique solution $U\in CX^{s+1}_T(\Tbb^n)$
to the IVP \eqref{tplinIVP.1}-\eqref{tplinIVP.2}, with $\lambda =0$, that satisfies the energy estimate
\alin{tlinIVP45}{
\norm{U(t)}_{E^{s+1}(\Tbb^n)}  \leq C(R,c_L)&e^{C(R,c_L)\int_0^t\beta(\tau)\,d\tau}
\biggl(\norm{U^\lambda(0)}_{E^{s+1}(\Tbb^n)}
+ \norm{F(0)}_{\Ec^{s-1}(\Tbb^n)}
   \\
+ \norm{H(0)}_{E^{s-1}(\Omega_1)}&
 +
\int_0^t \beta(\tau)\norm{U^\lambda(\tau)}_{E^{s+1}(\Tbb^n)} + \norm{F(\tau)}_{\Ec^{s}(\Tbb^n)} +
\norm{H(\tau)}_{E^s(\Omega_1)} \, d\tau \biggr).
}
We can then argue as in the proof of \cite[Theorem 3.4]{AnderssonOliynyk:2014} to
obtain the existence of a unique solution $U\in CX^{s+1}_T(\Tbb^n)$ to the IVP consisting
of \eqref{tlinIVP.1}, \eqref{tlinIVP.2} and \eqref{tlincompat} that satisfies an energy estimate
of the form
\alin{tlinIVP45}{
\norm{U(t)}_{E^{s+1}(\Tbb^n)}  \leq C(\alpha,\mu)&
\biggl(\norm{U^\lambda(0)}_{E^{s+1}(\Tbb^n)}
+ \norm{F(0)}_{\Ec^{s-1}(\Tbb^n)}
   \\
+ \norm{H(0)}_{E^{s-1}(\Omega)}&
 +
\int_0^t \sigma(\tau)\norm{U^\lambda(\tau)}_{E^{s+1}(\Tbb^n)} + \norm{F(\tau)}_{\Ec^{s}(\Tbb^n)} +
\norm{H(\tau)}_{E^s(\Omega)} \, d\tau \biggr),
}
where
\gath{tlinIVP46}{
\alpha = \norm{\del{}\Psi}_{\Xc_T^{s}(\Tbb^n)}+ \norm{D\del{}\Psi}_{\Xc_T^{s-1}(\Tbb^n)} +
\norm{a}_{\Xc_T^{s}(\Rbb)}+ \norm{Da}_{\Xc_T^{s-1}(\Tbb^n)}, \\
\sigma(t) = \bigl(1+\norm{\del{}a(t)}_{\Ec^s(\Tbb^n)}\bigr), \qquad
\mu = \int_0^T \sigma(\tau)\, d\tau,
}
and $\gamma,\kappa$ are as defined previously in \eqref{tlinIVPb.1}-\eqref{tlinIVPb.2}.
\end{proof}

\subsect{divwave}{Wave equations in divergence form on $\Omega$}

The class of non-linear wave equations in divergence form on bounded domains that we be of interest to us are:
\lalin{koch1}{
\del{\alpha}\bigl(L^\alpha_\Ac(\xv,v,u,\del{} u)\bigr) &= w_\Ac(\xv,v,u,\del{}u) \hspace{0.3cm} \text{in $[0,T)\times \Omega$}, \label{koch1.1} \\
\nu_\alpha L^\alpha(\xv,v,u,\del{} v\bigr) & = 0 \hspace{2.35cm} \text{in $[0,T)\times \del{}\Omega$}, \label{koch1.2} \\
(u,\del{0}u) &= (u_0,u_1) \hspace{1.375cm} \text{in $\{0\}\times \Omega$}, \label{koch1.3}
}
where
\begin{enumerate}[(i)]
\item $\Omega$ is a bounded, open set in $\Tbb^n$ with $C^\infty$ boundary,
\item $\nu_\alpha = \delta_\alpha^i\nu_i$, where $\nu_i$ is the outward pointing unit conormal to $\del{}\Omega$,
\item the calligraphic indices, e.g. $\Ac$, $\Bc$, $\Cc$, run from $1$ to $N$,
\item $u=(u^\Ac(\xv))$ and $v=(v^\Ac(\xv))$ are $\Rbb^N$-valued maps,
\item the functions $L^\alpha_\Ac(\xv,v,u,\del{} u)$ and $w_\Ac(\xv,v,u,\del{}u)$ are smooth for
\eqn{koch2}{
(\xv,v,u,\del{}u) \in \Omega \times \Uct \times \Vct \times \Wct
}
for some open sets $\Uct \in \Rbb^N$, $\Vct \in \Rbb^N$ and $\Wct \in \Rbb^{(n+1)\times N}$,
\item the derivatives
\eqn{koch3}{
L^{\alpha\beta}_{\Ac\Bc}(\xv,v,u,\del{}u) :=
\frac{\del{}L^\beta_\Bc(\xv,v,u,\del{}u)}{\del{}(\del{\alpha} u^\Ac)\;}
}
satisfy the symmetry condition
\eqn{koch4}{
L^{\alpha\beta}_{\Ac\Bc} = L^{\beta\alpha}_{\Bc\Ac},
}
and there exists open sets
\eqn{koch5}{
\Uc \subset \overline{\Uc} \subset \Uct,\quad \Vc \subset \overline{\Vc} \subset \Vct \AND
\Wc \subset \overline{\Wc} \subset \Wct,
}
and a $\kappa_0>0$ such that
\eqn{koch6}{
\xi^i L^{00}_{ij}(\xv,v,u,\del{}u) \xi^j \leq  -\kappa_0 |\xi|^2
}
for all $(\xv,v,u,\del{}u,\xi)$
$\in$ $[0,T)$ $\times$ $\Omega$ $\times$ $\Uc$ $\times$ $\Vc\times$ $\Wc$ $\times$ $\Rbb^n$,
\item there exist constants $\kappa_1>0$, $\gamma \in \Rbb$ such that $L^{ij} = (L^{ij}_{\Ac\Bc})$ satisfies the
coercivity condition:
\eqn{koch7}{
\ip{\del{i}\zeta}{L^{ij}((t,\cdot),u(t),v(t),\del{}v(t))\del{j} \zeta}_{L^2(\Omega)}
\geq \kappa_1 \norm{\zeta}^2_{H^1(\Omega)} - \gamma \norm{\zeta}^2_{L^2(\Omega)}
}
for each $t\in [0,T)$, $\zeta\in C^1(\overline{\Omega})$, and $u\in C^0([0,T]\times \overline{\Omega})$ and $v\in C^1([0,T]\times \overline{\Omega})$ satisfying
\eqn{koch8}{
(u(t,x),v(t,x),\del{}v(t,x))\in \Uc\times\Vc\times \Wc, \quad \forall\; x\in \Omega,
}
\item  the initial data
\eqn{koch9}{
(u_0,u_1)\in H^{s+1}(\Omega)\times H^s(\Omega), \quad  s\in\Zbb{>n/2+1},
}
satisfies
\eqn{koch10}{
\bigl(u_0(x),(u_1(x),Du_0(x))\bigr) \in \Vc\times \Wc, \quad \forall\; x\in \Omega,
}
and the higher time derivatives $\del{t}^\ell u|_{t=0}$ generated from this initial data
through
formally differentiating the evolution equation \eqref{koch1.1}-\eqref{koch1.2}, satisfy the
\emph{compatibility conditions}
\eqn{koch11}{
\del{t}^\ell \!\bigl(\nu_\alpha L^\alpha_\Ac(\xv,v,u,\del{} u)\bigr)\bigr|_{t=0} \in H^{s-\ell}(\Omega)\cap H^1_0(\Omega),
\quad \ell=1,2,\ldots,s-1,
}
and
\item $v\in CY_T^{s+1}(\Omega)$ and $v(\xv)\in \Uc$ for all $\xv \in [0,T)\times \Omega$.
\end{enumerate}

The following local existence and uniqueness theorem for solutions to the IBVP \eqref{koch1.1}-\eqref{koch1.3}
follows directly from the arguments used in the proof of Theorem 1.1 from \cite{Koch:1993}.
\begin{thm}\label{kochthm}
Under the assumptions (i)-(ix) above, there exists a
\eqn{kochthm2}{
\delta_*=\delta_*\biggl(\norm{u(0)}_{E^{s+1}(\Omega)},\norm{v(0)}_{E^{s+1}(\Omega)}, \int_0^T\norm{v(\tau)}_{E^{s+1}(\Omega)}\,d\tau\biggr) \in \biggl(\frac{1}{T},\infty\biggr)
}
and a unique
solution $u\in CY^{s+1}_{T_*}(\Omega)$ to the IBVP \eqref{koch1.1}-\eqref{koch1.3}, where $T_*=\frac{1}{\delta_*}$. Moreover,
this solutions satisfies the energy estimate
\eqn{kochthm1}{
\norm{u(t)}_{E^{s+1}(\Omega)} \leq C\bigl(\kappa_0,\kappa_1,
\gamma,\norm{u(0)}_{E^{s+1}(\Omega)},\norm{v(0)}_{E^{s+1}(\Omega)},\mu(t)\bigr)
}
for $0\leq t < T_*$, where
\eqn{kochthm2}{
\mu(t) = \int_0^t \norm{u(\tau)}_{E^{s+1}(\Omega)}+
\norm{v(\tau)}_{E^{s+1}(\Omega)}\, d\tau.
}

\end{thm}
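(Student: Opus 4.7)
I would construct the solution by Picard iteration adapted to the Koch framework. Initialize $u^{(0)}$ as the degree-$s$ Taylor polynomial in $t$ whose coefficients are the formal time derivatives $\del{t}^\ell u|_{t=0}$ generated from $(u_0,u_1)$ via the evolution equation \eqref{koch1.1}, so that $u^{(0)}$ realizes the compatibility conditions of assumption (viii). At each stage, define $u^{(k+1)}$ to be the unique solution of the linear IBVP obtained by freezing the quasilinear coefficients $L^{\alpha\beta}_{\Ac\Bc}(\xv,v,u^{(k)},\del{}u^{(k)})$ in \eqref{koch1.1}, with linearized Neumann boundary condition and initial data $(u_0,u_1)$. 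This linear problem inherits symmetry from \eqref{koch4}, the sign of $L^{00}$ from (vi), and the spatial coercivity from (vii), provided $u^{(k)}$ stays in the region $\Uc \times \Vc \times \Wc$.

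For the linear step, a weak solution in $CY^{1}_{T}(\Omega)$ with an energy estimate controlling $\|u^{(k+1)}\|_{E(\Omega)}$ is produced by Galerkin--Lions methods for symmetric hyperbolic systems with coercive Neumann boundary conditions, in the spirit of Koch's Theorem 2.2. To upgrade to $E^{s+1}(\Omega)$, differentiate the linearized equation $\ell = 0,\ldots,s$ times in $t$; the equation for $\del{t}^\ell u^{(k+1)}$ is again of divergence type with source and boundary source built from commutators, and the compatibility conditions of (viii) applied to $u^{(k)}$ guarantee that at $t=0$ these boundary sources lie in $H^1_0(\Omega)$. Elliptic regularity applied at fixed time to the spatial principal part, coercive by (vii), trades time derivatives for full $E^{s+1}(\Omega)$ control, and Gronwall closes the estimate in the form $\|u^{(k+1)}(t)\|_{E^{s+1}(\Omega)} \le C(\kappa_0,\kappa_1,\gamma,\|u(0)\|_{E^{s+1}},\|v(0)\|_{E^{s+1}},\mu^{(k)}(t))$ with $\mu^{(k)}(t) = \int_0^t \|u^{(k)}\|_{E^{s+1}(\Omega)} + \|v\|_{E^{s+1}(\Omega)}\,d\tau$. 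A continuous induction argument then shows that for $T_* = 1/\delta_*$ small enough, the iterates $\{u^{(k)}\}$ stay uniformly bounded in $CY^{s+1}_{T_*}(\Omega)$ and remain inside $\Vc \times \Wc$. Contraction of the iteration in the weaker norm $CY^{1}_{T_*}(\Omega)$ follows from applying the low-norm energy estimate to the equation satisfied by $u^{(k+1)} - u^{(k)}$, using the uniform high-norm bound to absorb coefficient differences; interpolation with the uniform $E^{s+1}$ bound and passage to the limit yield a solution $u \in CY^{s+1}_{T_*}(\Omega)$ satisfying the stated estimate, and the same low-norm estimate applied to two candidate solutions gives uniqueness.

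The principal obstacle is the control of the top time derivative $\del{t}^{s+1} u^{(k+1)}$: since the coefficients depend on $\del{}u^{(k)}$, commuting $\del{t}^{s+1}$ through the divergence produces a source containing $\del{t}^{s+1}\del{}u^{(k)}$, which lives only in $L^2$ and sits exactly at the threshold of the energy method. Following Koch, I would avoid a direct hyperbolic estimate at this order and instead combine a hyperbolic estimate for $\del{t}^s u^{(k+1)}$ with an elliptic estimate, coercive on account of (vii), that recovers the missing top time derivative algebraically from the equation itself. This interplay, together with the fact that the Neumann condition is natural for the coercive principal part $L^{ij}$, is where Koch's argument is indispensable, and where the explicit dependence of $\delta_*$ on the initial data and on $\int_0^T \|v\|_{E^{s+1}(\Omega)}\,d\tau$ appears.
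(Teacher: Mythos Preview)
The paper does not give its own proof of this theorem: immediately before the statement it says that the result ``follows directly from the arguments used in the proof of Theorem~1.1 from \cite{Koch:1993}'' and then moves on. So there is nothing to compare against except the cited reference.

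Your sketch is a faithful outline of Koch's method: Picard iteration initialized by the formal Taylor polynomial to enforce the compatibility conditions, a linear step furnished by the symmetric-hyperbolic/coercive-Neumann theory (Koch's Theorem~2.2), higher-order estimates obtained by time-differentiating and combining hyperbolic energy bounds with elliptic regularity from the coercive spatial part $L^{ij}$, uniform high-norm bounds on a short time interval, and closure via low-norm contraction plus interpolation. Your identification of the top-order difficulty --- that commuting all the way to $\partial_t^{s+1}$ costs one derivative too many of the coefficients, so one stops the hyperbolic estimate at $\partial_t^{s}$ and recovers $\partial_t^{s+1}u$ algebraically from the equation using the invertibility of $L^{00}$ --- is exactly the mechanism Koch uses, and it is also the pattern the present paper follows in its proof of Theorem~\ref{tlinthm}. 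In short, your proposal is correct and is the same approach the paper is invoking by citation.
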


\sect{lemat}{Local existence proofs}

\subsect{rlemat}{Proof of Theorem \ref{redlocthm}}

\subsubsect{rlemat:exist}{Existence}
To prove existence of solutions to the reduced Einstein-Elastic system, we begin by letting
\alin{fderivs}{
\lambdah^\ell_{\mu\nu} := \del{0}^\ell \gt_{\mu\nu} \bigl|_{X^0=0}, \quad \ell=2,\ldots,s+1, \AND
\phih_\ell^i :=  \del{0}^\ell \phi^i_{\mu\nu} \bigl|_{X^0=0}, \quad \ell=2,\ldots,s
}
denote the higher order time derivatives generated from the initial data \eqref{matEinElasA.2} and \eqref{matEinElasA.5}
through formally differentiating the field equations \eqref{matEinElasA.1} and \eqref{matEinElasA.3} with respect to $X^0$ at $X^0=0$. By assumption, these satisfy the compatibility conditions \eqref{EEcompat.1}-\eqref{EEcompat.2}.

Next, we set
\alin{rEEBR}{
\Bc_{R,1}&:= \bigl\{\,\gt_{\mu\nu} \in CX^{s+1}_T(\Rbb^n)\: \bigl| \:
\norm{\gt}_{X_T^{s+1}(\Tbb^n)}\leq R,\quad \del{0}^\ell \gt_{\mu\nu} \bigl|_{X^0=0} = \lambdah^\ell_{\mu\nu}, \;\; \ell=0,\ldots,s+1\,
\bigr\},
\intertext{and}
\Bc_{R,2}&:= \bigl\{\,\phi^i \in CY^{s+1}_T(\Omega)\: \bigl| \:
\norm{\phiv}_{X_T^{s+1}(\Omega)}\leq R,\quad \del{0}^\ell  \phi^i_{\mu\nu} \bigl|_{X^0=0} = \phih^i_\ell, \;\; \ell=0,\ldots,s\,
\bigr\},
}
and we define a map
\leqn{rEEJmapA}{
J_T(h) = (J_{T,1}(h),J_{T,2}(h)):=(\gt,\phi)
}
that maps
\eqn{rEEJmapB}{
h = (h_{\mu\nu}) \in \Bc_{R,1} \subset CX^{s+1}_T(\Tbb^n)
}
to $(\gt,\phiv) =(\gt_{\mu\nu},\phi^i)$, where
\eqn{rEEJmapC}{
(\gt,\phiv) \in CX^{s+1}_T(\Rbb^n)\times CY^{s+1}_T(\Omega)
}
is the unique solution of the IBVP:
\lalin{rEEJmapD}{
\del{\Lambda}\bigl(A^{\Lambda\Gamma}\del{\Gamma} \gt_{\mu\nu}\bigr) &=
\det(\Jt(\phiv))Q_{\mu\nu}(h,\Jtch(\phiv)\del{} h) + \chi_\Omega\Tc_{\mu\nu}(\Xv,h,\phiv,\del{}\phiv)
 \text{\hspace{0.3cm} in $[0,T)\times \Tbb^n$,} \label{rEEJmapD.1} \\
 (\gt_{\mu\nu},\del{0}\gt_{\mu\nu}) &= (\lambdah^0_{\mu\nu},\lambdah^1_{\mu\nu})
 \text{\hspace{6.15cm} in $\{0\}\times \Tbb^n$,} \label{rEEJmapD.2}\\
\del{\Lambda}(L^\Lambda_i(\Xv,h,\phiv,\del{}\phiv)) &= w_i(\Xv,h,\phiv,\del{}\phiv) \text{\hspace{5.35cm} in $[0,T) \times \Omega$,}
\label{rEEJmapD.3}\\
\nu_{\Lambda}L^\Lambda_i(\Xv,h,\phiv, \del{}\phiv) &= 0 \text{\hspace{7.5cm} in $[0,T) \times \del{}\Omega$,}
\label{rEEJmapD.4}\\
(\phi^i,\del{0}\phi^i) &= (\phih^i_0,\phih^i_1)   \text{\hspace{6.50cm} in $\{0\}\times\Omega$,}
\label{rEEJmapD.5}
}
where
\eqn{rEEJmapDa}{
\Jt(\phiv) = (\del{\Lambda} \phit^\mu), \quad \Jtch(\phiv)=\Jt(\phiv)^{-1},
}
$\phit$ is given by \eqref{phitdefA}, and
\eqn{rEEJmapDb}{
A^{\Lambda\Gamma} = \det{\Jt(\phiv)}\Jtch^\Lambda_\alpha(\phiv)a^{\alpha\beta}(h)\Jtch^\Gamma_\beta(\phiv)
}
with
\eqn{reeJmaDc}{
a(h)=(a^{\alpha\beta}(h)) := \sqrt{|h|}h^{-1}.
}

To see that the map \eqref{rEEJmapA} is well defined, we first observe that it
follows directly from Theorem \ref{kochthm} and the IBVP \eqref{rEEJmapD.3}-\eqref{rEEJmapD.5} that there exists a $T_*>0$
such that map
$J_{T,2}$ is well-defined for all $T\in (0,T_*)$ and satisfies
\leqn{rEEJmapE}{
J_{T,2}(\Bc_{R,1})\subset \Bc_{C(TR),2}, \quad 0<T < T_*.
}
Fixing
\leqn{rEEJmapEa}{
h \in \Bc_{R,1}, \quad  T\in (0,T_*),
}
and setting
\leqn{rEEJmapEb}{
\phi = J_{T,2}(h),
}
we observe that the bounds
\lgath{rEEJmapG}{
\norm{\Jt(\phiv)}_{\Xc^{s}_T(\Tbb^n)}+\norm{D(\Jt(\phiv))}_{\Xc^{s-1}_T(\Tbb^n)}  \leq C(TR), \label{rEEJmapG.0}\\
\norm{\del{}a}_{\Xc_T^{s}(\Tbb^n)}+\norm{\det(\Jt(\phiv))Q(\mu,\Jtch(\phiv)\del{}h)}_{\Xc^s_T(\Rbb^n)}
+ \norm{\Tc_{\mu\nu}(\cdot,h,\phiv,\del{}\phiv)}_{Y^s_T(\Omega)} \leq C(R,TR), \label{rEEJmapG.1}
\intertext{and}
\bigl\|\det(\Jt(\phiv))Q(h,\Jtch(\phiv)\del{} h)\bigl|_{X^0=0}\bigr\|_{\Ec^{s-1}(\Rbb^n)}+
\bigl\|\Tc_{\mu\nu}((X^0,\cdot),h,\phiv,\del{}\phiv)\bigl|_{X^0=0}\bigr\|_{E^{s-1}(\Omega)} \lesssim 1 \label{rEEJmapG.2}
}
follow directly from \eqref{EdefB}, \eqref{rEEJmapE}, \eqref{rEEJmapEa}, \eqref{rEEJmapEb} and Proposition \ref{fpropB}.
Writing $a(h)=(a^{\mu\nu}(h))$ and $h$ as
\eqn{rEEJmapH}{
a(h(t)) = a(h(0))+ \int_0^t D_h a(h(\tau))\cdot \del{t}h(\tau)\, d\tau \AND
h(t) = h(0)+\int_0^t \del{t} h(\tau)\,d\tau,
}
respectively, we see that
\eqn{rEEJmapH}{
\norm{h(t)}_{\Ec^{s}(\Tbb^n)}+  \norm{D h(t)}_{\Ec^{s-1}(\Tbb^n)} \lesssim 1 + TR,
}
and, with the help of Propositions \ref{fpropA} and \ref{fpropB}, that
\leqn{rEEJmapI}{
\norm{a(h)}_{\Xc^{s}_T(\Tbb^n)}+\norm{D(a(h))}_{\Xc^{s-1}_T(\Tbb^n)} \leq C(TR).
}
In view of the bounds \eqref{rEEJmapG.0}-\eqref{rEEJmapI}, we conclude via Theorem \ref{tlinthm} and Gronwall's inequality that
we can solve the IVP \eqref{rEEJmapD.1}-\eqref{rEEJmapD.2} on the time interval $[0,T_*)$, and that the solution
$\gt=(\gt_{\alpha\beta})$ satisfies the estimate
\eqn{rEEJampJ}{
\norm{\gt}_{X^{s+1}_T(\Tbb^n)}\leq c(TC(R)).
}
This shows that $J_{T,1}$ is well-defined for all  $T\in (0,T_*)$ and satisfies
\leqn{rEEJmapK}{
J_{T,1}(\Bc_{R,1}) \subset \Bc_{c(TC(R)),1}\quad 0<T\leq T_*.
}
Choosing $R > 0$ sufficiently large and setting $T=\min\{1/C(R),T_*\}$,
it is clear from  \eqref{rEEJmapE} and \eqref{rEEJmapK} that
\leqn{rEEJmapL}{
J_{T,1}(\Bc_{R,1})\subset \Bc_{R,1}
\AND
J_{T,2}(\Bc_{R,1})\subset \Bc_{R,2}.
}

Fixing $\gt^0\in \Bc_{R,1}$, we define two sequences $\{\gt^m\}_{m=1}^\infty$ and $\{\phiv_m\}_{m=1}^\infty$ by
\eqn{rEEJmapM}{
\gt^m = \uset{\text{$m$ times}}{\underbrace{J_{T,1}\circ \cdots \circ J_{T,1}}}(\gt^0) \AND \phiv_m=J_{T,2}(\gt^{m-1}).
}
By \eqref{rEEJmapL}, we see that
\leqn{rEEJmapM1}{
\{\gt^m\}_{m=1}^\infty \subset \Bc_{R,1} \AND \{\phiv_m\}_{m=1}^\infty\subset \Bc_{R,2},
}
while, from the definition of the maps $J_{R,1}$ and $J_{R,2}$, it is clear that the pair $(\gt^m,\phiv_m)$ solves
the IBVP:
\lalin{rEEJmapN}{
\del{\Lambda}\bigl(A^{\Lambda\Gamma}(\gt^m,\Jt(\phiv_{m-1}))\del{\Gamma} \gt^m_{\mu\nu}\bigr) &=
\det(\Jt(\phiv_{m-1}))Q_{\mu\nu}(\gt^{m-1},\Jtch(\phiv_{m-1})\del{} \gt^{m-1}) \notag \\
 & \hspace{1.9cm}+ \chi_\Omega\Tc_{\mu\nu}(\Xv,\gt^{m-1},\phiv_{m-1},\del{}\phiv_{m-1})
 \text{\hspace{0.3cm} in $[0,T)\times \Tbb^n$,} \label{rEEJmapN.1} \\
 (\gt^m_{\mu\nu},\del{0}\gt^m_{\mu\nu}) &= (\lambdah^0_{\mu\nu},\lambdah^1_{\mu\nu})
 \text{\hspace{5.525cm} in $\{0\}\times \Tbb^n$,} \label{rEEJmapN.2}\\
\del{\Lambda}(L^\Lambda_i(\Xv,\gt^{m-1},\phiv_m,\del{}\phiv_m)) &= w_i(\Xv,\gt^{m-1},\phiv_m,\del{}\phiv_m) \text{\hspace{3.6cm} in $[0,T) \times \Omega$,}
\label{rEEJmapN.3}\\
\nu_{\Lambda}L^\Lambda_i(\Xv,\gt^{m-1},\phiv_m, \del{}\phiv_m) &= 0 \text{\hspace{6.9cm} in $[0,T) \times \del{}\Omega$,}
\label{rEEJmapN.4}\\
(\phi^i_m,\del{0}\phi^i_m) &= (\phih^i_0,\phih^i_1)   \text{\hspace{5.925cm} in $\{0\}\times\Omega$.}
\label{rEEJmapN.5}
}

The bounds \eqref{rEEJmapM1} allows us to extract, via the sequential Banach-Alaoglu Theorem, weakly convergent subsequences of $\{\gt^m\}_{m=1}^\infty$ and
$\{\phiv_m\}_{m=1}^\infty$, which we again denote by $\{\gt^m\}_{m=1}^\infty$  and $\{\phiv_m\}_{m=1}^\infty$, that
converge weakly to
\lalin{wlimit}{
\gt^\infty &\in \bigcap_{\ell=0}^{s+1} W^{\ell,q}\big([0,T),\Hc^{m_{s+1-\ell},s+1-\ell}(\Tbb^n)\bigr) \label{wlimit.1}
\intertext{and}
\phiv_\infty &\in \bigcap_{\ell=0}^{s+1} W^{\ell,q}\big([0,T),H^{s+1-\ell}(\Omega)\bigr) \label{wlimit.2}
}
for any $q\in (1,\infty)$ as $m\rightarrow \infty$. Choosing $\ep>0$ small enough
so that $s-1-\ep>n/2$, we can using integral and fractional versions of the Rellich-Kondrachov Compactness Theorem
extract subsequences, again denoted
by $\{\gt^m\}_{m=1}^\infty$  and $\{\phiv_m\}_{m=1}^\infty$, such that
\lalin{rEEJmapO}{
\gt^m \longrightarrow \gt^\infty \quad &\text{in $L^q\bigl([0,T),H^{s-\ep}(\Omega)\cap H^{2-\ep}(\Tbb^n)\cap H^{s-\ep}(\Omega^c)\bigr)$,} \label{rEEJmapO.1}\\
\gt^m \longrightarrow \gt^\infty \quad &\text{in $W^{1,q}\bigl([0,T),H^{s-1-\ep}(\Omega)\cap H^{2-\ep}(\Tbb^n)\cap H^{s-1-\ep}(\Omega^c)\bigr)$,} \label{rEEJmapO.2}\\
\phiv_m \longrightarrow \phiv_\infty \quad &\text{in $L^q\bigl([0,T),H^{s-\ep}(\Omega)\bigr)$,} \label{rEEJmapO.3}
\intertext{and}
\phiv_m \longrightarrow \phiv_\infty \quad &\text{in $W^{1,q}\bigl([0,T),H^{s-1-\ep}(\Omega)\bigr)$,} \label{rEEJmapO.4}
}
as $m\rightarrow \infty$.

Testing \eqref{rEEJmapN.1} and \eqref{rEEJmapN.3} with $\zeta \in C^\infty_0\bigl([0,T),C^\infty(\Tbb^n,\Sbb{n+1})\bigr)$ and $\psi \in C^\infty_0\bigr([0,T), C^\infty(\overline{\Omega},\Rbb^n)\bigr)$, respectively, we find
that
\lalin{rrEEJmapP}{
&-\ip{\del{\Lambda}\zeta}{A^{\Lambda\Gamma}(\gt^m,\Jt(\phiv_{m-1}))\del{\Gamma} \gt^m}_{[0,T)\times\Tbb^n} =
\ip{\zeta}{\det(\Jt(\phiv_{m-1}))Q_{\mu\nu}(\gt^{m-1},\Jtch(\phiv_{m-1})\del{} \gt^{m-1})}_{[0,T)\times \Tbb^n} \notag \\
&\hspace{7.0cm} + \ip{\zeta}{\Tc_{\mu\nu}(\cdot,\gt^{m-1},\phiv_{m-1},\del{}\phiv_{m-1})}_{[0,T)\times\Omega}, \label{rrEJmapP.1}\\
&-\ip{\del{\Lambda}\psi}{L^\Lambda(\cdot,\gt^{m-1},\phiv_m,\del{}\phiv_m)}_{[0,T)\times\Omega} = \ip{\psi}{w(\cdot,\gt^{m-1},\phiv_m,\del{}\phiv_m)}_{[0,T)\times\Omega}.
\label{rrEJmapP.2}
}
Letting $m\rightarrow \infty$ in \eqref{rrEJmapP.1} and \eqref{rrEJmapP.2}, the strong convergence \eqref{rEEJmapO.1}-\eqref{rEEJmapO.4} in conjunction with the calculus inequalities from
Appendix \ref{calculus}
shows that the limits $(\gt^\infty,\phiv_\infty)$ satisfy
\lalin{rrEEJmapQ}{
&-\ip{\del{\Lambda}\zeta}{A^{\Lambda\Gamma}(\gt^{\infty},\Jt(\phiv_{{\infty}}))\del{\Gamma} \gt^{\infty}}_{[0,T)\times\Tbb^n} =
\ip{\zeta}{\det(\Jt(\phiv_{{\infty}}))Q_{\mu\nu}(\gt^{{\infty}},\Jtch(\phiv_{{\infty}})\del{} \gt^{{\infty}})}_{[0,T)\times \Tbb^n} \notag \\
&\hspace{7.0cm} + \ip{\zeta}{\Tc_{\mu\nu}(\cdot,\gt^{{\infty}},\phiv_{{\infty}},\del{}\phiv_{{\infty}})}_{[0,T)\times\Omega}, \\
&-\ip{\del{\Lambda}\psi}{L^\Lambda(\cdot,\gt^{{\infty}},\phiv_{\infty},\del{}\phiv_{\infty})}_{[0,T)\times\Omega} = \ip{\psi}{w(\cdot,\gt^{{\infty}},\phiv_{\infty},\del{}\phiv_{\infty})}_{[0,T)\times\Omega}.
}
Since $\zeta \in C^\infty_0\bigl([0,T),C^\infty(\Tbb^n,\Sbb{n+1})\bigr)$ and $\psi \in C^\infty_0\bigr([0,T), C^\infty(\overline{\Omega},\Rbb^n)\bigr)$ were chosen arbitrarily,
the limit $(\gt^\infty,\phiv_\infty)$  satisfies
\lalin{rEEJmapR}{
\del{\Lambda}\bigl(A^{\Lambda\Gamma}(\gt^\infty,\Jt(\phiv_\infty))\del{\Gamma} \gt^\infty_{\mu\nu}\bigr) &=
\det(\Jt(\phiv_\infty))Q_{\mu\nu}(\gt^\infty,\Jtch(\phiv_\infty)\del{} \gt^\infty) \notag \\
 & \hspace{2.9cm} + \chi_\Omega\Tc_{\mu\nu}(\Xv,\gt^\infty,\phiv_\infty,\del{}\phiv_\infty)
 \text{\hspace{0.3cm} in $[0,T)\times \Tbb^n$,} 
 \notag \\
\del{\Lambda}(L^\Lambda_i(\Xv,\gt^\infty,\phiv_\infty,\del{}\phiv_\infty)) &= w_i(\Xv,\gt^\infty,\phiv_\infty,\del{}\phiv_\infty) \text{\hspace{3.85cm} in $[0,T) \times \Omega$,}
\notag
\\
\nu_{\Lambda}L^\Lambda_i(\Xv,\gt^\infty,\phiv_\infty, \del{}\phiv_\infty) &= 0 \text{\hspace{6.875cm} in $[0,T) \times \del{}\Omega$,}
\notag \\
}
But by \eqref{rEEJmapM1}, $(\gt^\infty,\phiv_\infty)$ must also satisfy
\lalin{rEEJmapS}{
(\gt_{\mu\nu}^\infty,\del{0}\gt^\infty_{\mu\nu}) &= (\lambdah^0_{\mu\nu},\lambdah^1_{\mu\nu})
 \text{\hspace{0.35cm} in $\{0\}\times \Tbb^n$,} 
 \notag \\
(\phi^i_\infty,\del{0}\phi^i_\infty) &= (\phih^i_0,\phih^i_1)   \text{\hspace{0.725cm} in $\{0\}\times\Omega$,}
\notag
}
thereby establishing $(\gt^\infty,\phiv_\infty)$ a solution of the IVBP \eqref{matEinElasA.1}-\eqref{matEinElasA.5}.

The improved regularity of the solution $(\gt^\infty,\phiv_\infty)\in X^{s+1}_T(\Tbb^n)\times Y^{s+1}_T(\Omega)$ can
be established by repeatedly differentiating the field equations with respect to $X^0$ in order to express
the higher order time derivatives $\del{0}^\ell\gt_{\mu\nu}$ and
$\del{0}^\ell\phi^i$, $\ell = 2 \ldots s+1$, in terms of the lower time derivatives and spatial derivatives. The
bounds \eqref{wlimit.1}-\eqref{wlimit.2} together with the calculus inequalities from Appendix \ref{calculus}
can then be used to show that $(\gt^\infty,\phiv_\infty)\in X^{s+1}_T(\Tbb^n)\times Y^{s+1}_T(\Omega)$.

\subsubsect{rlemat:unique}{Uniqueness}
With existence established, we now turn to verifying the uniqueness of solutions to the IVBP \eqref{matEinElasA.1}-\eqref{matEinElasA.5}. Given
 two solutions $(\gt^a,\phiv_a)\in  X^{s+1}_T(\Tbb^n)\times Y^{s+1}_T(\Omega)$, $a=1,2$, we define
\eqn{uniqueA}{
\dot{\phiv}_a = \del{0}\phiv_a.
}
Differentiating the elastic field and boundary equations \eqref{matEinElasA.3} and \eqref{matEinElasA.4} with respect to $X^0$ shows
that the pair $(\phiv_a,\dot{\phiv}_a)$, $a=1,2$, satisfies
\lalin{uniqueB}{
\del{\Lambda}\bigl(L^{\Lambda\Gamma}_{aij}\del{\Gamma}\dot{\phi}^j_a + M^\Lambda_{ai}\bigr) &= W_{ai} \hspace{0.3cm}\text{in $[0,T)\times \Omega$,} \label{uniqueB.1}\\
\del{0}\phi^i_a &= \dot{\phi}^i_a \hspace{0.55cm}\text{in $[0,T)\times \Omega$,} \label{uniqueB.2}\\
\nu_{\Lambda}\bigl(L^{\Lambda\Gamma}_{aij}\del{\Gamma}\dot{\phi}^j_a + M^\Lambda_{ai}\bigr)&= 0 \hspace{0.75cm}\text{in $[0,T)\times \del{}\Omega$,} \label{uniqueB.3}
}
where
\alin{uniqueC}{
\xi_a &:= \bigl(\Xv,\gt_a,\phiv_a,(\dot{\phiv}_a,D\phiv_a)\bigr), \\
L^{\Lambda\Gamma}_{aij} &:= L^{\Lambda\Gamma}_{ij}(\xi_a),\\
M^\Lambda_{ai} &:= \frac{\del{}L^{\Lambda\Gamma}_{ij}}{\del{}X^0}(\xi_a)+
\frac{\del{}L^{\Lambda\Gamma}_{ij}}{\del{}\gt_{\mu\nu}}(\xi_a)\del{0}\gt^a_{\mu\nu}+ \frac{\del{}L^{\Lambda\Gamma}_{ij}}{\del{}\phi^j_a}(\xi_a)\dot{\phi}^j_a,
\intertext{and}
W_{ai} &:= \frac{\del{}w_i}{\del{}X^0}(\xi_a)+
\frac{\del{}w}{\del{}\gt_{\mu\nu}}(\xi_a)\del{0}\gt^a_{\mu\nu}+ \frac{\del{}w_i}{\del{}\phi^j_a}(\xi_a)\dot{\phi}^j_a+\frac{\del{}w_i}{\del{}\del{\Lambda}\phi^j_a}(\xi_a)\del{\Lambda}\dot{\phi}^j_a .
}
Following the usual approach to establishing uniqueness, we consider the differences
\eqn{uniqueH}{
(\delta \gt,\delta\phiv, \delta\dot{\phiv} ) := (\gt^2-\gt^1,\phiv_2-\phiv_1,\dot{\phiv}_2-\dot{\phiv}_1).
}
Since the $\gt^a_{\mu\nu}$, $a=1,2$, solve the reduced Einstein equations \eqref{matEinElasA.1}, a straight forward
straightforward calculation shows that $\delta \gt_{\mu\nu}$ satisfies
\leqn{uniqueI}{
\del{\Lambda}\bigl(A^{\Lambda\Gamma}_2 \del{\Gamma} \delta \gt_{\mu\nu}\bigr)
= \del{\Lambda}\bigl( [A^{\Lambda\Gamma}_1-A^{\Lambda\Gamma}_2]\del{\Gamma}\gt_{\mu\nu}^1\bigr)
+ \Qc^2_{\mu\nu}-\Qc^1_{\mu\nu} + \chi_{\Omega}\bigl(\Tc^2_{\mu\nu}-\Tc^1_{\mu\nu}\bigr)
\hspace{0.3cm}\text{in $[0,T)\times \Tbb^n$,}
}
where
\alin{uniqueJ}{
A^{\mu\nu}_a &:= A^{\mu\nu}\bigl(\Jt(\phiv_a),\gt_a\bigr),\\
\Qc_{\mu\nu}^a &:= \det\bigl(\Jt(\phiv_a)\bigr)Q_{\mu\nu}\bigl(\gt^a,\Jtch(\phiv_a)\gt^a\bigr)
\intertext{and}
\Tc_{\mu\nu}^a &:= \Tc_{\mu\nu}(\xi_a).
}
Defining
\eqn{uniqueJa}{
f^a_\beta = (f^a_{\beta\mu\nu}) := (\Jtch^\Lambda_\beta(\phiv_a)\del{\Lambda}\gt_{\mu\nu}^a),
}
a calculation similar to that used to derive \eqref{tlinIVP17} shows that the reduced Einstein equations \eqref{matEinElasA.1} imply the following evolution
equations for the $f^a_\beta$:
\leqn{uniqueJb}{
A^{\alpha\beta\Lambda}_a\del{\Lambda}f^a_{\beta\mu\nu} = F^{\alpha}_{a\mu\nu} + \chi_\Omega H^\alpha_{a\mu\nu} \hspace{0.3cm}\text{in $[0,T)\times \Tbb^n$,}
}
where
\alin{uniqueJe}{
A^{\alpha\beta\Lambda}_a &:= \det(\Jt(\phiv_a))\bigl(-\delta^\alpha_0 \sqrt{|\gt^a|}\gt_a^{\beta\gamma}-\delta^\beta_0\sqrt{|\gt^a|}\gt_a^{\alpha\gamma}
+ \delta^\gamma_0 \sqrt{|\gt^a|}\gt_a^{\alpha\beta} \bigr)\Jtch^\Lambda_\gamma(\phiv_a),\\
F^\alpha_{a\mu\nu} &:= \delta^\alpha_0\biggl[\det(\Jt(\phiv_a))\Jtch^\Lambda_\gamma \frac{\del{} \sqrt{|\gt|}\gt^{\beta\gamma}}{\del{}\gt_{\sigma\delta}}\Bigl|_{\gt=\gt^a}
\Jt_\Lambda^\omega f_{\omega \sigma\delta}^a
f^a_{\beta\mu\nu}- \det(\Jt(\phiv_a))Q_{\mu\nu}(\gt^a,f^a)\biggr],\\
H^\alpha_{a\mu\nu} &:= -\delta^\alpha_0 \Tc_{\mu\nu}(\xi_a),
}
and, as previously, we employ the notation $(\gt_a^{\alpha\beta}) = (\gt^a_{\alpha\beta})^{-1}$ and $|\gt^a|=-\det(\gt^a_{\mu\nu})$.
Setting
\eqn{uniqueJf}{
\delta f_\beta := f_\beta^2-f_\beta^1,
}
we see from \eqref{uniqueJb} that $\delta f_\beta$ satisfies
\leqn{uniqueJh}{
A^{\alpha\beta\Lambda}_2\del{\Lambda}\delta f_{\beta\mu\nu} = \bigl(A^{\alpha\beta\Lambda}_1-A^{\alpha\beta\Lambda}_2\bigr)\del{\Lambda}f^1_{\beta\mu\nu}+  F^{\alpha}_{2\mu\nu}-F^{\alpha}_{1\mu\nu} + \chi_\Omega( H^\alpha_{2\mu\nu}-H^\alpha_{1\mu\nu} )\hspace{0.3cm}\text{in $[0,T)\times \Tbb^n$.}
}

Following the same arguments used in the proof of Theorem \ref{tlinthm}, where we
view \eqref{uniqueI} as an elliptic equation for the purpose of estimating the $s-2$ time derivatives of $\delta \gt_{\mu\nu}$,
and use \eqref{uniqueJh} and hyperbolic estimates to estimate the top two time derivatives,
we obtain, with the help of
\leqn{uniqueL}{
\norm{\delta\gt(0)}_{E^{s+1}(\Tbb^n)} = \norm{\delta\phi(0)}_{E^{s+1}(\Tbb^n)}= \norm{\delta\dot{\phi}(0)}_{E^s(\Tbb^n)}=0,
}
the energy estimate
\leqn{uniqueM}{
\norm{\delta\gt(t)}_{E^s(\Tbb^n)} \lesssim \int_0^t \norm{\delta\gt(\tau)}_{E^s(\Tbb^n)}+ \norm{\delta\phi(\tau)}_{E^s(\Omega)}\, d\tau, \quad 0\leq t < T,
}
for $\delta\gt_{\mu\nu}$.

We also see from \eqref{uniqueB.1}-\eqref{uniqueB.3} that $\delta\phiv$ and  $\delta\dot{\phiv}$
satisfy
\lalin{uniqueN}{
\del{\Lambda}\bigl(L^{\Lambda\Gamma}_{2ij}\del{\Gamma}\delta\dot{\phi}^j + M^\Lambda_{2i}-M^\Lambda_{1i}+\bigl[L^{\Lambda\Gamma}_{2ij}-L^{\Lambda\Gamma}_{1ij}\bigr]
\del{\Gamma}\dot{\phi}_1^j\bigr) &= W_{2i}-W_{1i} \hspace{0.3cm}\text{in $[0,T)\times \Omega$,} \label{uniqueN.1}\\
\del{0}\delta\phi^i - \delta\dot{\phi}^i &=0 \hspace{1.75cm}\text{in $[0,T)\times \Omega$,} \label{uniqueN.2} \\
\nu_{\Lambda}\bigl(L^{\Lambda\Gamma}_{2ij}\del{\Gamma}\delta\dot{\phi}^j + M^\Lambda_{2i}-M^\Lambda_{1i}+\bigl[L^{\Lambda\Gamma}_{2ij}-L^{\Lambda\Gamma}_{1ij}\bigr]
\del{\Gamma}\dot{\phi}_1^j\bigr)&= 0 \hspace{1.75cm}\text{in $[0,T)\times \del{}\Omega$.} \label{uniqueN.3}
}
Applying the energy estimates from Theorem 2.4 of \cite{Koch:1993} to the this system,  we obtain, with
the help of the calculus inequalities from Appendix \ref{calculus} and \eqref{uniqueL},
the estimate
\leqn{uniqueO}{
\norm{\delta \phi(t)}_{E^{s-1}(\Omega)} + \norm{\delta\dot{\phi}(t)}_{E^{s-1}(\Omega)} \lesssim \int_0^t \norm{\delta\gt(\tau)}_{E^{s}(\Tbb^n)}+
\norm{\delta\dot{\phi}(\tau)}_{E^{s-1}(\Omega)}+ \norm{\delta\phi(\tau)}_{E^{s-1}(\Omega)}\, d\tau
}
for $0\leq t < T$. Viewing \eqref{uniqueN.1} and \eqref{uniqueN.3} as an elliptic equation for $\delta\dot{\phi}^i$,
and letting $\dsl\delta\dot{\phi}^i$ denote the derivatives tangential to the boundary $\del{}\Omega$, it follows,
after differentiating \eqref{uniqueN.1} and \eqref{uniqueN.3} tangentially, from
elliptic regularity, e.g. see \cite[Theorem A.4.]{Koch:1993}, and the calculus inequalities
from Appendix \ref{calculus} that
\lalin{uniqueP}{
\norm{\dsl\delta\dot{\phiv}(t)}_{H^{s-1}(\Omega)} \lesssim &\norm{\dsl\delta\dot{\phiv}(t)}_{H^{s-2}(\Omega)} + \norm{\del{0}\dot{\phiv}(t)}_{H^{s-2}(\Omega)} +
\norm{\nu(\dot{\phiv})(t)}_{H^{s-2}(\Omega)} \\
&+\norm{\delta\dot{\phiv}(t)}_{H^{s-2}(\Omega)} + \norm{\delta\phiv(t)}_{H^{s-1}(\Omega)}+\norm{\dsl \phiv(t)}_{H^{s-1}(\Omega)}+ \norm{\delta\gt(t)}_{E^{s}(\Tbb^n)},\label{uniqueP.1}
}
where $\nu = \nu^I\del{I}$ is the normal derivative. We also note that the
normal derivative can be estimated by treating \eqref{uniqueN.1} and \eqref{uniqueN.3} as an ODE for $\nu(\delta\dot{\phi}^i)$. Integrating in the normal direction followed by application of the calculus inequalities from Appendix \ref{calculus}
then yields the desired result.
From this type of normal estimate and the tangential estimate \eqref{uniqueP.1},
it is not difficult to verify that we obtain the estimate
\leqn{uniqueQ}{
\norm{D\delta\dot{\phiv}(t)}_{H^{s-1}(\Omega)} \lesssim \norm{\del{}\delta\dot{\phiv}(t)}_{H^{s-2}(\Omega)} +\norm{\delta\dot{\phiv}(t)}_{H^{s-2}(\Omega)} + \norm{\delta\phiv(t)}_{H^{s}(\Omega)}+ \norm{\delta\gt(t)}_{E^{s}(\Tbb^n)}
}
for the full spatial derivative of $\delta\dot{\phi}^i$. Integrating \eqref{uniqueN.2} in time, it follows from
\eqref{uniqueQ} that
\leqn{uniqueR}{
\norm{\delta\phiv(t)}_{H^{s}(\Omega)} \leq  \int_0^t \norm{\dot{\phiv}(\tau)}_{E^{s-1}(\Omega)} +  \norm{\delta\phiv(\tau)}_{H^{s}(\Omega)}+ \norm{\delta\gt(\tau)}_{E^{s}(\Tbb^n)} \, d \tau,
\quad 0\leq t < T.
}

Taken together, the inequalities \eqref{uniqueM}, \eqref{uniqueO}, and \eqref{uniqueR} imply, via Gronwall's inequality, that
\eqn{uniqueS}{
\norm{\delta \gt(t)}_{E^{s}(\Tbb)} + \norm{\delta\phi(t)}_{E^{s}(\Tbb)} = 0, \quad 0\leq t < T.
}
In particular, we have that $\gt^1_{\mu\nu}=\gt^2_{\mu\nu}$ on $[0,T)\times \Tbb^n$, and $\phit^i_1=\phit^i_2$ on $[0,T)\times \Omega$, and the uniqueness proof is complete.

\subsect{flemat}{Proof of Theorem \ref{fulllocthm}}

We begin the proof of Theorem \ref{fulllocthm} by fixing a solution $(\gt_{\mu\nu},\phi^i)\in X^{s+1}_T(\Sigma)\times Y^{s+1}_T(\Omega)$ of
the IBVP \eqref{matEinElasA.1}-\eqref{matEinElasA.5} from Theorem \ref{redlocthm}, and defining
the pull-back metric
\leqn{gbdef}{
\gb_{\Lambda\Gamma} := (\phit^*g)_{\Lambda\Gamma}  = \Jt_{\Lambda}^{\mu}\Jt^\nu_{\Gamma} \gt_{\mu\nu}.
}
We let $\Gammab^{\Lambda}_{\Gamma\Sigma}$, $\Gb_{\Lambda\Gamma}$, $\Rb_{\Lambda\Gamma}$,
and $\nablab_\Lambda$ denote the Christofell symbols, Einstein tensor, Ricci tensor,
and Levi-Civita
connection of the metric \eqref{gbdef}, respectively. Next, we set
\eqn{TtdefA}{
\Tt_{\mu\nu} = T_{\mu\nu}\circ \phi,
}
and recall, by assumption, that
\eqn{TtdefB}{
\Tt_{\mu\nu} = \Tt_{\mu\nu}\bigl(\Xv,\gt,\phiv,\del{}\phiv\bigr),
}
where $\Tt_{\mu\nu}$ is smooth for $\bigl(\Xv,\gt,\phiv,\del{}\phiv\bigr)$$\in$$[0,T)$$\times$$\Omega$$\times$$\Uct$$\times$$\Vct$$\times$$\Wct$.
We also let
\eqn{zetatdef}{
\zetat_{\mu} = \gt_{\mu\gamma}\gt^{\alpha\beta}\Gammat^\gamma_{\alpha\beta},
}
where
\eqn{Christt}{
\Gammat^\gamma_{\alpha\beta} := \Gamma^\gamma_{\alpha\beta}\circ\phi = \Half \gammat^{\gamma\lambda}\bigl(\Jtch_\alpha^\Lambda \del{\Lambda}\gt_{\gamma\beta}
+\Jtch^{\Lambda}_\beta \del{\Lambda}\gt_{\alpha\gamma} - \Jtch^\Lambda_\gamma \del{\Lambda} \gt_{\alpha\beta}\bigr),
}
and we define the pull-back of the stress-energy tensor $T_{\mu\nu}$ and
the co-vector field $\zeta_\mu$  by
\leqn{stressbzetab}{
\Tb_{\Lambda\Gamma} := (\phit^* T)_{\Lambda\Gamma} = \Jt_\Lambda^\mu \Jt_\Gamma^\nu \Tt_{\mu\nu} \AND
\zetab_{\Lambda} := (\phit^*\zeta)_{\Lambda} = \Jt_\Lambda^\mu \zetat_\mu,
}
respectively.
Since \eqref{matEinElasA.1}, \eqref{matEinElasA.3}, and \eqref{matEinElasA.4} are equivalent to
\eqref{redEinElasA.1}, \eqref{redEinElasA.2}, and \eqref{EinElasB}, it follows from
from the definitions \eqref{gbdef} and \eqref{stressbzetab}, and a simple calculation that
\lalin{EEbarA}{
\Gb_{\Lambda\Gamma} + \Half\bigl(-\nablab_{\Lambda}\zetab_\Gamma - \nablab_{\Gamma}\zetab_\Lambda +\nabla_{\Sigma}\zetab^\Sigma \gb_{\Lambda\Gamma} \bigr) &=
2\kappa \chi_\Omega \Tb_{\Lambda\Gamma} \text{\hspace{0.2cm} in $[0,T)\times \Tbb^n$}, \label{EEbarA.1}\\
\nablab^{\Lambda}\Tb_{\Lambda\Gamma} &= 0 \text{\hspace{1.5cm} in $[0,T)\times \Omega$}, \label{EEbarA.2}\\
\nb^\Lambda \Tb_{\Lambda\Gamma} &= 0 \text{\hspace{1.5cm} in $[0,T)\times \del{}\Omega$}, \label{EEbarB}
}
where $\nb_{\Lambda}$ $=$ $\frac{1}{\sqrt{\gb^{IJ}\nu_I\nu_J}}\delta_\Lambda^K \nu_K$.

To proceed, we define a smoothed version of the diffeomorphism $\phit$ by
\leqn{phitsmooth}{
\uset{\lambda}{\phit}^\mu(X^0,X) = \begin{cases} X^0 & \text{if $\mu=0$} \\
S_\lambda \Ebb_\Omega(\phi^i) & \text{if $\mu=i$}
\end{cases},
}
where $S_\lambda$ is the smoothing operator from Proposition \ref{mollprop}, and
$\lambda_0$ is chosen small enough so that \eqref{phitsmooth} continues to define a diffeomorphism for
$\lambda \in (0,\lambda_0]$.
We then set
\eqn{Jtsmooth}{
\uset{\lambda}{\Jt} = \bigl(\uset{\lambda}{\Jt}^\mu_\Lambda\bigr) := \bigl(\del{\Lambda}\uset{\lambda}{\phit}^\mu\bigr) \AND \uset{\lambda}{\Jtch} = \uset{\lambda}{\Jt}^{-1},
}
and define a smoothed version of the metric $\gb_{\Lambda\Gamma}$ and the covector field $\zetab_{\Lambda}$ by
\eqn{gbsmoothA}{
\uset{\lambda}{\gb}_{\Lambda\Gamma} = \uset{\lambda}{\Jt}_{\Lambda}^\mu \uset{\lambda}{\Jt}_{\Gamma}^\nu
S_\lambda \gt_{\mu\nu} \AND
\uset{\lambda}{\zetab}_{\Lambda} = \uset{\lambda}{\Jt}_{\Lambda}^\mu S_\lambda \zetat_{\mu},
}
respectively. From the smoothing properties of $S_\lambda$ and Sobolev's inequality,
it is clear that
\leqn{gbsmoothB}{
\uset{\lambda}{\gb}_{\Lambda\Gamma},\, \uset{\lambda}{\zetab}_{\Lambda} \in C^3([0,T)\times\Tbb^3).
}
Appealing to the familiar formula
\eqn{Gbexp}{
\Gb_{\Lambda\Gamma} = \bigl(\delta^\Sigma_\Lambda \delta^\Omega_\Gamma-\Half \gb_{\Lambda\Gamma} \gb^{\Sigma\Omega}\bigr)
\Bigl(\del{\Delta}\Gammab^\Delta_{\Sigma\Omega}-
\del{\Omega}\Gammab^\Delta_{\Sigma\Delta}+\Gammab^\Delta_{\Delta\Theta}
\Gammab^\Theta_{\Sigma\Omega}-\Gammab^\delta_{\Omega\Theta}\Gammab^\Theta_{\Delta\Sigma}\Bigr)
}
for the Einstein tensor, where
\eqn{Christbexp}{
\Gammab^\Gamma_{\Sigma\Delta} = \Half \gb^{\Gamma\Lambda}\bigl(\del{\Sigma}\gt_{\Lambda\Delta}
+\del{\Delta}\gt_{\Sigma\Lambda} -  \del{\Lambda} \gt_{\Sigma\Delta}\bigr),
}
we obtain from Propositions \ref{mollprop}, \ref{fpropA} and \ref{fpropB} the estimates
\lalin{GbestA}{
&\norm{\sqrt{\gb}\Gb_{\Lambda\Gamma}}_{\Xc^{s-2}_T(\Tbb^n)}+\Bigl\|
\uset{\;\;\;\lambda}{\sqrt{|\gb|}}\uset{\lambda}{\Gb}_{\Lambda\Gamma}\Bigr\|_{\Xc^{s-2}_T(\Tbb^n)}
\leq C\bigl(\norm{\gt}_{X^{s+1}_T(\Tbb^n)},\norm{\phiv}_{Y^{s+1}_T(\Omega)}\bigr) \label{GbestA.1}
\intertext{and}
&\Bigl\|\sqrt{\gb}\Gb_{\Lambda\Gamma}-\uset{\;\;\;\lambda}{\sqrt{|\gb|}}\uset{\lambda}{\Gb}_{\Lambda\Gamma}\Bigr\|_{\Xc^{s-2}_T(\Tbb^n)}
\leq C\bigl(\norm{\gt}_{X^{s+1}_T(\Tbb^n)},\norm{\phiv}_{Y^{s+1}_T(\Omega)}\bigr) \notag \\
&\hspace{7.0cm} \times
\bigl(\norm{\gt-S_\lambda \gt}_{X^{s+1}_T(\Tbb^n)}+\norm{\phiv-S_\lambda\phiv}_{Y^{s+1}_T(\Omega)} \bigr). \label{GbestA.2}
}
Fixing a smooth test vector field $\Yb = (\Yb^\Gamma) \in C^{\infty}_0\bigl([0,T],C^\infty(\Rbb^n,\Rbb^{n+1})\bigr)$, we
find, using the familiar formula
$\nablab^\Lambda \Yb^\Gamma$ $=$ $\gb^{\Lambda\Omega}$$\bigl(\del{\Omega}$ $\Yb^\Gamma$ $+$ $\Gammab^\Gamma_{\Omega\Sigma}$$\Yb^\Sigma\bigr)$ for the covariant derivative, and  Propositions \ref{mollprop}, \ref{fpropA} and \ref{fpropB}, that
\lalin{DYbestA}{
&\norm{\nablab^\Lambda\Yb^{\Gamma}}_{\Xc^{s-1}_T(\Tbb^n)}+\Bigl\|\uset{\lambda}{\nabla}^{\Lambda}\Yb^\Gamma
\Bigr\|_{\Xc^{s-2}_T(\Tbb^n)}
\leq C\bigl(\norm{\gt}_{X^{s+1}_T(\Tbb^n)},\norm{\phiv}_{Y^{s+1}_T(\Omega)}\bigr) \label{DYbestA.1}
\intertext{and}
&\Bigl\|\nabla^{\Lambda}\Yb^\Gamma-\uset{\lambda}{\nabla}{}_{\Lambda} \Yb^\Gamma\Bigr\|_{\Xc^{s-1}_T(\Tbb^n)}
\leq C\bigl(\norm{\gt}_{X^{s+1}_T(\Tbb^n)},\norm{\phiv}_{Y^{s+1}_T(\Omega)}\bigr) \notag \\
& \hspace{7.0cm} \times\bigl(\norm{\gt-S_\lambda \gt}_{X^{s+1}_T(\Tbb^n)}+\norm{\phiv-S_\lambda\phiv}_{Y^{s+1}_T(\Omega)} \bigr).\label{DYbestA.2}
}

Due to \eqref{gbsmoothB}, we know that classical second contracted Bianchi identity
\leqn{BianchiA}{
\uset{\lambda}{\nablab}^{\Lambda}\uset{\lambda}{\Gb_{\Lambda\Gamma}}=0
}
holds on $[0,T)\times \Tbb^n$. Since the test vector field $\Yb^{\mu}$ vanishes near $X^0=0$ and $X^0=T$,
a straightforward integration by parts arguments using Stokes' Theorem and \eqref{BianchiA} shows that
\eqn{BianchiB}{
\int_{[0,T)\times\Tbb^n} \uset{\lambda}{\nablab}^{\Lambda}\Yb^\Gamma \uset{\lambda}{\Gb}_{\Lambda\Gamma} \uset{\;\;\;\lambda}{\sqrt{|\gb|}}\, d^4 X = 0 \quad 0<\lambda \leq \lambda_0,
}
which we can use to write
\alin{BianchiC}{
\int_{[0,T)\times\Tbb^n}& \nablab^{\Lambda}\Yb^\Gamma \Gb_{\Lambda\Gamma}\sqrt{|\gb|}\, d^4 X =
\int_{[0,T)\times\Tbb^n} \nablab^{\Lambda}\Yb^\Gamma \Gb_{\Lambda\Gamma}\sqrt{|\gb|}-\uset{\lambda}{\nablab}^{\Lambda}\Yb^\Gamma \uset{\lambda}{\Gb}_{\Lambda\Gamma} \uset{\;\;\;\lambda}{\sqrt{|\gb|}}\, d^4 X \\
&=\int_{[0,T)\times\Tbb^n} \Bigl(\nablab^{\Lambda}\Yb^\Gamma-\uset{\lambda}{\nablab}^{\Lambda}\Yb^\Gamma\Bigr) \Gb_{\Lambda\Gamma}\sqrt{|\gb|}\, d^4 X  +
\int_{[0,T)\times\Tbb^n}  \uset{\lambda}{\nablab}^{\Lambda}\Yb^\Gamma\Bigl(\Gb_{\Lambda\Gamma}\sqrt{|\gb|}- \uset{\lambda}{\Gb}_{\Lambda\Gamma} \uset{\;\;\;\lambda}{\sqrt{|\gb|}}\Bigr)\, d^4 X.
}
Applying the triangle and H\"{o}lder inequalities to this expression,
the estimates \eqref{GbestA.1}-\eqref{DYbestA.2} imply that
\leqn{BianchiD}{
\left|\int_{[0,T)\times\Tbb^n} \nablab^{\Lambda}\Yb^\Gamma \Gb_{\Lambda\Gamma}\sqrt{|\gb|}\, d^4 X
\right| \leq C\bigl(\norm{\gt}_{X^{s+1}_T(\Tbb^n)},\norm{\phiv}_{Y^{s+1}_T(\Omega)}\bigr)
\bigl(\norm{\gt-S_\lambda \gt}_{X^{s+1}_T(\Tbb^n)}+\norm{\phiv-S_\lambda\phiv}_{Y^{s+1}_T(\Omega)} \bigr).
}
But, by Proposition \ref{mollprop}, we have that
\eqn{BianchiE}{
\lim_{\lambda\searrow 0} \norm{\gt-S_\lambda \gt}_{X^{s+1}_T(\Tbb^n)} = 0 \AND
\lim_{\lambda\searrow 0} \norm{\phiv-S_\lambda\phiv}_{Y^{s+1}_T(\Omega)}=0,
}
and so, we conclude from \eqref{BianchiD} that
\leqn{BianchiF}{
\int_{[0,T)\times\Tbb^n} \nablab^{\Lambda}\Yb^\Gamma \Gb_{\Lambda\Gamma}\sqrt{|\gb|}\, d^4 X =0.
}
Similar arguments starting form the classical contracted commutator identity
\eqn{commA}{
\uset{\lambda}{\nablab}^{\Lambda} \uset{\lambda}{\nablab}_{\Gamma}\uset{\lambda}{\zetab}_{\Lambda}
-\uset{\lambda}{\nablab}_{\Gamma} \uset{\lambda}{\nablab}^{\Lambda} \uset{\lambda}{\zetab}_{\Lambda}
 = \uset{\lambda}{\Rb}_{\Gamma}{}^{\Lambda}\uset{\lambda}{\zetab}_{\Lambda},
}
can be used to establish the identity
\leqn{commB}{
\int_{[0,T)\times\Tbb^n}\bigl(-\nablab^{\Lambda}\Yb^\Gamma \nablab_\Gamma \zetab_\Lambda
+\nablab_\Gamma \Yb^\Gamma \nablab^\Lambda \zetab_\Lambda\bigr)  \sqrt{|\gb|}\, d^4 X
= \int_{[0,T)\times\Tbb^n} \Yb^\Gamma \Rb_{\Gamma}{}^\Lambda\zetab_\Lambda \sqrt{|\gb|}\, d^4 X.
}

We note also that
\leqn{commC}{
\gb_{\Lambda\Gamma} \in X^s_{T}(\Tbb^n),
}
\eqn{commD}{
\zetab_{\Lambda} \in \Xc^{s-1}_T(\Tbb^n)   \cap \bigcap_{\ell=0}^1 W^{\ell,\infty}\bigl([0,T),H^{1-\ell}(\Tbb^n)\bigr),
}
and
\leqn{commE}{
\Tb_{\Lambda\Gamma} \in Y^{s}_{T}(\Omega),
}
by \eqref{gbdef}, \eqref{stressbzetab}, and Propositions \ref{fpropA} and \ref{fpropB}. It is also not difficult
 to see that \eqref{DYbestA.1}, \eqref{commC},
and \eqref{commE} guarantee
that $\Tb_{\Lambda\Gamma}$ has a well-defined trace on the boundary $(0,T)\times \del{}\Omega$, and
that $\gb_{\Lambda\Gamma}$, $\nablab^{\Lambda}\Yb^{\Gamma}$,
and $\Tb_{\Lambda\Gamma}$
are regular enough to
justify the following integration by parts argument:
\lalin{divTb}{
\int_{[0,T)\times \Tbb^n} \nablab^{\Lambda}&\Yb^{\Gamma}\chi_\Omega \Tb_{\Lambda\Gamma} \sqrt{|\gb|}\, d^4 X
= \int_{[0,T)\times \Omega} \nablab^{\Lambda}\Yb^{\Gamma}\Tb_{\Lambda\Gamma} \sqrt{|\gb|}\, d^4 X \notag \\
&= \int_{[0,T)\times \del{}\Omega} \nb^\Lambda \Yb^{\Gamma}\Tb_{\Lambda\Gamma}\, d\bar{\mu}
- \int_{[0,T)\times \Omega} \Yb^{\Gamma} \nablab^\Lambda \Tb_{\Lambda\Gamma} \sqrt{|\gb|}\, d^4 X
= 0, \label{divTb.1}
}
where in obtaining the last equality we have used \eqref{EEbarA.2} and \eqref{EEbarB}.

Contracting \eqref{EEbarA.1} with $\nablab^\Lambda \Yb^{\Gamma}$ and integrating over $[0,T)\times \Tbb^n$,
we observe, with help of  \eqref{BianchiF}, \eqref{commB} and \eqref{divTb.1}, that
$\zetab_{\Lambda}$ satisfies
\eqn{weakgaugeA}{
\int_{[0,T)\times \Tbb^n} \bigl(-\nablab^{\Lambda}\Yb^{\Gamma} \nablab_{\Lambda}\zetab_{\Gamma}
+ \Yb^{\Gamma}\Rb_\Gamma{}^\Lambda \zetab_{\Lambda} \bigr) \sqrt{|\gb|}\, d^4 X.
}
Since $\Yb^{\Gamma}$ was chosen arbitrarily, it follows that $\zetab_\Lambda $$\in$ $\bigcap_{\ell=0}^1$ $W^{\ell,\infty}\bigl([0,T),H^{1-\ell}(\Tbb^n)\bigr)$ is a weak
solution of the IVP:
\alin{weakgaugeB}{
\nablab^{\Lambda} \nablab_{\Lambda}\zetab_{\Gamma}
+ \Rb_\Gamma{}^\Lambda \zetab_{\Lambda} & = 0 \hspace{1.0cm} \text{in $[0,T)\times \Tbb^n$,}\\
(\zetab_\Lambda,\del{0}\zetab_\Lambda) &= (0,0) \hspace{0.4cm} \text{in $\{0\}\times \Tbb^n$.}
}
By the uniqueness of weak solutions, we conclude that $\zetab_\Lambda = 0$ in $[0,T)$$\times$$\Tbb^n$, which
is clearly equivalent to $\zetat_\mu = 0$ in $[0,T)$$\times$$\Tbb^n$. This complete the proof of Theorem \ref{fulllocthm}.

\newpage

\noindent \emph{\textbf{Acknowledgments}}

\smallskip

\noindent This work was partially supported by the ARC grant FT1210045.
Part of this work was completed during a visit by the authors T.A.O. and B.G.S.
to the Albert Einstein Institute. We are grateful to the Institute for
its support and hospitality during these visits.
\appendix

\sect{calculus}{Calculus inequalities}

In this appendix, we state, for the convenience of the reader, a number of calculus inequalities that will
be used throughout this article. In the following, $\Omega$ will always denote an open subset of $\Tbb^n$ with smooth boundary.
Proofs of the inequalities involving the standard Sobolev spaces $W^{s,p}(\Omega)$ are well known and may be found, for example, in the books \cite{AdamsFournier:2003}, \cite{Friedman:1976} and \cite{TaylorIII:1996}.
The proofs of the inequalities involving the $\Hc^{0,s}(\Tbb^n)$ space can either be found in Appendix A of \cite{AnderssonOliynyk:2014}, or else are
easily derived from the inequalities found there.

\begin{thm}{\emph{[Sobolev's inequality]}} \label{Sobolev} Suppose $s\in \Zbb_{\geq 1}$,
$1\leq p < \infty$, and $sp>n$. Then
\eqn{Sobolev1}{
\norm{u}_{L^{\infty}(\Omega)} \lesssim \norm{u}_{W^{s,p}(\Omega)}
}
for all $u\in W^{s,p}(\Omega)$.
\end{thm}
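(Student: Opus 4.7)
The plan is to reduce the inequality on $\Omega$ to the same inequality on the ambient torus $\Tbb^n$ by extension, and then prove it there via the classical Sobolev embedding chain. First, I would invoke a total extension operator $\Ebb_\Omega \colon W^{s,p}(\Omega) \to W^{s,p}(\Tbb^n)$ with $\norm{\Ebb_\Omega u}_{W^{s,p}(\Tbb^n)} \lesssim \norm{u}_{W^{s,p}(\Omega)}$; such an operator exists on smooth bounded domains by Theorems 5.21--5.22 of \cite{AdamsFournier:2003} (already cited in the paper for the similar construction around \eqref{EdefBa}). Since $\norm{u}_{L^\infty(\Omega)} \le \norm{\Ebb_\Omega u}_{L^\infty(\Tbb^n)}$, it suffices to establish $W^{s,p}(\Tbb^n) \hookrightarrow L^\infty(\Tbb^n)$ under the hypothesis $sp > n$.

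For the periodic embedding, I would iterate the first-order Gagliardo--Nirenberg--Sobolev inequality. The base step is $W^{1,p}(\Tbb^n) \hookrightarrow L^{p^*}(\Tbb^n)$ with $p^* = np/(n-p)$ in the subcritical range $p<n$, which on the torus follows from the Euclidean version applied to a smooth periodic representative, or from the standard proof using the fundamental theorem of calculus along coordinate directions together with the multilinear H\"older-type inequality of Gagliardo and Nirenberg. Iterating this embedding against $D^\alpha u$ for $|\alpha|\le s$, the integrability exponent $p_k$ climbs according to $1/p_{k+1} = 1/p_k - 1/n$ while the derivative count drops by one. The hypothesis $sp>n$ guarantees that after finitely many steps one either lands in a space $W^{1,q}(\Tbb^n)$ with $q>n$, or one reaches $L^q(\Tbb^n)$ for every $q<\infty$ in the borderline case $p_k = n$; the latter is handled by a small perturbation of $p$, which is permissible because $\Tbb^n$ has finite measure.

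Once in the regime $W^{1,q}(\Tbb^n)$ with $q>n$, I would close the argument by Morrey's inequality
\[
|u(x)-u(y)| \lesssim |x-y|^{1-n/q}\, \norm{\nabla u}_{L^q(\Tbb^n)},
\]
proved in the standard way by estimating the mean oscillation
$
\fint_{B_r(x)} |u(z)-u(x)|\, dz \lesssim r^{1-n/q}\norm{\nabla u}_{L^q}
$
via the radial representation $u(z)-u(x) = \int_0^1 \nabla u(x+t(z-x))\cdot(z-x)\,dt$ followed by H\"older. Combined with a zero-order $L^1$ bound on $u$ (finite measure of $\Tbb^n$), this yields $\norm{u}_{L^\infty(\Tbb^n)} \lesssim \norm{u}_{W^{1,q}(\Tbb^n)}$, and hence, chaining with the iteration above, $\norm{u}_{L^\infty(\Tbb^n)} \lesssim \norm{u}_{W^{s,p}(\Tbb^n)}$.

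The main obstacle is not conceptual but bookkeeping: one must carefully track the exponent arithmetic through the iteration so as to land strictly above $n$ after $s-1$ steps, and treat the borderline cases where some intermediate $p_k$ equals $n$ by interpolation (or by invoking the slightly weaker embedding $W^{1,n}(\Tbb^n)\hookrightarrow L^q$ for all finite $q$). None of these complications interacts with the extension step, so assembling the pieces yields the stated estimate.
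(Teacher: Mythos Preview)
Your argument is correct and is essentially the standard textbook proof of the Sobolev embedding theorem. The paper, however, does not give its own proof of this statement: the appendix explicitly says that the inequalities involving the standard Sobolev spaces $W^{s,p}(\Omega)$ are well known and refers to \cite{AdamsFournier:2003}, \cite{Friedman:1976}, and \cite{TaylorIII:1996} for proofs. So there is nothing to compare against beyond noting that your extension-plus-iteration-plus-Morrey scheme is precisely the kind of argument one finds in those references.
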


\begin{thm}{\emph{[Integral Rellich-Kondrachov Theorem]}} \label{iRellich}
Suppose $1\leq p \leq n$ and let
\eqn{iRellich1}{
p^*= \begin{cases} \frac{np}{n-p} & \text{if $p<n$}\\
\infty & \text{if $p=n$}
\end{cases}.
}
Then $W^{1,p}(\Omega)\subset L^q(\Omega)$ for
$1\leq q < p^*$, and the embedding is compact.
\end{thm}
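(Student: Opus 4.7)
The plan is to establish the result in two stages: first the continuous embedding $W^{1,p}(\Omega) \hookrightarrow L^{p^*}(\Omega)$ (interpreted, when $p=n$, as $L^q(\Omega)$ for every finite $q$), and then upgrade the embedding $W^{1,p}(\Omega) \hookrightarrow L^q(\Omega)$ to a compact embedding for each $q \in [1, p^*)$. Throughout, I would rely on a bounded extension operator $\Ebb_\Omega : W^{1,p}(\Omega) \to W^{1,p}(\Rbb^n)$ with support in a fixed bounded neighborhood of $\overline{\Omega}$, whose existence is guaranteed by the smoothness of $\del{}\Omega$ (as in \cite{AdamsFournier:2003}). This reduces most of the work to the Euclidean setting.

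For the continuous embedding, I would prove the Gagliardo--Nirenberg--Sobolev inequality $\norm{u}_{L^{p^*}(\Rbb^n)} \lesssim \norm{Du}_{L^p(\Rbb^n)}$ on $C^\infty_c(\Rbb^n)$, handling the base case $p=1$ by writing $|u(x)|^{n/(n-1)}$ as a product of one-dimensional integrals along each coordinate axis and applying the generalized H\"older inequality; the case $1<p<n$ then follows by applying the $p=1$ estimate to $|u|^\gamma$ with $\gamma = p(n-1)/(n-p)$, and the case $p=n$ is handled by a direct modification yielding bounds in $L^q$ for every finite $q$. Composing with $\Ebb_\Omega$ and restricting gives the continuous embedding $W^{1,p}(\Omega) \hookrightarrow L^{p^*}(\Omega)$.

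For compactness, take a bounded sequence $\{u_k\} \subset W^{1,p}(\Omega)$ and set $\ut_k := \Ebb_\Omega u_k$, a bounded sequence in $W^{1,p}(\Rbb^n)$ supported in a common bounded set. I would verify the Kolmogorov--Riesz--Fr\'echet criterion in $L^p(\Rbb^n)$: the tightness condition is automatic from the common compact support, while the translation estimate
\[
\norm{\ut_k(\cdot + h) - \ut_k}_{L^p(\Rbb^n)} \le |h|\, \norm{D\ut_k}_{L^p(\Rbb^n)}
\]
(proved first for smooth functions, then by density) yields equicontinuity of translations uniformly in $k$. Extracting a subsequence convergent in $L^p(\Omega)$ establishes compactness of $W^{1,p}(\Omega) \hookrightarrow L^p(\Omega)$. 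To upgrade to $L^q$ for $p \le q < p^*$, I would interpolate: choosing $\theta \in (0,1]$ with $\tfrac{1}{q} = \tfrac{\theta}{p} + \tfrac{1-\theta}{p^*}$, H\"older's inequality gives
\[
\norm{u_j - u_k}_{L^q(\Omega)} \le \norm{u_j - u_k}_{L^p(\Omega)}^\theta \, \norm{u_j - u_k}_{L^{p^*}(\Omega)}^{1-\theta},
\]
the second factor being bounded by the continuous embedding already established, while the first factor tends to zero along the chosen subsequence. For $1 \le q < p$, the finite measure of $\Omega$ together with H\"older's inequality gives $\norm{u}_{L^q(\Omega)} \lesssim \norm{u}_{L^p(\Omega)}$, reducing this range to the previous case.

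The main obstacle is verifying the translation-equicontinuity hypothesis of Kolmogorov--Riesz in a way that is uniform in the sequence, and the cleanest route to this is precisely through the extension operator, which is why smoothness of $\del{}\Omega$ is essential. Once compactness in $L^p(\Omega)$ is in hand, the remainder of the argument is a routine interpolation, so I would expect the bulk of the writing to go into the Euclidean Sobolev inequality and the Kolmogorov--Riesz verification.
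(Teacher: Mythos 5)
The paper does not prove this theorem: the appendix opens by noting that ``Proofs of the inequalities involving the standard Sobolev spaces $W^{s,p}(\Omega)$ are well known and may be found, for example, in the books \cite{AdamsFournier:2003}, \cite{Friedman:1976} and \cite{TaylorIII:1996},'' and Theorem~\ref{iRellich} is simply quoted without argument. Your outline is exactly the standard textbook proof that those references contain: Gagliardo--Nirenberg--Sobolev on $\Rbb^n$ via the extension operator for the continuous embedding, Kolmogorov--Riesz--Fr\'echet (tightness from common compact support, equicontinuity of translations from the $\norm{\tau_h u - u}_{L^p} \leq |h|\norm{Du}_{L^p}$ estimate) for compactness in $L^p$, and interpolation to upgrade to $L^q$ for $q<p^*$. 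So there is no genuinely different route to compare against; you have reconstructed the cited argument faithfully.

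One small wrinkle worth tightening: in the interpolation step for the case $p=n$ you cannot literally take $p^*=\infty$ as the second exponent, since $W^{1,n}(\Omega)\not\hookrightarrow L^\infty(\Omega)$ and the factor $\norm{u_j-u_k}_{L^\infty(\Omega)}$ is not controlled. Given $q<\infty$, instead pick any finite $q_0$ with $q<q_0<\infty$, use the continuous embedding $W^{1,n}(\Omega)\hookrightarrow L^{q_0}(\Omega)$ (which you did establish), and interpolate between $L^p$ and $L^{q_0}$. With that adjustment the argument is complete.
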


\begin{thm}{\emph{[Fractional Rellich-Kondrachov Compactness Theorem]}} \label{fRellich}
Suppose $s\in (0,1)$, $1<p<\infty$, and let
\eqn{fRellich1}{
p_s= \begin{cases} \frac{np}{n-sp} & \text{if $sp<n$}\\
\infty & \text{if $p=sn$}
\end{cases}.
}
Then
$W^{s,p}(\Omega)\subset L^q(\Omega)$ for $1\leq q < p_s$, and the embedding is compact.
\end{thm}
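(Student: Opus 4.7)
The plan is to establish this in two separate stages—first continuity of the embedding, then compactness—and then interpolate to cover all $q<p_s$. Since $\Omega$ is a bounded open set in $\Tbb^n$ with smooth boundary, I would first invoke a standard fractional extension operator $E \: : \: W^{s,p}(\Omega) \longrightarrow W^{s,p}(\Rbb^n)$ with $\norm{Eu}_{W^{s,p}(\Rbb^n)} \lesssim \norm{u}_{W^{s,p}(\Omega)}$, so that it suffices to analyze bounded sequences in $W^{s,p}(\Rbb^n)$ that vanish outside a fixed bounded neighborhood of $\Omega$.

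For the continuous embedding $W^{s,p}(\Omega)\subset L^{p_s}(\Omega)$, I would reproduce the fractional Sobolev–Gagliardo–Nirenberg inequality: starting from the Gagliardo seminorm
\[
[u]_{W^{s,p}(\Rbb^n)}^p = \int_{\Rbb^n}\int_{\Rbb^n}\frac{|u(x)-u(y)|^p}{|x-y|^{n+sp}}\,dx\,dy,
\]
one controls $\norm{u}_{L^{p_s}(\Rbb^n)}$ by $[u]_{W^{s,p}(\Rbb^n)}$ via either the classical layer-cake/rearrangement argument (when $sp<n$) or a direct reduction to the integer-order case through the identification with a Besov or Triebel–Lizorkin space. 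Boundedness of $\Omega$ then gives $L^{p_s}(\Omega)\hookrightarrow L^q(\Omega)$ for all $q\leq p_s$, which yields the embedding into $L^q$.

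The main step is compactness in $L^p(\Omega)$, which I would obtain via the Fr\'echet--Kolmogorov criterion. Concretely, for any bounded sequence $\{u_k\}\subset W^{s,p}(\Omega)$, the extensions $Eu_k$ satisfy the translation estimate
\[
\norm{\tau_h(Eu_k)-Eu_k}_{L^p(\Rbb^n)}^p \lesssim |h|^{sp}\,[Eu_k]_{W^{s,p}(\Rbb^n)}^p,
\]
which follows from the Gagliardo seminorm after splitting the difference quotient into $|x-y|\lesssim |h|$ and $|x-y|\gtrsim |h|$. Combined with uniform $L^p$-bounds and the fact that the $Eu_k$ are supported in a common bounded set, Fr\'echet--Kolmogorov produces a subsequence convergent in $L^p(\Omega)$. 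The case $1\leq q<p$ follows from H\"older's inequality, while for $p<q<p_s$ I would apply the Lyapunov/interpolation inequality
\[
\norm{u}_{L^q(\Omega)} \leq \norm{u}_{L^p(\Omega)}^{\theta}\norm{u}_{L^{p_s}(\Omega)}^{1-\theta}, \qquad \theta\in(0,1),
\]
combining the $L^p$-strong convergence with uniform $L^{p_s}$-boundedness from the first step.

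The hard part is the translation estimate and the associated Gagliardo bookkeeping: one must verify that the $sp$-gain in $|h|$ is sharp enough to guarantee equicontinuity uniformly in $k$, which is really the place where fractional smoothness genuinely buys compactness. Everything else is either a direct consequence of the extension operator or a standard interpolation argument; the regularity $C^\infty$ of $\del{}\Omega$ enters only to justify the existence of $E$.
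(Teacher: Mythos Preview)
Your proposal is a correct and standard route to the fractional Rellich--Kondrachov theorem: extension to $\Rbb^n$, the fractional Sobolev inequality for the continuous embedding, the translation estimate $\norm{\tau_h u - u}_{L^p} \lesssim |h|^{s}[u]_{W^{s,p}}$ feeding into Fr\'echet--Kolmogorov for compactness in $L^p$, and then interpolation to reach all $q<p_s$. Each of these steps is well documented in the literature and there is no gap in the strategy.

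There is, however, nothing to compare against: the paper does not supply a proof of this statement. Theorem~\ref{fRellich} appears in the appendix among a list of calculus inequalities that the paper explicitly declares to be ``well known'' and for which it simply refers the reader to standard texts (Adams--Fournier, Friedman, Taylor). So your write-up is not an alternative to the paper's argument but rather a self-contained substitute for the citation, and as such it is perfectly adequate.
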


\begin{thm}{\emph{[Multiplication estimates]}} \label{calcpropB} $\;$

\begin{enumerate}[(i)]
\item If $1\leq p <\infty$, $s_1,s_2,s_{3}\in \Zbb$, $s_1,s_2 \geq s_{3}\geq 0$, and $s_1+s_2 -n/p > s_{3}$, then
\eqn{calcpropB.1}{
\norm{u_1 u_2}_{W^{p,s_3}(\Omega)} \lesssim \norm{u_1}_{W^{p,s_1}(\Omega)} \norm{u_2}_{W^{p,s_2}(\Omega)}
}
for all $u_i\in W^{p,s_i}(\Omega)$, $i=1,2$.
\item If  $s_1,s_2,s_{3}\in \Zbb$, $s_1,s_2 \geq s_{3}\geq 0$, and $s_1+s_2 -n/2 > s_{3}$, then
\eqn{calcproB.2}{
\norm{u_1 u_2}_{\Hc^{0,s_3}(\Tbb^n)} \lesssim \norm{u_1}_{\Hc^{0,s_1}(\Tbb^n)} \norm{u_2}_{\Hc^{0,s_2}(\Tbb^n)}
}
for all $u_i \in \Hc^{0,s_i}(\Tbb^n)$.
\end{enumerate}
\end{thm}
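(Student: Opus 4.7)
The plan is to establish part (i) by the standard combination of the Leibniz rule, H\"older's inequality, and Sobolev embeddings, and then to derive part (ii) from (i) by decomposing the $\Hc^{0,s_3}$ norm into its constituent pieces on $\Omega$, $\Omega^c$, and $\Tbb^n$.

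For part (i), I would first extend $u_1$ and $u_2$ to $\Tbb^n$ via the extension operator $\Ebb_\Omega$ from \eqref{EdefB}, which reduces matters to the corresponding multiplication estimate on $\Tbb^n$ where no boundary is present. Given a multi-index $\alpha$ with $|\alpha|\leq s_3$, the Leibniz rule expands $D^\alpha(u_1 u_2)$ as a linear combination of terms $D^\beta u_1\, D^{\alpha-\beta}u_2$ with $|\beta|=k$ and $|\alpha-\beta|=s_3-k$. For each such term I would apply H\"older's inequality with conjugate exponents $q_1,q_2$ satisfying $1/q_1+1/q_2=1/p$, together with the Sobolev embedding $W^{s_i-k_i,p}(\Tbb^n)\hookrightarrow L^{q_i}(\Tbb^n)$ (taking $q_i=\infty$ when $s_i-k_i>n/p$, and $q_i=np/(n-(s_i-k_i)p)$ otherwise). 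The hypothesis $s_1+s_2-n/p>s_3$ is precisely what ensures that such a pair $(q_1,q_2)$ exists for every admissible $(k_1,k_2)=(|\beta|,|\alpha-\beta|)$, covering both the supercritical case (where one factor lies in $L^\infty$) and the subcritical case (where both factors are placed into proper $L^{q_i}$ spaces).

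For part (ii), unpacking the definition of the $\Hc^{0,s_3}$ norm gives
\begin{equation*}
\norm{u_1 u_2}_{\Hc^{0,s_3}(\Tbb^n)}^2 = \norm{u_1 u_2}_{H^{s_3}(\Omega)}^2 + \norm{u_1 u_2}_{L^2(\Tbb^n)}^2 + \norm{u_1 u_2}_{H^{s_3}(\Omega^c)}^2.
\end{equation*}
The first and third summands are bounded directly by part (i) with $p=2$ applied on $\Omega$ and $\Omega^c$, respectively. For the middle term I would use $\norm{u_1 u_2}_{L^2(\Tbb^n)}^2\leq \norm{u_1 u_2}_{L^2(\Omega)}^2 + \norm{u_1 u_2}_{L^2(\Omega^c)}^2$ and again invoke (i), this time with $p=2$ and $s_3=0$, on each of $\Omega$ and $\Omega^c$; note that the hypothesis $s_1+s_2>n/2+s_3\geq n/2$ suffices here. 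Summing and using $\norm{u_i}_{H^{s_i}(\Omega)},\,\norm{u_i}_{H^{s_i}(\Omega^c)}\leq \norm{u_i}_{\Hc^{0,s_i}(\Tbb^n)}$ delivers the claimed estimate.

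The only real work is the bookkeeping of exponents in part (i); the crossover between the supercritical and subcritical Sobolev--H\"older regimes is where care is required, but the assumption $s_1+s_2-n/p>s_3$ is tailored so that a valid choice of exponents exists uniformly in the Leibniz expansion. Nothing in part (ii) requires any estimate beyond repeated application of part (i) on $\Omega$ and $\Omega^c$.
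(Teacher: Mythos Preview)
The paper does not actually prove this theorem: the appendix merely states the inequalities and defers part (i) to standard references (Adams--Fournier, Friedman, Taylor) and part (ii) to Appendix~A of \cite{AnderssonOliynyk:2014}, noting that the latter is ``easily derived from the inequalities found there.'' Your proposal is correct and is precisely the standard argument one would find behind those citations --- Leibniz plus H\"older plus Sobolev embedding for (i), and then decomposing the $\Hc^{0,s_3}(\Tbb^n)$ norm into its $H^{s_3}(\Omega)$, $H^{s_3}(\Omega^c)$, and $L^2(\Tbb^n)$ pieces and applying (i) with $p=2$ on each domain for (ii). There is nothing to compare against here beyond noting that your sketch matches the intended route.
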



\begin{thm}{\emph{[Moser's estimates]}}  \label{Moser}
Suppose $s\in \Zbb_{\geq 1}$, $1\leq p \leq \infty$, $|\alpha|\leq s$, $f\in C^s(\Rbb)$, $f(0) = 0$,
 and $g\in C^{s+1}(\Rbb)$. Then
\eqn{Moser.1}{
\norm{D^\alpha f(u)}_{L^{p}(\Omega)} \leq C\bigl(\norm{f}_{C^s}\bigr)(1+\norm{u}^{s-1}_{L^\infty(\Omega)})\norm{u}_{W^{s,p}(\Omega)}
}
and
\eqn{Moser.2}{
\norm{D^\alpha ( g(u)- g(v))}_{L^{p}(\Rbb^n)} \leq C\bigl(\norm{g}_{C^{s+1}}\bigr)(1+\norm{u}^{s-1}_{L^\infty(\Rbb^n)}+ \norm{v}^{s-1}_{L^\infty(\Omega)} )\norm{u-v}_{W^{s,p}(\Omega)}
}
for all $u,v \in L^\infty(\Omega)\cap W^{s,p}(\Omega)$.
\end{thm}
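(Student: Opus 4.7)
The plan is to prove the two Moser-type inequalities via the classical Faà di Bruno chain rule combined with Gagliardo--Nirenberg interpolation, treating the first estimate as the core result and deriving the second from it through a Hadamard-type integral representation.

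For the first inequality, I would separate into cases. When $|\alpha|=0$, the hypothesis $f(0)=0$ together with the mean value theorem gives $|f(u(x))| \leq \|f\|_{C^1}|u(x)|$ pointwise, so $\|f(u)\|_{L^p(\Omega)} \leq C(\|f\|_{C^s})\|u\|_{L^p(\Omega)} \leq C(\|f\|_{C^s})\|u\|_{W^{s,p}(\Omega)}$. When $|\alpha| \geq 1$, the Faà di Bruno formula produces
\[
D^\alpha f(u) = \sum_{k=1}^{|\alpha|} f^{(k)}(u) \sum_{\substack{\beta_1+\cdots+\beta_k=\alpha \\ |\beta_j|\geq 1}} c_{\beta_1,\ldots,\beta_k}\, D^{\beta_1}u \cdots D^{\beta_k}u,
\]
in which each factor $f^{(k)}(u)$ is controlled in $L^\infty$ by $\|f\|_{C^s}$ since $k\leq |\alpha|\leq s$. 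For each product in the inner sum I would invoke a Gagliardo--Nirenberg-type interpolation
\[
\|D^{\beta_1}u \cdots D^{\beta_k}u\|_{L^p(\Omega)} \lesssim \|u\|_{L^\infty(\Omega)}^{k-1}\, \|u\|_{W^{|\alpha|,p}(\Omega)},
\]
valid whenever $|\beta_1|+\cdots+|\beta_k|=|\alpha|\leq s$ with each $|\beta_j|\geq 1$; this is obtained by distributing the total order of differentiation via Hölder with appropriately chosen Lebesgue exponents, putting all but one factor into a high Lebesgue norm controlled by $L^\infty$. Summing over $k$ and over partitions and recognising that the maximum value of $k-1$ achievable is $s-1$ (when all $|\beta_j|=1$) yields the stated bound.

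For the difference estimate I would use the Hadamard identity
\[
g(u) - g(v) = (u-v)\int_0^1 g'(v+t(u-v))\, dt =: (u-v)\, H(u,v),
\]
apply the Leibniz rule to $D^\alpha[(u-v)H]$, and estimate using the Sobolev multiplication result (Theorem \ref{calcpropB}) together with a Faà di Bruno expansion of $H$ under the integral sign. The resulting terms $D^\beta H$ are sums involving $g^{(k+1)}(v+t(u-v))$, uniformly bounded in $L^\infty$ by $\|g\|_{C^{s+1}}$, multiplied by products of derivatives of $v+t(u-v)$, which are then controlled by the same Gagliardo--Nirenberg argument as above. This produces the factor $1+\|u\|_{L^\infty(\Rbb^n)}^{s-1}+\|v\|_{L^\infty(\Omega)}^{s-1}$, while the explicit factor $u-v$ provides the $\|u-v\|_{W^{s,p}(\Omega)}$ on the right hand side after Leibniz and Hölder.

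The main technical obstacle is bookkeeping the interpolation exponents in the Gagliardo--Nirenberg step uniformly over all partitions, and in particular treating the endpoint $p=\infty$, where one cannot directly use Hölder with finite exponents; these endpoints are handled either by reducing to lower-order derivatives and trading them for $L^\infty$ bounds, or by a density argument. A secondary bookkeeping issue is the mild asymmetry of the domains $\Rbb^n$ and $\Omega$ in the difference estimate, which reflects the anticipated use with $u$ extended off $\Omega$ by the total extension operator of \eqref{EdefB} while $v$ is intrinsic to $\Omega$; this is resolved by performing the proof on $\Rbb^n$ using such extensions and then restricting, with the restriction step costing only constant factors controlled by $K$ in \eqref{EdefB}.
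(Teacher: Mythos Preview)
The paper does not prove this theorem: the appendix explicitly states that the inequalities involving the standard Sobolev spaces are well known and refers the reader to \cite{AdamsFournier:2003}, \cite{Friedman:1976} and \cite{TaylorIII:1996} for proofs. Your sketch via Fa\`a di Bruno plus Gagliardo--Nirenberg interpolation for the first estimate, and the Hadamard integral representation plus Leibniz for the second, is exactly the classical argument one finds in those references, so there is nothing to compare against beyond noting that your outline is the standard one and is correct in substance.
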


\begin{prop} \label{fpropA}
Suppose $s_1,s_2,s_{3}\in \Zbb$, $s_1,s_2 \geq s_{3}\geq 0$, $s_1+s_2 -n/2 > s_{3}$, and $0\leq \ell \leq s_3$. Then
\eqn{fpropA1}{
\norm{\del{t}^\ell (u_1 u_2)}_{\Hc^{0,s_3-\ell}(\Tbb^n)} \lesssim \norm{u_1}_{\Ec^{s_1}(\Tbb^n)} \norm{u_2}_{\Ec^{s_2}(\Tbb^n)}
}
and
\eqn{fpropA2}{
\norm{\del{t}^\ell (v_1 v_2)}_{H^{s_3-\ell}(\Omega)} \lesssim \norm{v_1}_{E^{s_1}(\Omega)} \norm{v_2}_{E^{s_2}(\Omega)}
}
for all $u_i \in \Ec^{s_i}(\Tbb^n)$ and $v_i \in E^{s_i}(\Omega)$, $i=1,2$.
\end{prop}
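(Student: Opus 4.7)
The plan is to reduce both estimates to a straightforward application of the Leibniz rule in time followed by the multiplication estimates of Theorem \ref{calcpropB}. First I would expand
\[
\del{t}^\ell (u_1 u_2) = \sum_{k=0}^{\ell} \binom{\ell}{k}\, \del{t}^k u_1 \cdot \del{t}^{\ell-k} u_2,
\]
and apply the triangle inequality in $\Hc^{0,s_3-\ell}(\Tbb^n)$ (respectively, $H^{s_3-\ell}(\Omega)$). Observe that $s_3-\ell \geq 0$ by the hypothesis $\ell \leq s_3$, so the target space is well defined. Each summand is then a product estimate in spatial variables only, with $\del{t}^k u_1$ playing the role of a function in $\Hc^{0,s_1-k}(\Tbb^n)$ and $\del{t}^{\ell-k} u_2$ a function in $\Hc^{0,s_2-(\ell-k)}(\Tbb^n)$, which is precisely what the energy norms $\Ec^{s_i}$ control.

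The core step is to verify the three index conditions required by Theorem \ref{calcpropB}.(ii) (and likewise (i) with $p=2$ for the $\Omega$ estimate) applied with $(s_1-k, s_2-(\ell-k), s_3-\ell)$ in place of $(s_1,s_2,s_3)$. The inequality $s_1 - k \geq s_3 - \ell$ rewrites as $s_1 - s_3 \geq k - \ell$, which holds since $s_1 \geq s_3$ and $k \leq \ell$; symmetrically, $s_2-(\ell-k) \geq s_3-\ell$ reduces to $s_2 - s_3 \geq -k$, which holds as $s_2 \geq s_3 \geq 0$ and $k \geq 0$. Finally, the Sobolev scaling condition $(s_1-k) + (s_2-(\ell-k)) - n/2 > s_3-\ell$ simplifies, after the $\ell$'s cancel, to $s_1 + s_2 - n/2 > s_3$, which is exactly the hypothesis. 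Hence Theorem \ref{calcpropB} applies uniformly to every term in the Leibniz sum and yields
\[
\norm{\del{t}^k u_1 \cdot \del{t}^{\ell-k} u_2}_{\Hc^{0,s_3-\ell}(\Tbb^n)}
\lesssim
\norm{\del{t}^k u_1}_{\Hc^{0,s_1-k}(\Tbb^n)}\,
\norm{\del{t}^{\ell-k} u_2}_{\Hc^{0,s_2-(\ell-k)}(\Tbb^n)}.
\]

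To finish, I would bound each factor on the right by its energy norm, since by the definition of $\Ec^{s_i}(\Tbb^n)$ we have $\norm{\del{t}^k u_1}_{\Hc^{0,s_1-k}(\Tbb^n)} \leq \norm{u_1}_{\Ec^{s_1}(\Tbb^n)}$ and analogously for $u_2$. Summing over the finite index set $0 \leq k \leq \ell$, absorbing the binomial coefficients into the implicit constant, gives the first inequality. The second inequality is obtained in exactly the same way, using the spatial product estimate in $H^{s_3-\ell}(\Omega)$ (Theorem \ref{calcpropB}.(i) with $p=2$) in place of the intersection-space version, and using the definition of the $E^{s_i}(\Omega)$ norm. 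No genuine obstacle arises; the only thing requiring care is the bookkeeping that the shifted indices $(s_i-k, s_i-(\ell-k))$ continue to satisfy the product-law hypotheses, which as shown above follows from the single hypothesis $s_1+s_2-n/2 > s_3$ together with $\ell \leq s_3 \leq \min(s_1,s_2)$.
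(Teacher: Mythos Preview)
Your proof is correct and is precisely the natural argument: Leibniz in time, then Theorem \ref{calcpropB} termwise with the shifted indices, whose hypotheses you verify cleanly. The paper does not actually write out a proof of this proposition---it states in the appendix that such inequalities ``can either be found in Appendix A of \cite{AnderssonOliynyk:2014}, or else are easily derived from the inequalities found there''---and your argument is exactly the easy derivation intended.
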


\begin{prop} \label{fpropB} Suppose $s\in \Zbb_{> n/2}$, $f\in C^s(\Rbb)$, $f(0)=0$, and
$g\in C^{s+1}(\Rbb)$. Then
\gath{propB1}{
\norm{\del{t}^\ell f(u)}_{\Hc^{0,s-\ell}(\Tbb^n)} \leq C\bigl(\norm{u}_{\Ec^s(\Tbb^n)}\bigr)\norm{u}_{\Ec^s(\Tbb^n)},\\
\norm{\del{t}^\ell f(u)-\del{t}^\ell f(v)}_{\Hc^{0,s-\ell}(\Tbb^n)}
\leq C\bigl(\norm{u}_{\Ec^s(\Tbb^n)},\norm{v}_{\Ec^s(\Tbb^n)}\bigr)\norm{u-v}_{\Ec^s(\Tbb^n)}
\intertext{and}
\norm{\del{t}^\ell f(v)}_{H^{s-\ell}(\Omega)} \leq C\bigl(\norm{u}_{E^s(\Omega)}\bigr)\norm{u}_{E^s(\Omega)}
}
for all $u,v \in \Xc_T^s(\Tbb^n)$, $w \in Y_T^s(\Omega)$ and  $0\leq \ell \leq s$.
\end{prop}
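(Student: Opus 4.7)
The plan is to reduce the statement to the standard spatial Moser-type estimates (Theorem \ref{Moser}) applied separately on $\Omega$ and $\Omega^c$, by first handling the time derivatives via Faà di Bruno's formula. Concretely, I would expand
\begin{equation*}
\del{t}^\ell f(u) = \sum_{k=1}^\ell \sum_{\substack{\alpha_1+\cdots+\alpha_k=\ell \\ \alpha_j\geq 1}} c_{k,\alpha}\, f^{(k)}(u)\prod_{j=1}^k \del{t}^{\alpha_j} u,
\end{equation*}
and then estimate each term on the right in $\Hc^{0,s-\ell}(\Tbb^n)$ separately.

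For a fixed term, the product of time derivatives is controlled by iterated application of Theorem \ref{calcpropB}(ii): since $\del{t}^{\alpha_j} u \in \Hc^{0,s-\alpha_j}(\Tbb^n)$ with norm at most $\norm{u}_{\Ec^s(\Tbb^n)}$, and since $s>n/2$ makes the requirement $s_1+s_2-n/2 > s_3$ strict at every pairwise combination, the product $\prod_j \del{t}^{\alpha_j} u$ lies in $\Hc^{0,s-\ell}(\Tbb^n)$ with norm $\lesssim \norm{u}_{\Ec^s(\Tbb^n)}^k$. The prefactor $f^{(k)}(u)$ is handled by splitting off the constant $f^{(k)}(0)$, applying Sobolev (Theorem \ref{Sobolev}) on $\Omega$ and $\Omega^c$ separately to bound $\norm{u(t,\cdot)}_{L^\infty}$ by the $\Hc^{0,s}$-norm, and invoking Moser (Theorem \ref{Moser}) on each side to conclude
\begin{equation*}
\norm{f^{(k)}(u(t))}_{\Hc^{0,s}(\Tbb^n)} \leq C\bigl(\norm{u(t)}_{\Hc^{0,s}(\Tbb^n)}\bigr),
\end{equation*}
where the global $L^2(\Tbb^n)$ requirement in $\Hc^{0,s}$ comes for free since $L^2(\Omega) \oplus L^2(\Omega^c) = L^2(\Tbb^n)$. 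One final application of Theorem \ref{calcpropB}(ii) with $s_1 = s$ and $s_2 = s_3 = s-\ell$ multiplies the two factors, and summing over $k$ and $\alpha$ followed by $\sup_{t\in [0,T)}$ gives the first inequality. The same argument run on $\Omega$ alone (with $H^{\cdot}(\Omega)$ in place of $\Hc^{0,\cdot}(\Tbb^n)$ and no intersection structure) yields the third inequality.

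For the difference inequality I would exploit the telescoping identity
\begin{equation*}
g(u)-g(v) = (u-v)\int_0^1 g'(\tau u + (1-\tau) v)\, d\tau,
\end{equation*}
apply $\del{t}^\ell$ inside the integral, and run the same Faà di Bruno / Theorem \ref{calcpropB}(ii) / Theorem \ref{Moser} recipe uniformly in $\tau \in [0,1]$. The stronger hypothesis $g \in C^{s+1}$ enters precisely here: after the first time derivative one is composing with $g' \in C^s$, which is exactly what the spatial Moser estimate demands, while the explicit factor $u-v$ produces the desired difference norm on the right.

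The main obstacle I foresee is the combinatorial bookkeeping required to verify that the smoothness indices balance in the iterated use of Theorem \ref{calcpropB}(ii) across all admissible multi-indices $(\alpha_1,\ldots,\alpha_k)$. What makes this routine rather than delicate is that the hypothesis $s>n/2$ is strict, so every pairwise constraint $s_1+s_2-n/2 > s_3$ has positive slack and no borderline case ever arises; beyond this accounting I do not anticipate any serious conceptual difficulty.
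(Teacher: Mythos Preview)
Your overall strategy is sound, but there is a genuine gap in the step where you control the prefactor $f^{(k)}(u)$. You claim that Moser's estimate (Theorem \ref{Moser}) yields
\[
\norm{f^{(k)}(u(t))}_{\Hc^{0,s}(\Tbb^n)} \leq C\bigl(\norm{u(t)}_{\Hc^{0,s}(\Tbb^n)}\bigr),
\]
but Theorem \ref{Moser} applied to $h:=f^{(k)}-f^{(k)}(0)$ at regularity level $s$ requires $h\in C^{s}$, i.e.\ $f\in C^{s+k}$, whereas the hypothesis is only $f\in C^{s}$, so $f^{(k)}\in C^{s-k}$. Moser therefore only gives $f^{(k)}(u)\in\Hc^{0,s-k}$, and the subsequent application of Theorem \ref{calcpropB}(ii) with $s_1=s$, $s_2=s_3=s-\ell$ is not available. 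If instead you try $s_1=s-k$, the required inequality $(s-k)+(s-\ell)-n/2>s-\ell$ becomes $s-k>n/2$, which fails for large $k$ (e.g.\ $k=\ell=s$). A concrete failure: $n=3$, $s=2$, $\ell=1$ already needs $f'(u)\in H^{2}$ and hence $f\in C^{3}$, which is not assumed.

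The fix is not more bookkeeping but a different placement of the composition factor. Rather than applying Fa\`a di Bruno in time only and then trying to multiply $f^{(k)}(u)$ against the product, expand the full spacetime derivative $D^{\beta}\del{t}^{\ell}f(u)$ for $|\beta|\le s-\ell$. Each term is then of the form
\[
f^{(p)}(u)\prod_{j=1}^{p}\partial^{\gamma^{(j)}}u,\qquad \sum_{j}|\gamma^{(j)}|=\ell+|\beta|\le s,\quad |\gamma^{(j)}|\ge 1,\quad p\le s,
\]
with $\partial^{\gamma^{(j)}}u\in H^{\,s-|\gamma^{(j)}|}(\Omega)$ (and likewise on $\Omega^{c}$). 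Now $f^{(p)}$ is merely continuous and $u\in L^{\infty}$, so $f^{(p)}(u)\in L^{\infty}$; the remaining product lands in $H^{\,s-\sum_j|\gamma^{(j)}|}\subset L^{2}$ by exactly the iterated use of Theorem \ref{calcpropB} you describe, and $L^{\infty}\cdot L^{2}\subset L^{2}$ finishes the job. The difference estimate is handled the same way, with the telescoping identity supplying the factor $u-v$ and $g\in C^{s+1}$ ensuring $g^{(p+1)}\in C^{0}$ for $p\le s$. (The paper does not spell out a proof here, referring instead to Appendix~A of \cite{AnderssonOliynyk:2014}; the argument above is the standard one.)
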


\bibliographystyle{amsplain}
\bibliography{refs}

\providecommand{\bysame}{\leavevmode\hbox to3em{\hrulefill}\thinspace}
\providecommand{\MR}{\relax\ifhmode\unskip\space\fi MR }
\providecommand{\MRhref}[2]{%
  \href{http://www.ams.org/mathscinet-getitem?mr=#1}{#2}
}
\providecommand{\href}[2]{#2}
\begin{thebibliography}{10}

\bibitem{AdamsFournier:2003}
R.A. Adams and J.~Fournier, \emph{Sobolev spaces}, $2^{\text{nd}}$ ed.,
  Academic Press, 2003.

\bibitem{AnderssonOliynyk:2014}
L.~Andersson and T.A. Oliynyk, \emph{A transmission problem for quasi-linear
  wave equations}, J. Differential Equations \textbf{256} (2014), 2023--2078.

\bibitem{Andersson_et_al:2008}
L.~Andersson, B.G. Schmidt, and R.~Beig, \emph{Static self-gravitating elastic
  bodies in {E}instein gravity}, Commun. Pure Appl. Math. \textbf{61} (2008),
  988--1023.

\bibitem{Andersson_et_al:2010}
\bysame, \emph{Rotating elastic bodies in {E}instein gravity}, Commun. Pure
  Appl. Math. \textbf{63} (2010), 559--589.

\bibitem{Andersson_et_al:2011}
L.~Andersson, B.G. Schmidt, and T.A. Oliynyk, \emph{Dynamical elastic bodies in
  {N}ewtonian gravity}, Class. Quantum Grav. \textbf{28} (2011), 235006 (35pp).

\bibitem{BeigSchmidt:2003}
R.~Beig and B.G. Schmidt, \emph{Relativistic elasticity}, Class.Quant.Grav.
  \textbf{20} (2003), 889--904.

\bibitem{BeigWernig:2007}
R.~Beig and M.~Wernig-Pichler, \emph{On the motion of a compact elastic body},
  Commun. Math. Phys. \textbf{271} (2007), 455--465.

\bibitem{ChoquetBruhat:2009}
Y.~Choquet-Bruhat, \emph{General relativity and the {E}instein equations},
  Oxford University Press, 2009.

\bibitem{ChoquetFriedrich:2006}
Y.~Choquet-Bruhat and H.~Friedrich, \emph{Motion of isolated bodies}, Class.
  Quantum Grav. 23 \textbf{23} (2006), 5941--5949.

\bibitem{Coutand_et_al:2013}
D.~Coutand, J.~Hole, and S.~Shkoller, \emph{Well-posedness of the free-boundary
  compressible 3-d {E}uler equations with surface tension and the zero surface
  tension limit}, SIAM Journal on Mathematical Analysis \textbf{45} (2013),
  3690--3767.

\bibitem{CoutandShkoller:2007}
D.~Coutand and S.~Shkoller, \emph{Well-posedness of the free-surface
  incompressible {E}uler equations with or without surface tension}, J. Amer.
  Math. Soc. \textbf{20} (2007), 829--930.

\bibitem{CoutandShkoller:2012}
\bysame, \emph{Well-posedness in smooth function spaces for the moving-boundary
  three-dimensional compressible euler equations in physical vacuum}, Arch.
  Rational Mech. Anal. \textbf{206} (2012), 515--616 (English).

\bibitem{ChoquetBruhat:1952}
Y.~Four\`{e}s(Choquet)-Bruhat, \emph{Th\'{e}or\`{e}me d'existence pour certains
  syst\`{e}mes d'\'{e}quations aux d\'{e}riv\'{e}es partielles non
  lin\'{e}aires}, Acta Mathematica \textbf{88} (1952), 141--225 (French).

\bibitem{Friedman:1976}
A.~Friedman, \emph{Partial differential equations}, Krieger Publishing Company,
  1976.

\bibitem{KindEhlers:1993}
S.~Kind and J.~Ehlers, \emph{Initial boundary value problem for the spherically
  symmetric {E}instein equations for a perfect fluid}, Class. Quantum Grav.
  \textbf{18} (1993), 2123--2136.

\bibitem{Koch:1993}
H.~Koch, \emph{Mixed problems for fully nonlinear hyperbolic equations}, Math.
  Z. \textbf{214} (1993), 9--42.

\bibitem{Lindblad:2005b}
H.~Lindblad, \emph{Well posedness for the motion a compressible liquid with
  free surface boundary}, Commun. Math. Phys. \textbf{260} (2005), 319--392.

\bibitem{Lindblad:2005a}
\bysame, \emph{Well-posedness for the motion of an incompressible liquid with
  free surface boundary}, Ann. of Math. \textbf{162} (2005), 109–194.

\bibitem{LindbladNordgren:2009}
H.~Lindblad and K.H. Nordgren, \emph{A priori estimates for the motion of a
  selfgravitating incompressible liquid with free surface boundary}, JHDE
  \textbf{6} (2009), 407--432.

\bibitem{Lottermoser:1992}
M.~Lottermoser, \emph{A convergent post-{N}ewtonian approximation for the
  constraint equations in general relativity}, Annales de l'institut Henri
  Poincar\'{e} (A) Physique th\'{e}orique \textbf{57} (1992), 279--317.

\bibitem{Oliynyk:CMP_2007}
T.~A. Oliynyk, \emph{The {N}ewtonian limit for perfect fluids}, Commun. Math.
  Phys. \textbf{276} (2007), 131--188.

\bibitem{Oliynyk:CQG_2012}
T.A. Oliynyk, \emph{On the existence of solutions to the relativistic {E}uler
  equations in 2 spacetime dimensions with a vacuum boundary}, Class. Quantum
  Grav. \textbf{29} (2012), 155013.

\bibitem{Rendall:1992}
A.D. Rendall, \emph{The initial value problem for a class of general
  relativistic fluid bodies}, J. Math. Phys. \textbf{33} (1992), 1047--1053.

\bibitem{ShibataNakamura:1989}
Y.~Shibata and G.~Nakamura, \emph{On a local existence theorem of {N}eumann
  problem for some quasilinear hyperbolic systems of 2nd order}, Math. Z.
  \textbf{202} (1989), 1--64.

\bibitem{TaylorIII:1996}
M.E. Taylor, \emph{Partial differential equations iii: nonlinear equations},
  Springer, 1996.

\bibitem{Trakhinin:2009}
Y.~Trakhinin, \emph{Local existence for the free boundary problem for
  nonrelativistic and relativistic compressible {E}uler equations with a vacuum
  boundary condition}, Comm. Pure Appl. Math. \textbf{62} (2009), 1151--1594.

\bibitem{vanelst_et_al:2000PhRvD..62j4023V}
H.~{van Elst}, G.~F.~R. {Ellis}, and B.~G. {Schmidt}, \emph{{Propagation of
  jump discontinuities in relativistic cosmology}}, Phys. Rev. D \textbf{62}
  (2000), no.~10, 104023.

\bibitem{Wernig:2006}
M.~Wernig-Pichler, \emph{Relativistic elastodynamics}, Ph.D. thesis,
  Universit\"{a}t Wien, 1996, eprint: arXiv:gr-qc/0605025.

\end{thebibliography}

\end{document}